\spnewtheorem{clm}{Claim}{\bfseries}{\rmfamily}
\newcommand{\myparagraph}[1]{\smallskip\noindent\textbf{\boldmath #1}}
\newenvironment{claimproof}{\noindent{\itshape Claim Proof.}}{~\hfill
$\blacksquare$\smallskip}
\definecolor{lightcyan}{rgb}{0.88,1,1}
\definecolor{antiquewhite}{rgb}{0.98, 0.92, 0.84}
\DeclareMathOperator\indeg{indeg}
\DeclareMathOperator\outdeg{outdeg}
\newcounter{casecounter}
\newcounter{subcasecounter}
\newcounter{subsubcasecounter}
\newcommand{\ccase}[2][]{%
	\stepcounter{casecounter}%
	\setcounter{subcasecounter}{0}%
	\protected@write \@auxout {}{\string \newlabel {#2}{{#1\thecasecounter}{\thepage}{#1\thecasecounter}{#2}{}} }%
	\hypertarget{#2}{\noindent\textbf{Case #1\thecasecounter.}}
}
\newcommand{\subcase}[2][]{%
	\stepcounter{subcasecounter}%
	\setcounter{subsubcasecounter}{0}%
	\protected@write \@auxout {}{\string \newlabel {#2}{{#1\thecasecounter.\thesubcasecounter}{\thepage}{#1\thecasecounter.\thesubcasecounter}{#2}{}} }%
	\hypertarget{#2}{\noindent\textbf{Case #1\thecasecounter.\thesubcasecounter.}}
}
\newcommand{\subsubcase}[2][]{%
	\stepcounter{subsubcasecounter}%
	\protected@write \@auxout {}{\string \newlabel {#2}{{#1\thecasecounter.\thesubcasecounter.\thesubsubcasecounter}{\thepage}{#1\thecasecounter.\thesubcasecounter.\thesubsubcasecounter}{#2}{}} }%
	\hypertarget{#2}{\noindent\textbf{Case #1\thecasecounter.\thesubcasecounter.\thesubsubcasecounter.}}
}
\newcommand{\skel}{\mathrm{skel}\xspace}
\newcommand{\pisp}{independent-parallel SP-graph\xspace}
\newcommand{\pisps}{independent-parallel SP-graphs\xspace}
\newcommand{\md}{\mathrm{mid}\xspace}
\begin{document}
	\title{Spirality and Rectilinear Planarity Testing of Independent-Parallel SP-Graphs\thanks{This work is partially supported by: $(i)$ MIUR, grant 20174LF3T8 ``AHeAD: efficient Algorithms for HArnessing networked Data'', $(ii)$ Dipartimento di Ingegneria - Universit\`a degli Studi di Perugia, grants RICBA19FM: ``Modelli, algoritmi e sistemi per la visualizzazione di grafi e reti'' and RICBA20EDG: ``Algoritmi e modelli per la rappresentazione visuale di reti''.}}
	\author{Walter Didimo\inst{1},
		Michael Kaufmann\inst{2},
		Giuseppe Liotta\inst{1},
		Giacomo Ortali\inst{1}
	}
	
	\date{}
	
	\institute{
		Universit\`a degli Studi di Perugia, Italy\\
		\email {\{walter.didimo,giuseppe.liotta\}@unipg.it, giacomo.ortali@studenti.unipg.it}
		\and
		University of T\"ubingen, Germany\\
		\email {mk@informatik.uni-tuebingen.de}
	}

	\maketitle
	
	%
\begin{abstract}
	
	We study the long-standing open problem of efficiently testing rectilinear planarity of series-parallel graphs (SP-graphs) in the variable embedding setting. A key ingredient behind the design of a linear-time testing algorithm for SP-graphs of vertex-degree at most three is that one can restrict the attention to a constant number of ``rectilinear shapes'' for each series or parallel component. To formally describe these shapes the notion of spirality can be used.
	This key ingredient no longer holds for SP-graphs with vertices of degree four, as we prove a logarithmic lower bound on the spirality of their components. 
	The bound holds even for the independent-parallel SP-graphs, in which no two parallel components share a pole. Nonetheless, by studying the spirality properties of the independent-parallel SP-graphs, we are able to design
	a linear-time rectilinear planarity testing algorithm for this graph family. 




\keywords{Orthogonal drawings  \and Variable embedding \and Rectilinear planarity testing \and Series-parallel graphs.}
\end{abstract}

\section{Introduction}\label{se:intro}
Rectilinear planarity testing asks whether a planar 4-graph (i.e., with vertex-degree at most four) admits a planar orthogonal drawing without edge bends. It is a classical subject of study in graph drawing, partly for its theoretical beauty and partly because it is at the heart of the algorithms that compute bend-minimum orthogonal drawings, which find applications in several domains (see, e.g.~\cite{DBLP:books/ph/BattistaETT99,dl-gvdm-07,DBLP:reference/crc/DuncanG13,DBLP:books/sp/Juenger04,DBLP:conf/dagstuhl/1999dg,DBLP:books/ws/NishizekiR04}). Rectilinear planarity testing is NP-complete~\cite{DBLP:journals/siamcomp/GargT01}, it belongs to the XP-class when parameterized by treewidth~\cite{DBLP:conf/gd/GiacomoLM19}, and it is FPT tractable when parameterized by the number of degree-4 vertices~\cite{DBLP:conf/isaac/DidimoL98}.  Polynomial-time solutions exist for restricted versions of the problem. Namely, if the algorithm must preserve a given planar embedding, rectilinear planarity testing  can be solved in subquadratic time for general graphs~\cite{DBLP:journals/jgaa/CornelsenK12,DBLP:conf/gd/GargT96a}, and in linear time for  planar 3-graphs~\cite{DBLP:journals/jgaa/RahmanNN03} and for biconnected series-parallel graphs (SP-graphs for short)~\cite{DBLP:conf/gd/Didimo0LO20}. When the planar embedding is not fixed, linear-time solutions exist for (families of) planar 3-graphs~\cite{DBLP:conf/soda/DidimoLOP20,DBLP:conf/cocoon/Hasan019,DBLP:journals/ieicet/RahmanEN05,DBLP:journals/siamdm/ZhouN08} and for outerplanar graphs~\cite{DBLP:conf/gd/Frati20}.
A polynomial-time solution for SP-graphs has been known for a long time~\cite{DBLP:journals/siamcomp/BattistaLV98},  but establishing whether there is a linear-time algorithm for this graph family remains a long-standing open problem~\cite{DBLP:conf/gd/BrandenburgEGKLM03}; to date, the most efficient algorithm for $n$-vertex SP-graphs has complexity $O(n^3 \log n)$~\cite{DBLP:conf/gd/GiacomoLM19}.

This paper sheds new light on this long-standing open problem. We study it along the lines that led to linear-time testing algorithms for degree-3 SP-graphs~\cite{DBLP:journals/siamdm/ZhouN08} and for general planar 3-graphs~\cite{DBLP:conf/soda/DidimoLOP20}.
We highlight some of the difficulties that stem from the degree-4 vertices and show how to overcome them to design a linear-time algorithm for a class of degree-4 SP-graphs that properly includes all biconnected degree-3 SP-graphs. To better describe our contribution, we briefly recall some fundamental aspects of these previous approaches.

In a nutshell, the linear-time algorithms of~\cite{DBLP:conf/soda/DidimoLOP20,DBLP:journals/siamdm/ZhouN08} are based on recursive approaches that, at each step determine if a (series or parallel) component of the graph is rectilinear planar by suitably combining rectilinear representations of its sub-components. For both algorithms, a key ingredient to achieve linear-time complexity is that it is enough to consider a constant number of rectilinear planar representations at each composition step. Another key ingredient is that the ``shapes'' of these representations can be succinctly described in $O(1)$. In~\cite{DBLP:conf/soda/DidimoLOP20} the shape is described by using the concept of \emph{spirality}, a number that describes how much a rectilinear planar representation is ``rolled up''.  Roughly, the spirality of a representation is the number of right minus left turns in any oriented path between the poles of its corresponding component
(see Section~\ref{se:preli} and Fig.~\ref{fi:intro} for an example). It is easy to see that also the shapes considered in~\cite{DBLP:journals/siamdm/ZhouN08} can be described in terms of spirality.
Hence, the possible representations for each component are succinctly described by a set of spirality values of constant~size.

\begin{figure}[tb]
	\centering
	\subfigure[]{
		\includegraphics[height=0.205\columnwidth,page=2]{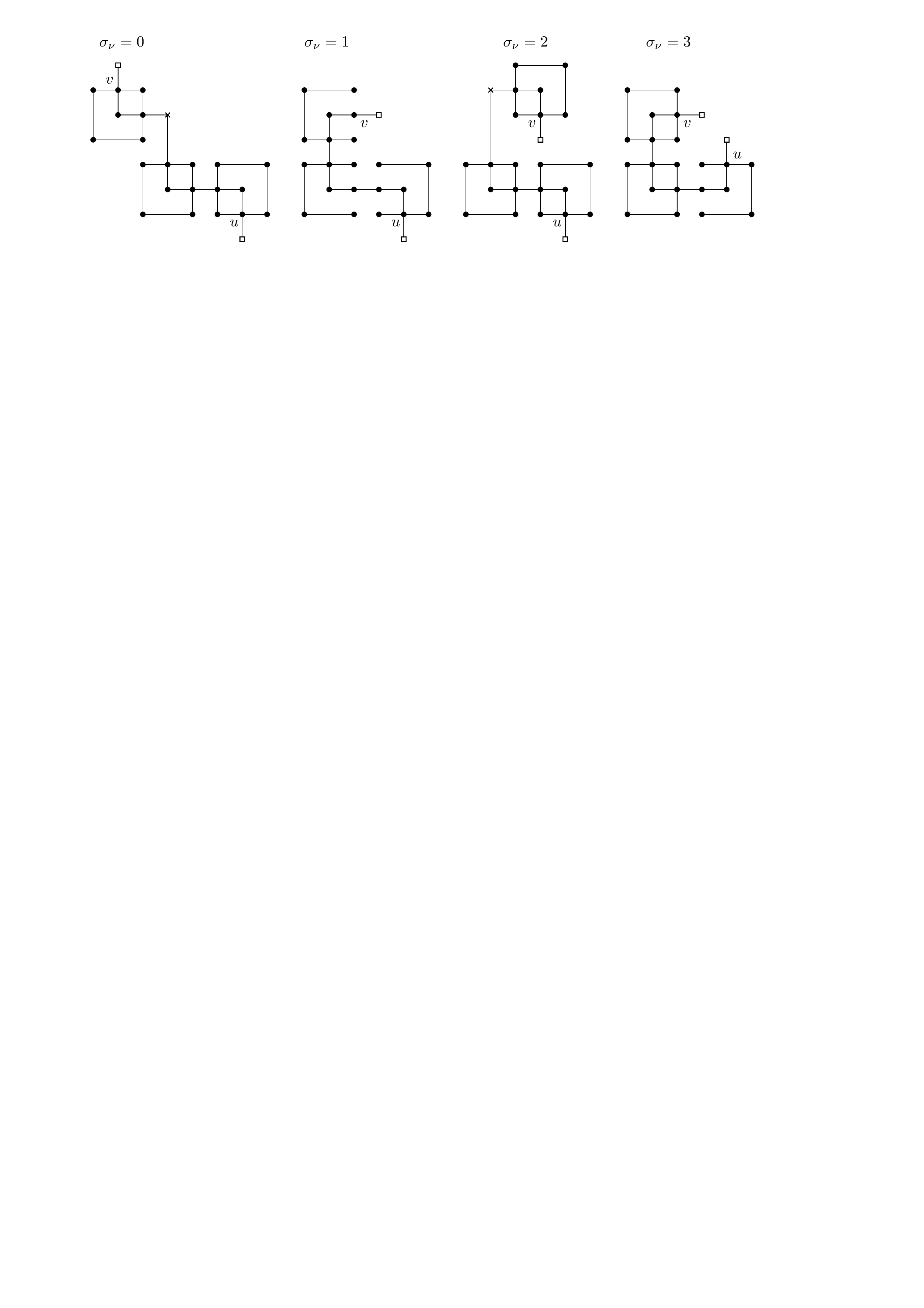}
		\label{fi:intro-a}
	}
	\hfill
	\subfigure[]{
		\includegraphics[height=0.205\columnwidth,page=1]{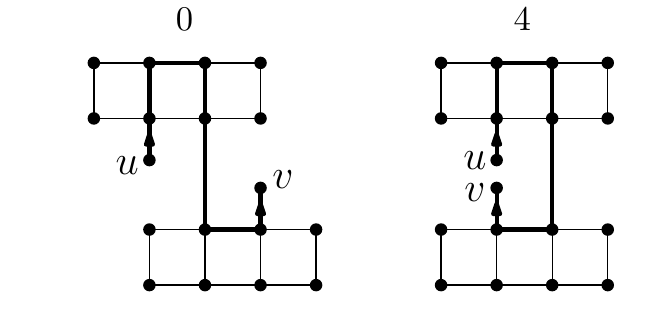}
		\label{fi:intro-b}
	}
	\caption{Two components that are: (a) rectilinear planar for spiralities 0 and 2, but not 1 (which requires a bend, shown as a cross); (b) rectilinear planar only for spiralities 0 and 4. In bold, an arbitrary path from the pole $u$ to the pole $v$.}
	\label{fi:intro}
\end{figure}

A first difficulty in extending the above approaches to degree-4 SP-graphs is that we loose one of the key ingredients: As we show in Section~\ref{sse:lowerbound}, there exist $n$-vertex SP-graphs whose rectilinear planar representations require components with $\Omega(\log n)$ spirality. For these instances a testing algorithm may need to take into account $\Omega(\log n)$ rectilinear planar representations per component. To complicate matters even further, it is not obvious how to use the spirality to construct a succinct description of these $\Omega(\log n)$ representations. Consider, for example, the component of Fig.~\ref{fi:intro-a}: It is rectilinear planar for spirality 0, it is not rectilinear planar for spirality 1, but it becomes rectilinear planar again when the spirality is 2. As another example, the component of Fig.~\ref{fi:intro-b} is rectilinear planar for spirality 0 and 4, but not for any intermediate value of spirality. The absence of regularity is an obstacle to the design of a succinct description based on whether a component is rectilinear planar for consecutive spirality values.

We study a class of 4-graphs called \emph{independent-parallel SP-graphs}, which are such that no two parallel components share a pole; see Fig.~\ref{fi:preli-g}. The component in Fig.~\ref{fi:intro-a} and the graphs used to prove the $\Omega(\log n)$ spirality lower bound (see Section~\ref{sse:lowerbound}) are independent-parallel.
By carefully analyzing the spirality properties of \pisps (see Section~\ref{sse:intervals}), we can overcome the previously described difficulties and design a linear-time rectilinear planarity testing algorithm for this graph family (see Section~\ref{se:testing}). The algorithm uses a set of composition techniques to compute in constant time a succinct description of the rectilinear representations of each component.
Future research directions are discussed in Section~\ref{se:final}.
For space reasons, some proofs are~in~the~appendix.

\section{Preliminaries}\label{se:preli}
\myparagraph{Orthogonal drawings and representations.} A \emph{planar orthogonal drawing}~$\Gamma$ of a planar graph $G$ is a crossing-free drawing that maps each vertex of $G$ to a distinct point of the plane and each edge of $G$ to a sequence of horizontal and vertical segments between its end-points~\cite{DBLP:books/ph/BattistaETT99,DBLP:reference/crc/DuncanG13,DBLP:books/ws/NishizekiR04}. A graph is \emph{rectilinear planar} if it admits a planar orthogonal drawing~without~bends. An \emph{orthogonal representation}~$H$ describes the shape of a class of orthogonal drawings in terms of sequences of bends along the edges and angles at the vertices. A drawing $\Gamma$ of~$H$ can be computed in linear time~\cite{DBLP:journals/siamcomp/Tamassia87}. If~$H$ has no bend, it is a \emph{rectilinear representation}.

\myparagraph{SP-graphs and SPQ-trees.} Let $G$ be a biconnected graph. The \emph{SPQR-tree}~$T$ of $G$ describes the decomposition of $G$ into its triconnected components, and can be computed in linear time~\cite{DBLP:books/ph/BattistaETT99,DBLP:conf/gd/GutwengerM00,DBLP:journals/siamcomp/HopcroftT73}. If every triconnected component of~$G$ is not a triconnected graph, $G$ is a \emph{series-parallel graph}, or \emph{SP-graph} for short. In this case $T$ is simply called SPQ-tree and contains three types of nodes: \emph{S-nodes}, \emph{P-nodes}, and \emph{Q-nodes}. The degree-1 nodes of $T$ are Q-nodes, each corresponding to a distinct edge of $G$.
If $\nu$ is an S-node (resp. a P-node) it represents a series-component (resp. parallel-component), denoted as $\skel(\nu)$ and called the \emph{skeleton} of $\nu$.
If $\nu$ is an S-node, $\skel(\nu)$ is a simple cycle of length at least three; if $\nu$ is a P-node, $\skel(\nu)$ is a bundle of at least three multiple edges. Any two S-nodes (resp. P-nodes) are never adjacent in~$T$.
A \emph{real edge} (resp. \emph{virtual edge}) in $\skel(\nu)$ corresponds to a Q-node (resp. an S- or a P-node) adjacent to $\nu$ in~$T$.

\myparagraph{SPQ$^*$-trees}. Testing whether a simple cycle is rectilinear planar is trivial (if and only if it has at least four vertices). Hence, we shall assume that $G$ is a biconnected SP-graph different from a simple cycle and we use a variant of the SPQ-tree called \emph{SPQ$^*$-tree} (refer to Fig.~\ref{fi:preli-graph}).
In an SPQ$^*$-tree, each degree-1 node of $T$ is a \emph{Q$^*$-node}, and represents a maximal chain of edges of $G$ (possibly a single edge) starting and ending at vertices of degree larger than two and passing through a sequence of degree-2 vertices only (possibly none). If~$\nu$ is an S- or a P-node, an edge of $\skel(\nu)$ corresponding to a Q$^*$-node $\mu$ is virtual if $\mu$ is a chain of at least two edges, else it is a real edge.
%
\begin{figure}[tb]
	\centering
	\subfigure[$G$]{
		\includegraphics[height=0.3\columnwidth,page=1]{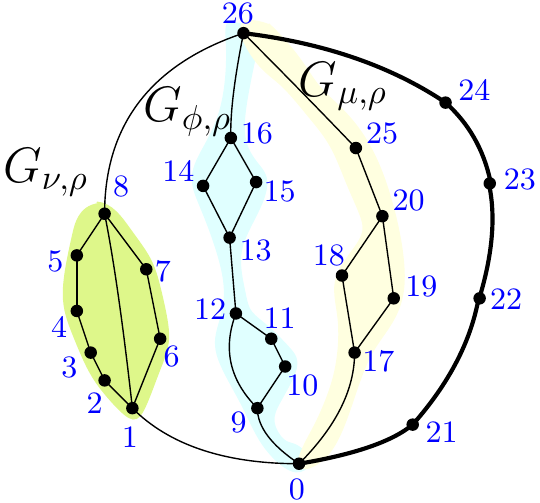}
		\label{fi:preli-g}
	}
	\hfill
	\subfigure[$H$]{
		\includegraphics[height=0.3\columnwidth,page=2]{preli-graph.pdf}
		\label{fi:preli-h}
	}
	\hfill
	\subfigure[$T_\rho$]{
		\includegraphics[height=0.4\columnwidth,page=3]{preli-graph.pdf}
		\label{fi:preli-sprq}
	}
	\caption{(a) An (independent-parallel) SP-graph $G$. (b)  A rectilinear representation $H$ of $G$. (c) The SPQ$^*$-tree $T_\rho$ of $G$, where $\rho$ represents the thick chain; Q$^*$-nodes are small squares; the left-to-right order of the children of each P-node reflects the embedding of~$H$. The components of the nodes $\nu$, $\mu$, $\phi$ are shown, as well as their skeletons: virtual edges are dashed and the reference is thicker.
}
	\label{fi:preli-graph}
\end{figure}
%
For any given Q$^*$-node $\rho$ of $T$, denote by $T_\rho$ the tree $T$ rooted at $\rho$.
The chain of edges represented by $\rho$ is the \emph{reference chain} of $G$ with respect to $T_\rho$. If $\nu$ is an S- or a P-node distinct from the root child of~$T_\rho$, then $\skel(\nu)$ contains a virtual edge that has a counterpart in the skeleton of its parent; this edge is the \emph{reference edge} of~$\skel(\nu)$. If $\nu$ is the root child, the \emph{reference edge} of $\skel(\nu)$ is the edge corresponding to $\rho$.
For any S- or P-node $\nu$ of $T_\rho$, the end-vertices of the reference edge of $\skel(\nu)$ are the \emph{poles} of $\nu$ and of $\skel(\nu)$. We remark that $\skel(\nu)$ does not change if we change~$\rho$. However, if $\nu$ is an S-node, its poles depend on~$\rho$; namely, if $\rho'$ is a  Q$^*$-node in the subtree of $T_\rho$ rooted at~$\nu$, the poles of~$\nu$ in $T_{\rho'}$ are different from those in $T_\rho$. Conversely, the poles of a P-node stay the same independent of the root of~$T$. For a Q$^*$-node $\nu$ of~$T_\rho$ (including $\rho$), the \emph{poles} of~$\nu$ are the end-vertices of the corresponding chain, and do not change when the root of~$T$ changes.
For any S- or P-node $\nu$ of~$T_\rho$, the \emph{pertinent graph} $G_{\nu,\rho}$ of $\nu$ is the subgraph of $G$ formed by the union of the chains represented by the leaves in the subtree of $T_\rho$ rooted at $\nu$. The \emph{poles} of~$G_{\nu,\rho}$ are the poles of~$\nu$. The \emph{pertinent graph} of a Q$^*$-node $\nu$ (including the root) is the chain represented by $\nu$, and its \emph{poles} are the poles of $\nu$.
Any graph $G_{\nu,\rho}$ is also called a \emph{component} of~$G$ (with respect to~$\rho$). If $\mu$ is a child of $\nu$, we call $G_{\mu,\rho}$ a \emph{child component}~of~$\nu$.
If $H$ is a rectilinear representation of $G$, for any node $\nu$ of $T_\rho$, the restriction $H_{\nu,\rho}$ of $H$ to $G_{\nu,\rho}$ is a \emph{component} of $H$ (with respect to~$\rho$).
%

\myparagraph{Encoding of planar embeddings.} A rooted SPQ$^*$-tree $T_\rho$ of an SP-graph $G$ is used to describe all planar embeddings of $G$ having the reference chain on the external face (in every planar embedding of $G$, all the edges in the chain represented by a Q$^*$-node belong to the same two faces). These embeddings are obtained by permuting in all possible ways the edges of the skeletons of the P-nodes distinct from the reference edges, around the poles.  
Namely, assume given an $st$-numbering of $G$ such that $s$ and $t$ coincide with the poles of~$\rho$.
For each P-node $\nu$ of $T_\rho$, let $u$ and $v$ be its poles where $u$ precedes $v$ in the $st$-numbering. Denote by $e_\nu$ the reference edge of $\skel(\nu)$, by $e_1, \dots, e_h$ the edges of $\skel(\nu)$ distinct from $e_\nu$, and by $\mu_1, \dots, \mu_h$ the children of~$\nu$ corresponding to $e_1, \dots, e_h$. Each permutation of $e_1, \dots, e_h$ defines a class of planar embeddings of $G_{\nu,\rho}$ with $u$ and $v$ on the external face, where the components $G_{\mu_1,\rho}, \dots, G_{\mu_h,\rho}$ are incident to $u$ and $v$ in the order of the permutation. More precisely, if $e_{i_1}, \dots, e_{i_h}$ is one of these permutations $(i_j \in \{1, \dots, h\})$, the clockwise (resp. counterclockwise) sequence of edges incident to $u$ (resp. $v$) in $\skel(\nu)$ is $e_\nu,e_{i_1}, \dots, e_{i_h}$; we say that, according to this permutation, $\mu_{i_1}, \dots, \mu_{i_h}$ and their corresponding components appear in this left-to-right order. 

\myparagraph{Independent-parallel SP-graphs.} Let $G$ be an SP-graph and let $T$ be its SPQ$^*$-tree. We say that $G$ is \emph{independent-parallel} if no two P-nodes of $T$ have a pole in common (see, e.g., Fig.~\ref{fi:preli-g}). Let $\rho$ be a Q$^*$-node of $T$. For a pole $w$ of a node $\nu$ of $T_\rho$, let $\indeg_\nu(w)$ and $\outdeg_\nu(w)$ be the degree of $w$ inside and outside~$G_{\nu,\rho}$, respectively. If $G$ is independent-parallel, each pole $w$ of a P-node~$\nu$ of~$T_\rho$ is such that $\outdeg_\nu(w)=1$; if $\nu$ is an S-node, either $\indeg_\nu(w)=1$ or $\outdeg_\nu(w)=1$. In all cases, $\outdeg_\nu(w)=1$ when $\indeg_\nu(w)>1$.

\myparagraph{Spirality.} Let $G$ be a degree-4 SP-graph and let $H$ be a rectilinear planar representation of~$G$. Let $T_\rho$ be a rooted SPQ$^*$-tree of~$G$, let $H_{\nu,\rho}$ be a component of $H$, and let $\{u,v\}$ be the poles of $\nu$, conventionally ordered according to an $st$-numbering of $G$, where $s$ and $t$ are the poles of $\rho$.
Since we deal with \pisps, $\outdeg_\nu(w)=1$ when $\indeg_\nu(w)>1$. Define the \emph{alias vertex} $w'$ of $w$ as follows: If $\indeg_\nu(w)=1$, then $w'=w$; else $w'$ is a dummy vertex that subdivides the edge incident to $w$ outside $H_{\nu,\rho}$. Let $P^{uv}$ be any simple path from $u$ to $v$ inside $H_{\nu,\rho}$ and let $u'$ (resp. of $v'$) be the alias vertex of $u$ (resp. $v$). The path $S^{u'v'}$ obtained concatenating $(u',u)$, $P^{uv}$, and $(v,v')$ is a \emph{spine} of $H_{\nu,\rho}$.
The \emph{spirality} $\sigma(H_{\nu,\rho})$ of $H_{\nu,\rho}$ in $H$ is the number of right turns minus the number of left turns along $S^{u'v'}$ while moving from $u'$ to~$v'$.
See, e.g., Figs.~\ref{fi:intro}~and~\ref{fi:spirality-relationships}.
Di Battista et al.~\cite{DBLP:journals/siamcomp/BattistaLV98} show that the spirality of $H_{\nu,\rho}$ does not depend on the choice of~$P^{uv}$; also any component of~$H$ can be replaced by another component with the same spirality.
In Fig.~\ref{fi:preli-h}, the spiralities of $H_{\nu,\rho}$, $H_{\mu,\rho}$, and $H_{\phi,\rho}$ are 2, -2, and~0, respectively.
For brevity, we shall denote by $\sigma_\nu$ the spirality of a rectilinear representation of $G_{\nu,\rho}$. We say that \emph{$G_{\nu,\rho}$ admits spirality}~$\sigma_\nu$ or, equivalently, that \emph{$\nu$ admits spirality} $\sigma_\nu$, if there exists a rectilinear planar representation $H_{\nu,\rho}$ with spirality $\sigma_\nu$ in some rectilinear planar representation $H$ of $G$.

\begin{figure}[tb]
	\centering
	\subfigure[$\sigma_\nu=2$]{
		\includegraphics[height=0.26\columnwidth,page=2]{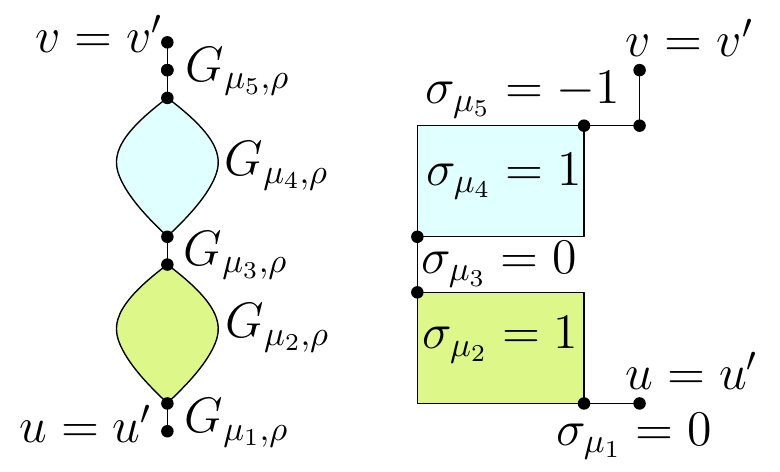}
		\label{fi:spirality-relationships-P3}
	}
	\hfill
	\subfigure[$\sigma_\nu=1$]{
		\includegraphics[height=0.26\columnwidth,page=1]{spirality-relationships.pdf}
		\label{fi:spirality-relationships-S}
	}
	\caption{Schematic illustration of the concept of spirality and of the relationships given by: (a) Lemma~\ref{le:spirality-P-node-3-children} (alias vertices are small squares); and (b) Lemma~\ref{le:spirality-S-node}.}
	\label{fi:spirality-relationships}
\end{figure}

\section{Spirality of Independent-Parallel SP-graphs}\label{se:spirality-pisp}


We first show that the components of a degree-4 SP-graph $G$ with $n$ vertices may
require a spirality that is logarithmic in $n$ (Section~\ref{sse:lowerbound}), even if $G$ is independent-parallel. 
Next, we characterize the spirality values for which the components of \pisps can be rectilinear planar (Section~\ref{sse:intervals}).

\subsection{Spirality Lower Bound}\label{sse:lowerbound}



The proof of the lower bound (Theorem~\ref{th:LowerBound}) uses Lemma~\ref{le:spirality-P-node-3-children},
which relates the spirality of a P-component with three children to the spiralities of its child components; see Fig.~\ref{fi:spirality-relationships-P3} for an illustration.

\begin{lemma}[\cite{DBLP:journals/siamcomp/BattistaLV98}]\label{le:spirality-P-node-3-children}
	Let $\nu$ be a P-node of $T_\rho$ with three children $\mu_l$, $\mu_c$, and $\mu_r$. $G_{\nu,\rho}$ admits spirality $\sigma_\nu$ with $G_{\mu_l,\rho}$, $G_{\mu_c,\rho}$, $G_{\mu_r,\rho}$ in this left-to-right order if and only if there exist three values $\sigma_{\mu_l}$, $\sigma_{\mu_c}$, and $\sigma_{\mu_r}$ such that: (i) $G_{\mu_l,\rho}$, $G_{\mu_c,\rho}$, $G_{\mu_r,\rho}$ admit spirality $\sigma_{\mu_l}$, $\sigma_{\mu_c}$, $\sigma_{\mu_r}$, respectively; and (ii) $\sigma_\nu = \sigma_{\mu_l} - 2 = \sigma_{\mu_c} = \sigma_{\mu_r} + 2$.
\end{lemma}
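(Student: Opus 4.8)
The plan is to derive the arithmetic relations in~(ii) from a purely local turn count at the two poles, and then obtain both directions of the equivalence from this together with the substitution and composition properties of components of rectilinear representations established in~\cite{DBLP:journals/siamcomp/BattistaLV98}. The key preliminary observation concerns the local structure at the poles. Each child of $\nu$ is an S- or a Q$^*$-node (two P-nodes are never adjacent in the tree), so each of $G_{\mu_l,\rho}$, $G_{\mu_c,\rho}$, $G_{\mu_r,\rho}$ is a path between the two poles and contributes exactly one edge at $u$ and one at $v$; hence $\indeg_\nu(u)=\indeg_\nu(v)=3$. Adding the reference edge, which is always present and joins the two poles outside $G_{\nu,\rho}$, gives $\deg(u)=\deg(v)=4$ (and, as a byproduct, $\outdeg_\nu(u)=\outdeg_\nu(v)=1$, so the alias vertices subdivide the reference edge). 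Therefore, in any rectilinear representation the four angles around $u$ and the four around $v$ are all $90^\circ$, and by the embedding convention of Section~\ref{se:preli} the clockwise order around $u$ and the counterclockwise order around $v$ of these edges are both $e_\nu,e_l,e_c,e_r$, where $e_j$ is the skeleton edge of $\mu_j$. From this I extract the three facts I need: a spine that enters $u$ from the alias vertex along $e_\nu$ and leaves through $\mu_c$ continues straight, through $\mu_l$ turns left, and through $\mu_r$ turns right; symmetrically at $v$.

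For the \emph{only if} direction, given a rectilinear planar representation $H$ of $G$ realizing $\sigma_\nu$ with the three children in the stated left-to-right order, I put $\sigma_{\mu_j}:=\sigma(H_{\mu_j,\rho})$; then~(i) holds with $H$ itself as witness. Since $\indeg_{\mu_j}(u)=\indeg_{\mu_j}(v)=1$, a spine of $H_{\mu_j,\rho}$ is a path from $u$ to $v$ inside $G_{\mu_j,\rho}$, and prefixing the segment $(u',u)$ and appending $(v,v')$ makes it a spine of $H_{\nu,\rho}$; evaluating $\sigma(H_{\nu,\rho})$ along the spine through each child, using the three facts above and the fact that this value does not depend on the chosen spine, yields $\sigma_\nu=\sigma_{\mu_c}$, $\sigma_\nu=\sigma_{\mu_l}-2$ and $\sigma_\nu=\sigma_{\mu_r}+2$, i.e.\ exactly~(ii). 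For the \emph{if} direction, I would take representations witnessing~(i), extract from them rectilinear shapes of $G_{\mu_l,\rho}$, $G_{\mu_c,\rho}$, $G_{\mu_r,\rho}$ of spiralities $\sigma_{\mu_l},\sigma_{\mu_c},\sigma_{\mu_r}$, place them between $u$ and $v$ in the order $\mu_l,\mu_c,\mu_r$ with all pole angles $90^\circ$, and add the reference edge; by the composition results of~\cite{DBLP:journals/siamcomp/BattistaLV98}, condition~(ii) is exactly what makes this assembly a crossing-free rectilinear representation $H_{\nu,\rho}$ of $G_{\nu,\rho}$ of well-defined spirality $\sigma_\nu$, and substituting $H_{\nu,\rho}$ into a rectilinear planar representation of $G$ (replacing a component by one of the same spirality) gives a representation witnessing that $\nu$ admits $\sigma_\nu$ with the children in the prescribed order.

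I expect the only delicate point to be the sufficiency half: proving that the three child shapes can be simultaneously packed inside the P-node without crossings and with the spirality read along the three different spines being consistent. This is precisely where~(ii) is used, and it is the step that rests on the composition machinery of~\cite{DBLP:journals/siamcomp/BattistaLV98} rather than on elementary reasoning; the rest of the argument is bookkeeping of right-minus-left turns at $u$ and $v$, made transparent by the fact that, in this degree-4 P-node-with-three-children configuration, all angles at the poles are forced to be $90^\circ$.
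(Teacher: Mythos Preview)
The paper does not prove this lemma; it is quoted from Di~Battista, Liotta, and Vargiu~\cite{DBLP:journals/siamcomp/BattistaLV98} and used as a black box, so there is no in-paper argument to compare your proposal against. Your reconstruction of the necessity direction via local turn-counting at the poles is correct and matches how the result is obtained in the original reference: once one observes that a P-node with three children in a $4$-graph forces $\deg(u)=\deg(v)=4$ (three edges from the children plus at least one outside, hence exactly one outside), all angles at the poles are $90^\circ$, and evaluating $\sigma(H_{\nu,\rho})$ along the spine through each child yields $\sigma_\nu=\sigma_{\mu_c}=\sigma_{\mu_l}-2=\sigma_{\mu_r}+2$. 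Your treatment of sufficiency, deferring the simultaneous packing of the three shapes to the composition machinery of~\cite{DBLP:journals/siamcomp/BattistaLV98}, is also the appropriate move here, and you correctly flag it as the non-elementary step.

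One slip to fix: you write that each $G_{\mu_j,\rho}$ ``is a path between the two poles''. That is false in general; the child components are arbitrary series compositions, not simple paths. What you actually need---and what you correctly use in the rest of the argument---is only that $\indeg_{\mu_j}(u)=\indeg_{\mu_j}(v)=1$. This follows not from the children being S- or Q$^*$-nodes per se, but from the pigeonhole count above: three children contribute at least three edges at $u$, the outside contributes at least one, and $\deg(u)\le 4$ forces exactly one edge per child. Replace the ``is a path'' clause with this counting observation and the proof stands.
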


\begin{restatable}{theorem}{thLowerBound}\label{th:LowerBound}
	For infinitely many integer values of $n$, there exists an $n$-vertex \pisp for which every rectilinear planar representation has a component with spirality $\Omega(\log n)$.
\end{restatable}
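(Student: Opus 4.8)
The plan is to build, for each $k$, an independent-parallel SP-graph $G_k$ recursively so that one designated component is forced to have spirality roughly $2k$, while the total number of vertices is only $O(2^k)$; then setting $n = |V(G_k)| = \Theta(2^k)$ gives $k = \Theta(\log n)$ and the lower bound follows. The key tool is Lemma~\ref{le:spirality-P-node-3-children}: if $\nu$ is a P-node with three children in left-to-right order $\mu_l,\mu_c,\mu_r$, then the spirality of $G_{\nu,\rho}$ equals $\sigma_{\mu_c}$, equals $\sigma_{\mu_l}-2$, and equals $\sigma_{\mu_r}+2$. So a three-child P-node can only realize a given target spirality $\sigma$ if its left child can reach $\sigma+2$, its center child can reach $\sigma$, and its right child can reach $\sigma-2$. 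The idea is to iterate this: make a ``gadget'' $B$ (a small fixed component, for instance the one in Fig.~\ref{fi:intro-b} or a simple cycle, whichever has a provably bounded spirality range) whose set of admissible spiralities is a fixed bounded interval, and then stack three-child P-nodes so that each level is forced two units higher than the previous one.

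Concretely, I would define $G_0$ to be a base gadget with a known admissible spirality interval, say contained in $[-c, c]$ for a small constant $c$, and then define $G_{k}$ to be a P-node $\nu_k$ whose center child is (a copy of) $G_{k-1}$, whose left child is (a copy of) $G_{k-1}$, and whose right child is the base gadget $G_0$; wrapping everything so the construction stays an SP-graph and stays independent-parallel (no two P-nodes sharing a pole — this is why the right child is a "fresh" small gadget rather than reusing a pole). By Lemma~\ref{le:spirality-P-node-3-children}, the spirality $\sigma_{\nu_k}$ of $G_k$ satisfies $\sigma_{\nu_k} = \sigma_{\mu_c} = \sigma_{\text{center copy of }G_{k-1}}$, but also $\sigma_{\nu_k} = \sigma_{\mu_r}+2 \in \sigma(G_0) + 2$, so each level shifts the realizable window by $+2$ relative to the previous one. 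After $k$ levels the admissible spiralities of the top component are forced into an interval centered around $2k + O(1)$; hence every rectilinear planar representation of $G_k$ has $|\sigma_{\nu_k}| = \Omega(k)$. Since each level at most triples (actually roughly doubles, since the right child is a constant-size gadget) the vertex count, $|V(G_k)| = O(2^k)$, giving $k = \Omega(\log n)$ and therefore a component with spirality $\Omega(\log n)$. (One must also embed $G_k$ in some larger biconnected SP-graph by attaching a reference chain $\rho$, and argue that in the rooted tree $T_\rho$ the relevant node really is a P-node with the asserted three children; this is routine bookkeeping.)

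A subtlety I would need to handle carefully is that Lemma~\ref{le:spirality-P-node-3-children} is stated for a P-node with \emph{exactly} three children, and it gives an exact relation $\sigma_\nu = \sigma_{\mu_l}-2=\sigma_{\mu_c}=\sigma_{\mu_r}+2$, not an inequality — so I cannot just say "the spirality is large," I must track the whole admissible \emph{set} of each component through the recursion and show it is a bounded interval whose center marches off to infinity. For this I would prove by induction that the set of spiralities admitted by $G_k$ is $[a_k, b_k]$ with $b_k - a_k$ bounded by an absolute constant and $a_k, b_k = 2k \pm O(1)$. The inductive step uses the Lemma to intersect the shifted admissible sets of the three children: the center child contributes $[a_{k-1},b_{k-1}]$, the left child (shifted) contributes $[a_{k-1}-2, b_{k-1}-2]$, and the right gadget (shifted) contributes $[\,-c+2,\,c+2\,]$ — wait, that last one is bounded and does not grow, which would cap the recursion. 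Hence the real design must be the reverse: the \emph{center} child carries the base gadget while the left (or right) child carries $G_{k-1}$, so that the controlling relation is $\sigma_{\nu_k}=\sigma_{\mu_l}-2$ with $\mu_l$ a copy of $G_{k-1}$; then the admissible interval of $G_k$ is $[a_{k-1}+2, b_{k-1}+2] \cap (\text{bounded set from the center})$, and choosing the center gadget's interval wide enough (e.g.\ a long simple cycle, which admits all spiralities in a large symmetric range) keeps the intersection nonempty while pushing the window up by $2$ each level.

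The main obstacle I expect is precisely this interplay: I must choose the base/center gadgets so that (a) at each recursive level the intersection of the three shifted admissible intervals is nonempty — otherwise $G_k$ is not rectilinear planar at all, which would make the statement vacuous — while (b) the intersection is small enough (in particular bounded away from $0$ for large $k$) that the forced spirality grows linearly in $k$. Balancing "nonempty" against "forced large" is the crux; a long simple cycle as the flexible center component, whose admissible spirality range grows with its length, is the natural way to guarantee (a) without sacrificing (b), at the cost of adding $O(k)$ extra vertices per level, which still leaves $|V(G_k)| = O(k \cdot 2^k) = 2^{O(k)}$ and hence $k = \Omega(\log n)$. After that, verifying that every $G_k$ is genuinely independent-parallel (no shared poles among P-nodes — ensured by always introducing fresh poles for each new P-node and never nesting two P-nodes directly) and that it has maximum degree four is a direct check on the construction.
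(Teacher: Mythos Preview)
Your plan has a fundamental gap: you are trying to push the admissible spirality set of the \emph{outermost} component $G_k$ to a narrow window around $2k$, but this is impossible. The set $\Sigma_{\nu,\rho}$ is always symmetric about $0$ (flip the embedding of the component around its poles), so if you prove $\Sigma(G_k)=[a_k,b_k]$ is an interval, then $a_k=-b_k$ and it is centered at~$0$, not at~$2k$. More concretely, with one recursive child and two ``flexible'' chains, the union over the six left--to--right orderings of the children gives $\Sigma(G_k)\supseteq(\Sigma(G_{k-1})-2)\cup\Sigma(G_{k-1})\cup(\Sigma(G_{k-1})+2)$, so the set only \emph{grows} and always contains~$0$; with one recursive child and two bounded gadgets, every ordering intersects with a bounded set, so $\Sigma(G_k)$ stays uniformly bounded and again nothing is forced large. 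Either way no component is forced to have spirality growing with~$k$.

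The paper's construction runs the idea in the opposite direction. It takes all three children of each P-node to be (series-padded) copies of $G_{k-1}$; then for a fixed ordering $\Sigma(G_k)=(\Sigma(G_{k-1})-2)\cap\Sigma(G_{k-1})\cap(\Sigma(G_{k-1})+2)$, and since the three children are identical the union over orderings is the same intersection. Starting from a base chain $G_0$ with $\Sigma(G_0)=[-(N{+}2),N{+}2]$ one gets $\Sigma(G_k)=[-(N{+}2{-}2k),\,N{+}2{-}2k]$, so after $L=\tfrac{N}{2}+1$ levels $\Sigma(G_L)=\{0\}$. The outer component therefore has spirality~$0$, and Lemma~\ref{le:spirality-P-node-3-children} then cascades \emph{downward}: in any rectilinear representation the leftmost child at each level has spirality two larger than its parent, so the extreme copy of $G_0$ is forced to spirality $2L=N{+}2$. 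Two copies of $G_L$ are glued by short chains into a biconnected graph with $n=O(3^N)$ vertices, giving the $\Omega(\log n)$ bound on an \emph{inner} component. The missing idea in your attempt is precisely this reversal: force the top to spirality~$0$ and read off the large spirality at the leaves, rather than trying to make the top itself large.
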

\begin{proof}
	For any arbitrarily large even integer $N \geq 2$, we construct an \pisp $G$ with $n=O(3^N)$ vertices such that every rectilinear planar representation of $G$ has a component with spirality larger than~$N$.
	Let $L = \frac{N}{2}+1$. For any $k \in \{0, \dots, L\}$, let $G_k$ be the SP-graph inductively defined as follows: $(i)$~$G_0$ is a chain of $N+4$ vertices; $(ii)$~$G_1$ is a parallel of three copies of $G_0$, with coincident poles (Fig.~\ref{fi:LB-G1}); $(iii)$ for $k\geq2$, $G_k$ is a parallel composition of three series, each starting and ending with an edge, and having $G_{k-1}$ in the middle (Fig.~\ref{fi:LB-GK}).
	The graph $G$ is obtained by composing in a cycle two chains $p_1$ and $p_2$, of two edges each, with two copies of $G_L$ (Fig.~\ref{fi:LB-G}). The graph $G_L$ for $N=4$ is in Fig.~\ref{fi:LB-GExample}. About the number $n$ of vertices of $G$, let $n_k$ be the number of vertices of $G_k$. We have $n_0 =N+4$ and $n_k = O(3^kN)$ for $k \leq N$.
	Hence, $n_L = O(3^{\frac{N}{2}} N)$ and, since $N \leq 3^{\frac{N}{2}}$, $n_L = O(3^N)$. It follows that $n=O(3^N)$.
	
	Consider first the rooted SPQ$^*$-tree $T_\rho$ of $G$, where $\rho$ represents $p_1$. All the planar embeddings of $G$ encoded by $T_\rho$ have $p_1$ (and $p_2$) on the external face of $G$, and by symmetry of the construction they are all equivalent. Any rectilinear planar representation $H$ of $G$ with an embedding encoded by $T_\rho$ requires that the restriction of~$H$ to each copy of $G_L$ has spirality zero and, at the same time, the restriction of~$H$ to one of the copies of $G_0$ in $G_L$ has spirality $N+2$. Indeed, due to Lemma~\ref{le:spirality-P-node-3-children}, for each rectilinear planar representation $H_{k}$ of~$G_{k}$, the leftmost (resp. rightmost) child component of~$H_{k}$ has spirality that is two units larger (resp. smaller) than the spirality of~$H_{k}$. Hence, if there existed a rectilinear representation of~$G_L$ with spirality greater (resp. smaller) than zero, it would contain a representation of a copy of $G_0$ with spirality greater than~$N+2$ (resp. less than $-(N+2)$), which is impossible, as the absolute value of spirality of any copy of $G_0$ is at most $N+2$. See  Fig.~\ref{fi:LB-HExample}, where $N=4$.
	
	On the other hand, if we consider the planar embeddings encoded by $T$ when rooted at a Q$^*$-node whose chain $p$ belongs to a copy of $G_L$, the same argument as above applies to the copy of $G_L$ that does not contain $p$; namely, any rectilinear representation of this copy must contain a component with spirality $N+2$.
\end{proof}

\begin{figure}[tb]
	\centering
	\subfigure[]{
		\includegraphics[height=0.23\columnwidth,page=1]{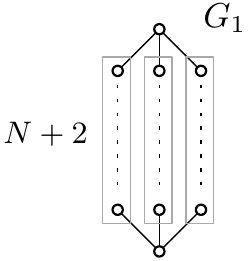}
		\label{fi:LB-G1}
	}
	\hfill
	\subfigure[]{
		\includegraphics[height=0.23\columnwidth,page=3]{LB.pdf}
		\label{fi:LB-GK}
	}
	\hfill
	\subfigure[]{
		\includegraphics[height=0.23\columnwidth,page=4]{LB.pdf}
		\label{fi:LB-G}
	}
	\subfigure[]{
		\includegraphics[height=0.29\columnwidth,page=2]{LB.pdf}
		\label{fi:LB-GExample}
	}
	\hfill
	\subfigure[]{
		\includegraphics[height=0.29\columnwidth,page=2]{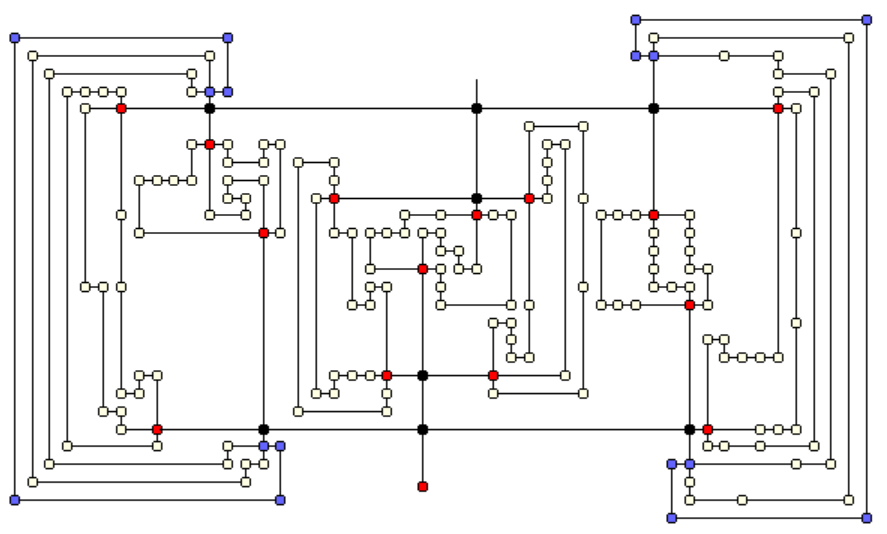}
		\label{fi:LB-HExample}
	}
	\caption{(a)--(c) The graph family of Theorem~\ref{th:LowerBound}; (d) $G_L=G_3$ ($N=4$, $L=\frac{N}{2}+1$); (e) A rectilinear planar representation of $G_L$ (automatically computed by the GDToolkiy library~\cite{DBLP:reference/crc/BattistaD13}); the two $G_0$ components with blue vertices have spirality $N+2=6$ (left) and $-(N+2)=-6$ (right), respectively.}
	\label{fi:LB}
\end{figure}



\subsection{Rectilinear Spirality Sets}\label{sse:intervals}
Let $G$ be a rectilinear planar SP-graph, $T_\rho$ be a rooted SPQ$^*$-tree of $G$, and  $\nu \neq \rho$ be a node of $T_\rho$. The \emph{rectilinear spirality set} $\Sigma_{\nu,\rho}$ of $\nu$ in $T_\rho$ (and of~$G_{\nu,\rho}$) is the set of spirality values for which $G_{\nu,\rho}$ admits a rectilinear planar representation.
We will prove that there is some regularity in the rectilinear spirality sets of \pisps. 
Denote by $\Sigma^+_{\nu,\rho}$ (resp. $\Sigma^-_{\nu,\rho}$) the subset of non-negative (resp. non-positive) values of $\Sigma_{\nu,\rho}$. Clearly, $\Sigma_{\nu,\rho} = \Sigma^+_{\nu,\rho} \cup \Sigma^-_{\nu,\rho}$.
Note that, for any value $\sigma_\nu \in \Sigma_{\nu,\rho}$, we also have that $-\sigma_\nu \in \Sigma_\nu$. Indeed, if $G_{\nu,\rho}$ admits a rectilinear representation with spirality $\sigma_\nu$ for some embedding, by flipping this embedding around the poles of $G_{\nu,\rho}$, we can obtain a rectilinear representation of $G_{\nu,\rho}$ with spirality $-\sigma_\nu$.
Hence, $\sigma_\nu \in \Sigma^+_{\nu,\rho}$ if and only if $-\sigma_\nu \in \Sigma^-_{\nu,\rho}$, and we can restrict the study of the properties of~$\Sigma_{\nu,\rho}$~to~$\Sigma^+_{\nu,\rho}$, which we call the \emph{non-negative rectilinear spirality set} of~$\nu$ in $T_\rho$ (or of $G_{\nu,\rho}$).

The main result of this subsection is Theorem~\ref{th:spirality-sets}, which proves that if $G$ is an \pisp, there is a limited number of possible structures for the sets $\Sigma^+_{\nu,\rho}$.
Let $m$ and $M$ be two non-negative integers such that $m < M$:
$(i)$ $[M]$ is a \emph{trivial interval} and denotes the singleton $\{M\}$;
$(ii)$ $[m,M]^1$ is a \emph{jump-1 interval} and denotes the set of all integers in the interval $[m,M]$. i.e., $\{m, m+1, \dots, M-1, M\}$;
$(iii)$ If $m$ and $M$ have the same parity, $[m,M]^2$ is a \emph{jump-2 interval} and denotes the set of values $\{m, m+2, \dots, M-2, M\}$.

\begin{theorem}\label{th:spirality-sets}
	Let $G$ be a rectilinear planar \pisp and let $G_{\nu,\rho}$ be a component of $G$. The non-negative rectilinear spirality set $\Sigma^+_{\nu,\rho}$ of~$G_{\nu,\rho}$ has one the following six structures: $[0]$, $[1]$, $[1,2]^1$, $[0,M]^1$, $[0,M]^2$, $[1,M]^2$.
\end{theorem}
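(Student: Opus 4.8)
The plan is to prove Theorem~\ref{th:spirality-sets} by structural induction on the SPQ$^*$-tree $T_\rho$, following the recursive decomposition of $G_{\nu,\rho}$ into Q$^*$-, S-, and P-components and tracking how the six admissible structures for $\Sigma^+_{\nu,\rho}$ behave under series and parallel composition. The base case is a Q$^*$-node: a chain of $k \geq 1$ edges can be bent into any spirality from $0$ up to $k-1$ in absolute value (turning at each internal degree-2 vertex), so $\Sigma^+_{\nu,\rho} = [0,k-1]^1$, which is $[0]$ when $k=1$ and a jump-1 interval $[0,M]^1$ otherwise. It remains to show that each of the series and parallel composition rules preserves membership in the list $\{[0],[1],[1,2]^1,[0,M]^1,[0,M]^2,[1,M]^2\}$.

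\textbf{Series composition.} For an S-node $\nu$ with children $\mu_1,\dots,\mu_k$, I would invoke the (forthcoming) analogue of Lemma~\ref{le:spirality-S-node}: the spirality of $G_{\nu,\rho}$ is obtained by summing the spiralities of the child components plus a contribution coming from the angles that the children form at their shared (degree-2 or degree-3) poles — and crucially, in an \pisp, each internal pole of a series has in-degree or out-degree exactly $1$, which bounds how many of these corner contributions are ``free'' to flip. Taking the Minkowski sum of the child spirality sets (each already of one of the six types, by induction) and adding the bounded angle corrections, I would check case-by-case that the result collapses back into one of the six types: the key point is that a sum of jump-2 intervals stays jump-2 unless a $\pm1$ angle correction is available somewhere, in which case it becomes jump-1, and that the minimum of the set is forced to be $0$ or $1$ depending on whether the two extreme corners can simultaneously be straightened.

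\textbf{Parallel composition.} For a P-node $\nu$, I would use Lemma~\ref{le:spirality-P-node-3-children} and its natural generalization: if the children appear left-to-right as $\mu_1,\dots,\mu_h$ with $h \in \{2,3\}$ (at most three by the degree-4 and independent-parallel constraints — the poles have out-degree $1$, so only two child-edges plus the reference edge fit around each pole when $h=3$, and $h \le 3$ overall), then $\sigma_\nu$ is simultaneously realizable iff $\sigma_\nu + 2(i - j) \in \Sigma^+_{\mu_i,\rho}\cup\Sigma^-_{\mu_i,\rho}$ for suitable offsets; equivalently $\Sigma_{\nu,\rho} = \bigcap_i (\Sigma_{\mu_i,\rho} - \text{offset}_i)$. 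A shifted intersection of symmetric sets of the six types is again symmetric, and one checks the non-negative part lands among the six — here the jump-2 structures arise precisely because the two outer offsets differ by $4$, exactly the mechanism behind Fig.~\ref{fi:intro-b} (spiralities $0$ and $4$ only). The parity bookkeeping (same parity of $m$ and $M$ for jump-2 intervals) has to be threaded through both the sums and the intersections, and one must separately argue that the resulting set is always nonempty since $G$ is assumed rectilinear planar.

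\textbf{The main obstacle} I anticipate is the parallel case with three children where each child set is a jump-1 interval: naively intersecting three shifted copies of $[0,M]^1$ can produce an interval whose minimum is $2$ or whose structure is $[2,M]^1$, which is \emph{not} on the list. Ruling this out requires exploiting the independent-parallel hypothesis more sharply — specifically that the out-degree-$1$ condition at each pole of a P-node means the central child has no spare incident edge at either pole, which forces a parity/reachability constraint linking the central child's spirality set to $0$ (or $1$), and symmetry of $\Sigma$ around $0$ then pins the minimum of the intersection to $0$ or $1$. I would isolate this as a separate lemma: in an \pisp, the central child component of a P-node always admits spirality $0$ (or always admits $1$ but never $0$, when its chain-parity forbids $0$), which is what ultimately guarantees the six-structure dichotomy is closed under composition. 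Making this lemma precise and checking it against the pictures in Fig.~\ref{fi:intro} is where the real work lies; the rest is disciplined case analysis on Minkowski sums and shifted intersections of the six interval types.
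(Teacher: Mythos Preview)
Your plan differs substantially from the paper's, and two of the composition rules you intend to rely on are stated incorrectly.

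For an S-node, Lemma~\ref{le:spirality-S-node} (already given before the theorem) says $\sigma_\nu = \sum_i \sigma_{\mu_i}$ with no further terms: there are no ``angle corrections'' at the internal poles of a series. This error is harmless (it only simplifies your task), but it signals a misreading of the setup. The harmful error is in the two-child parallel case. By Lemma~\ref{le:spirality-P-node-2-children} one has $\sigma_\nu = \sigma_{\mu_l} - \alpha_u^l - \alpha_v^l = \sigma_{\mu_r} + \alpha_u^r + \alpha_v^r$ with each $\alpha \in \{0,1\}$ and $1 \le \alpha_w^l + \alpha_w^r \le 2$; the gap $\sigma_{\mu_l} - \sigma_{\mu_r}$ ranges over $\{2,3,4\}$ and the offset between $\sigma_\nu$ and either child is \emph{not} fixed. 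Toggling a single $\alpha_w^l$ from $0$ to $1$ shifts $\sigma_\nu$ by one, so a two-child P-node typically admits consecutive spirality values and is jump-1 even when both children are jump-2. Your fixed-offset ``shifted intersection'' model would wrongly predict a jump-2 output in such cases, and the inductive closure of the six types would then break at the next composition.

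The paper does not attempt a closure-under-composition argument at all. It proves, by structural induction on $T_\rho$, the single key Lemma~\ref{le:intervalSupportFirst}: any component admitting spirality $\sigma \ge 2$ also admits $\sigma - 2$ when $\sigma > 2$, admits $0$ or $1$ when $\sigma = 2$, and admits $0$ when $\sigma = 4$. From this, Corollary~\ref{co:intervalSupport} yields every value of the same parity as $M$ down to $0$ or $1$, and Lemma~\ref{le:intervalSupportThird} shows that if both parities are present and $M > 2$ then $\Sigma^+_{\nu,\rho} = [0,M]^1$. The theorem is then a four-line case split on $M \in \{0,1,2\}$ versus $M > 2$. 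Your proposed auxiliary lemma about the central child of a three-child P-node is neither what is needed nor what the paper uses: the danger of a set with minimum $2$ is ruled out uniformly, for every node type, by Property~(a) of Lemma~\ref{le:intervalSupportFirst}.
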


\noindent Theorem~\ref{th:spirality-sets} relies on some key technical results.
%
%
Lemma~\ref{le:spirality-S-node} relates the spirality of an S-component to those of its child components (see~Fig.~\ref{fi:spirality-relationships-S}).

\begin{lemma}[\cite{DBLP:journals/siamcomp/BattistaLV98}]\label{le:spirality-S-node}
	Let $\nu$ be an S-node of $T_\rho$ with children $\mu_1, \dots, \mu_h$. The component $G_{\nu,\rho}$ admits spirality $\sigma_\nu$ if and only if $\sigma_\nu = \sum_{i=1}^{h}\sigma_{\mu_i}$, where $\sigma_{\mu_i}$ is a spirality value admitted by $G_{\mu_i,\rho}$ $(1 \leq i \leq h)$.
\end{lemma}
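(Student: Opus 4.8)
The plan is to use the fact that spirality is an additive turning number and to read off both implications from a single identity. Order the children so that $\mu_i$ has poles $w_{i-1}$ and $w_i$, with $u=w_0$ and $v=w_h$. By the path-independence of spirality recalled above, I may take as spine of $G_{\nu,\rho}$ the concatenation of simple spine paths $P^{w_{i-1}w_i}$ running inside the consecutive children, so that the spine of $\nu$ is literally the union of child spine paths glued at the $w_i$. Writing $t(x)$ for the right-minus-left turn counted at an internal spine vertex $x$, everything reduces to proving $\sigma_\nu=\sum_{i=1}^{h}\sigma_{\mu_i}$, where $\sigma_{\mu_i}$ is the spirality of the restriction $H_{\mu_i,\rho}$: turns at vertices lying strictly inside a child are, by construction, identical in $H_{\nu,\rho}$ and in $H_{\mu_i,\rho}$, so the whole argument is the bookkeeping of the turns at the junction vertices $w_0,\dots,w_h$.

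For this bookkeeping I would first record a structural fact: each connecting vertex $w_i$ with $1\le i\le h-1$ is internal to $G_{\nu,\rho}$ and has degree at least three in $G$, because degree-$2$ vertices are absorbed into chains and therefore never separate two children of an S-node. Since all edges at $w_i$ are split between $\mu_i$ and $\mu_{i+1}$, we get $\indeg_{\mu_i}(w_i)+\indeg_{\mu_{i+1}}(w_i)\ge 3$, so the two indegrees are never both $1$; and because $G$ is a \pisp, they are never both larger than $1$ (this is exactly where the hypothesis enters, through $\outdeg_\nu(w)=1$ whenever $\indeg_\nu(w)>1$). Hence at every $w_i$ exactly one side, say $\indeg_{\mu_i}(w_i)>1$, carries several internal edges, while the other side contributes the single external edge whose subdivision defines the alias of $w_i$ for the first side.

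The crux is then that the turn $t(w_i)$ counted once in $\sigma_\nu$ is reproduced once, and with the correct value, in $\sum_i\sigma_{\mu_i}$. On the single-edge side the alias coincides with $w_i$ and the child spine stops at $w_i$, so that child counts no turn there; on the multi-edge side the alias subdivides precisely that single external edge, so the turn that child counts at $w_i$ is the turn from its last internal spine edge to that external edge, namely $t(w_i)$. At the poles $u=w_0$ and $v=w_h$ the situation is simpler still: only $\mu_1$ (resp.\ $\mu_h$) is incident to $u$ (resp.\ $v$) inside $G_{\nu,\rho}$, so $\indeg_\nu(u)=\indeg_{\mu_1}(u)$ and the external edge of $\nu$ at $u$ is the external edge of $\mu_1$ at $u$; the boundary turn is thus counted in $\sigma_\nu$ if and only if it is counted in $\sigma_{\mu_1}$, with the same value. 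Summing the interior turns and all junction turns yields $\sigma_\nu=\sum_i\sigma_{\mu_i}$, which is the only-if direction once a representation $H$ of $G$ is restricted to each child.

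For the if direction I would build $H_{\nu,\rho}$ by gluing given rectilinear representations $H_{\mu_i,\rho}$ of the prescribed spiralities in series, identifying the right pole of $\mu_i$ with the left pole of $\mu_{i+1}$. In contrast to a parallel composition, a series composition couples the pieces only locally: at each junction one child attaches through a single edge, which after a rigid rotation of that piece (a spirality-preserving operation) can be placed inside the outer angle of the other child, keeping all vertex angles valid and the poles on the outer face; the identity above then certifies that the spirality of the result is $\sum_i\sigma_{\mu_i}$. I expect the junction bookkeeping of the only-if direction to be the main obstacle, since it is there that one must simultaneously rule out a missing turn (both sides single-edge) and a double-counted turn (both sides multi-edge): these are exactly the two configurations forbidden, respectively, by the degree-$\ge 3$ structure of connecting vertices and by the independent-parallel hypothesis.
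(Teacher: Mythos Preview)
The paper does not prove this lemma; it is quoted from Di Battista, Liotta, and Vargiu and used as a black box throughout. There is thus no in-paper proof to compare your proposal against, but your argument is worth assessing on its own.

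Your only-if direction is correct and nicely organised. The structural dichotomy at each internal junction $w_i$---exactly one of the two incident children has indegree larger than one---is properly justified: both indegrees equal to $1$ would force $\deg_G(w_i)=2$, so $w_i$ would be interior to a maximal chain and could not separate two children of an S-node; both indegrees larger than $1$ would mean two P-node children of $\nu$ share the pole $w_i$ (children of an S-node are Q$^*$- or P-nodes, never S-nodes), violating the independent-parallel hypothesis. The alias bookkeeping then goes through exactly as you describe: on the single-edge side the spine terminates at $w_i$ itself and records no turn, while on the multi-edge side the alias subdivides precisely the single edge leading into the neighbouring child, so that child's spine records $t(w_i)$ once. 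At the outer poles $u,v$ the identities $\indeg_\nu(u)=\indeg_{\mu_1}(u)$ and $\outdeg_\nu(u)=\outdeg_{\mu_1}(u)$ make the alias (and hence the turn accounting) identical for $\nu$ and for $\mu_1$, and symmetrically at $v$.

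One caveat on the if direction. Your series gluing produces a rectilinear representation of the \emph{component} $G_{\nu,\rho}$ with spirality $\sum_i\sigma_{\mu_i}$, but the paper defines ``$G_{\nu,\rho}$ admits spirality $\sigma_\nu$'' to mean that such a component occurs inside a rectilinear representation of all of $G$. Passing from the local construction to the global statement is exactly what the substitution/interchangeability machinery of the cited paper supplies; it is not hard, but it is not automatic either, and you should name it explicitly rather than leave it implicit.
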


\noindent The next lemmas refer to (components of)~\pisps. As for Lemmas~\ref{le:spirality-P-node-3-children} and~\ref{le:spirality-S-node}, the next lemma shows the relationship between the spirality of a parallel component with two children and the spiralities of its child components. In the lemma, if $H_{\nu,\rho}$ is a rectilinear representation of $G_{\nu,\rho}$ with spirality $\sigma_\nu$, the value $\alpha_w^l$ (resp. $\alpha_w^r$) is used to represent the left (resp. right) outside angle at $w$; namely, $\alpha_w^l = 0$ (resp. $\alpha_w^r = 0$) if the left (resp. right) angle at $w$ is of $180^\circ$. Conversely, $\alpha_w^l = 1$ (resp. $\alpha_w^r = 1$) if the left (resp. right) angle at $w$ is of $90^\circ$.

\begin{lemma}[\cite{DBLP:journals/siamcomp/BattistaLV98}]\label{le:spirality-P-node-2-children}
	Let $\nu$ be a P-node of $T_\rho$ with two children $\mu_l$ and $\mu_r$, and with poles $u$ and $v$. The component $G_{\nu,\rho}$ admits spirality $\sigma_\nu$ with $G_{\mu_l,\rho}$ and $G_{\mu_r,\rho}$ in this left-to-right order if and only if there exist six values $\sigma_{\mu_l}$, $\sigma_{\mu_r}$, $\alpha_u^l$, $\alpha_u^r$, $\alpha_v^l$, and $\alpha_v^r$ such that: (i) $G_{\mu_l,\rho}$ and $G_{\mu_r,\rho}$ admit spirality $\sigma_{\mu_l}$ and $\sigma_{\mu_r}$, respectively; (ii) $\alpha_w^l \in \{0,1\}$, $\alpha_w^r \in \{0,1\}$, and $1 \leq \alpha_w^l+\alpha_w^r \leq 2$ for any $w \in \{u,v\}$; and (iii) $\sigma_\nu = \sigma_{\mu_l} - \alpha_{u}^l -  \alpha_{v}^l = \sigma_{\mu_r} + \alpha_{u}^r +\alpha_{v}^r$.
\end{lemma}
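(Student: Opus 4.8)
The plan is to prove both directions by expressing $\sigma_\nu$ through two different spines of a rectilinear representation $H_{\nu,\rho}$ of $G_{\nu,\rho}$ — one routed through the left child $\mu_l$ and one through the right child $\mu_r$ — and by analysing the angles formed at the two poles. I would first record the local structure at the poles. Since $G$ is independent-parallel and $\nu$ is a P-node, its children $\mu_l,\mu_r$ are S- or Q$^*$-nodes (no two P-nodes are adjacent), so each contributes exactly one edge at each pole; together with the unique external edge guaranteed by $\outdeg_\nu(w)=1$, every pole $w\in\{u,v\}$ has exactly two incident edges inside $G_{\nu,\rho}$ and one outside, and its alias $w'$ is the dummy subdividing that external edge. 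In any rectilinear representation the three angles around $w$ are positive multiples of $90^\circ$ summing to $360^\circ$, hence exactly one is $180^\circ$ and the other two are $90^\circ$. Reading $\alpha_w^l$ and $\alpha_w^r$ as the normalised outer angles between the external edge and the edge toward $\mu_l$, resp.\ $\mu_r$ (the clockwise order $e_\nu,\mu_l,\mu_r$ around $u$ fixing which is ``left'' and which is ``right''), this immediately gives $\alpha_w^l,\alpha_w^r\in\{0,1\}$ and $1\le\alpha_w^l+\alpha_w^r\le 2$, establishing condition~(ii).

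Next I would establish~(iii). Using that spirality does not depend on the chosen path~\cite{DBLP:journals/siamcomp/BattistaLV98}, I compute $\sigma_\nu$ on the spine $u'\to u\to P_l\to v\to v'$ routed through $\mu_l$, where $P_l$ is a path of $\mu_l$ from $u$ to $v$. Because $\indeg_{\mu_l}(w)=1$, the alias of $w$ in $\mu_l$ is $w$ itself, so the turns strictly inside $\mu_l$ sum exactly to $\sigma_{\mu_l}$, and the only extra turns along this spine occur at $u$ (between the external edge and the first edge of $P_l$) and at $v$. By the embedding, entering and leaving the leftmost child contributes $-\alpha_u^l$ and $-\alpha_v^l$ to the right-minus-left count (a turn toward the leftmost child lowers this count), yielding $\sigma_\nu=\sigma_{\mu_l}-\alpha_u^l-\alpha_v^l$. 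Routing the spine through the rightmost child $\mu_r$ is symmetric, but the corresponding turns carry the opposite sign, giving $\sigma_\nu=\sigma_{\mu_r}+\alpha_u^r+\alpha_v^r$; the two expressions agree because both evaluate the same spirality, which is exactly condition~(iii).

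For the converse, given values satisfying (i)--(iii), I would take rectilinear representations $H_{\mu_l,\rho}$ and $H_{\mu_r,\rho}$ of the children with the prescribed spiralities, place them in parallel between $u$ and $v$ in the order $\mu_l,\mu_r$, and assign at each pole the angles dictated by $\alpha_w^l,\alpha_w^r$ (realisable by~(ii) at a degree-$3$ pole). It then remains to verify that the assembled drawing is a valid orthogonal representation: the single new internal face bounded by the right side of $\mu_l$, the left side of $\mu_r$, and the two poles must have total turn $+4$ (and the external boundary $-4$). This is where~(iii) does its work, since the equality $\sigma_{\mu_l}-\alpha_u^l-\alpha_v^l=\sigma_{\mu_r}+\alpha_u^r+\alpha_v^r$ is precisely the condition that the two children ``close up'' consistently around this face. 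The spirality of the resulting representation is then read off as $\sigma_\nu$ from either spine.

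I expect the two delicate points to be, first, the signed bookkeeping in the middle step — pinning down that the turn entering the leftmost (resp.\ rightmost) child equals $-\alpha^l$ (resp.\ $+\alpha^r$) under the traversal $u\to v$, where the left/right asymmetry and the traversal direction must be reconciled — and, second, the face-consistency check in the converse, namely translating the orthogonal-representation angle conditions (angles summing to $360^\circ$ at every vertex and the $\pm4$ turn condition per face) into the signed turn count so that~(iii) is seen to be both necessary and sufficient for planar realisability.
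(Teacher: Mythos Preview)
The paper does not prove this lemma: it is quoted verbatim from Di~Battista, Liotta, and Vargiu~\cite{DBLP:journals/siamcomp/BattistaLV98} and only accompanied by an illustrative figure, so there is no in-paper proof to compare against. Your sketch is the natural argument and is essentially what the cited reference does: compute the spirality of $H_{\nu,\rho}$ along two spines, one through each child, and read off the $\pm\alpha$ corrections from the turns at the poles; for the converse, glue the two child representations and verify the face-angle condition on the newly created internal face. The structural observations you make (each pole has degree~$3$ with a single external edge, forcing the $\{0,1\}$ alphabet and the constraint $1\le\alpha_w^l+\alpha_w^r\le 2$) are exactly the specialisation to the independent-parallel setting that the present paper relies on. Your two flagged ``delicate points'' are genuine but routine once the orientation conventions are fixed; nothing in your outline is wrong or missing.
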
	

\begin{figure}[!h]
	\centering
	\includegraphics[height=0.3\columnwidth,page=3]{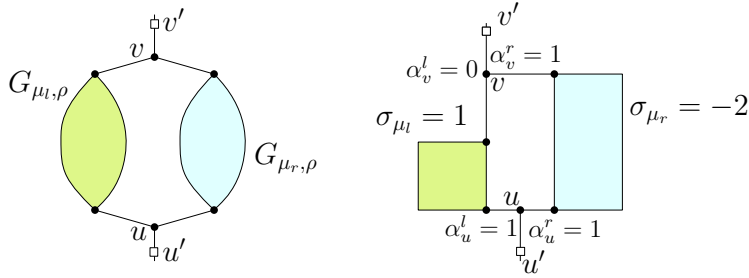}
	\caption{Illustration for Lemma~\ref{le:spirality-P-node-2-children}: the component has spirality 0; alias vertices are small squares.}
	\label{fi:spirality-relationships-P2}
\end{figure}

\begin{restatable}{lemma}{leIntervalSupportFirst}\label{le:intervalSupportFirst}
	Let $G_{\nu,\rho}$ be a component that admits spirality $\sigma_\nu \geq 2$. The following properties hold: (a) if $\sigma_\nu=2$, $G_{\nu,\rho}$ admits spirality $\sigma'_\nu=0$ or $\sigma'_\nu=1$; (b) if $\sigma_\nu>2$, $G_{\nu,\rho}$ admits spirality $\sigma'_\nu = \sigma_\nu-2$; (c) if $\sigma_\nu = 4$, $G_{\nu,\rho}$ admits spirality~$\sigma'_\nu = 0$.
\end{restatable}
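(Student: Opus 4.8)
\textbf{Proof plan for Lemma~\ref{le:intervalSupportFirst}.}

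The plan is to induct on the structure of the SPQ$^*$-tree, distinguishing whether $\nu$ is a Q$^*$-node, an S-node, or a P-node, and in each case using the composition lemmas (Lemma~\ref{le:spirality-S-node}, Lemma~\ref{le:spirality-P-node-2-children}, Lemma~\ref{le:spirality-P-node-3-children}) to exhibit a rectilinear representation with the desired smaller spirality. The base case is when $\nu$ is a Q$^*$-node: a chain of length $\ell$ admits exactly the spiralities in $\{-\ell,\dots,\ell\}$ realized by placing bends\,/\,turns along the chain, so from $\sigma_\nu \ge 2$ we can freely decrease to $\sigma_\nu - 2$, and to $0$ when $\sigma_\nu = 4$ (note $\sigma_\nu \le \ell$ forces $\ell \ge 4$), which covers (a), (b), (c) simultaneously. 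Here one has to be slightly careful about parity: a chain admits \emph{all} integers in $[-\ell,\ell]$, so there is no parity obstruction at the leaves; parity phenomena only appear higher up through P-nodes.

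For the inductive step when $\nu$ is an S-node with children $\mu_1,\dots,\mu_h$, by Lemma~\ref{le:spirality-S-node} we have $\sigma_\nu = \sum_i \sigma_{\mu_i}$ with each $\sigma_{\mu_i}$ admissible. To decrease $\sigma_\nu$ by $2$ it suffices to decrease the spirality of a single child by $2$; if some child $\mu_i$ has $|\sigma_{\mu_i}| \ge 2$ we apply induction (part (b), or its mirror for negative values) to that child and we are done for (b). The delicate sub-case is when every child has $\sigma_{\mu_i} \in \{-1,0,1\}$ but $\sigma_\nu \in \{2,4\}$: then at least two children contribute $+1$ each, and we argue that a child admitting spirality $+1$ also admits spirality $-1$ (by the flip symmetry already noted, applied to that single child — equivalently $-1 \in \Sigma_{\mu_i,\rho}$ whenever $1 \in \Sigma_{\mu_i,\rho}$), so we can flip one such child from $+1$ to $-1$, decreasing $\sigma_\nu$ by $2$; for $\sigma_\nu = 4$ we flip two of them, reaching $0$. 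This also handles (a) for the S-case: $\sigma_\nu = 2$ reduces to $\sigma_\nu' = 0$.

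When $\nu$ is a P-node the argument is analogous but uses Lemma~\ref{le:spirality-P-node-2-children} (two children) or Lemma~\ref{le:spirality-P-node-3-children} (three children); recall that in an \pisp\ a P-node has exactly two or three children and its poles have outdegree $1$ outside $G_{\nu,\rho}$, which is what makes these lemmas applicable. For three children, Lemma~\ref{le:spirality-P-node-3-children} gives $\sigma_\nu = \sigma_{\mu_l} - 2 = \sigma_{\mu_c} = \sigma_{\mu_r} + 2$; if $\sigma_\nu \ge 2$ then $\sigma_{\mu_l} \ge 4$ and $\sigma_{\mu_c} \ge 2$, so by induction each of these children admits a spirality $2$ smaller, and simultaneously decreasing all three by $2$ (which keeps condition (ii) satisfied) yields a representation of $G_{\nu,\rho}$ with spirality $\sigma_\nu - 2$; for $\sigma_\nu = 4$ iterate twice to reach $0$, and for $\sigma_\nu = 2$ one step reaches $0$. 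For two children, from condition (iii) we have $\sigma_\nu = \sigma_{\mu_l} - \alpha_u^l - \alpha_v^l = \sigma_{\mu_r} + \alpha_u^r + \alpha_v^r$ with each $\alpha \in \{0,1\}$ and $\alpha_w^l + \alpha_w^r \in \{1,2\}$; the cleanest route is to decrease $\sigma_\nu$ while keeping the angle variables fixed, which forces $\sigma_{\mu_l}$ and $\sigma_{\mu_r}$ to decrease by $2$ — possible by induction provided both are large enough — and to handle the boundary cases (where a child's spirality is too small to decrease, or has the wrong sign) by instead adjusting the $\alpha$'s or flipping the relevant child via the $\pm$ symmetry, exactly as in the S-case. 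I expect \textbf{this P-node-with-two-children analysis to be the main obstacle}: the four angle parameters interact with the two child spiralities through the single equation chain in (iii), so a careful case split (on which of $\sigma_{\mu_l},\sigma_{\mu_r}$ is "small", and on the admissible angle assignments) is needed to verify that in every situation one of the allowed moves lowers $\sigma_\nu$ by exactly $2$, and for $\sigma_\nu \in \{2,4\}$ that one can moreover reach $0$. Everything else reduces to the leaf computation and the two symmetric observations (flip of a subcomponent; monotone decrease of a chain's spirality).
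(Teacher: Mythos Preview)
Your overall plan---induction on the SPQ$^*$-tree, base case at Q$^*$-nodes, and use of the composition lemmas at S- and P-nodes---is exactly the paper's approach. However, several of the specific reductions you sketch do not actually go through, and the fix is a maneuver you never mention: \emph{permuting the children of a P-node in the planar embedding}.

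Concretely, for a P-node with three children and $\sigma_\nu=2$ you have $(\sigma_{\mu_l},\sigma_{\mu_c},\sigma_{\mu_r})=(4,2,0)$. Your plan is to ``simultaneously decrease all three by $2$'', but the induction gives you nothing about $\mu_r$ going from $0$ to $-2$, and for $\mu_c$ inductive~(a) only guarantees $0$ \emph{or} $1$. The paper instead lowers $\mu_l$ from $4$ to $2$ by~(b), flips $\mu_c$ from $2$ to $-2$ by symmetry, and then \emph{swaps} $\mu_c$ and $\mu_r$ in the left-to-right order, obtaining $(2,0,-2)$ and hence $\sigma_\nu=0$. The same swap trick is needed at $\sigma_\nu=4$ (where $\sigma_{\mu_r}=2$ and inductive~(a) again may return $1$ rather than $0$). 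So the three-child P-case is not as routine as you expect; it is the embedding freedom, not just the spirality arithmetic, that closes the argument.

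Two related issues: first, your S-node case split treats ``some child has $|\sigma_{\mu_i}|\ge 2$'' by invoking~(b), but~(b) needs strict inequality; a child with $\sigma_{\mu_i}=2$ only yields $0$ or $1$ via~(a), and you must argue (as the paper does) that when no child exceeds~$2$ and none equals~$1$, there are at least \emph{two} children at spirality~$2$, so either one drops to~$0$ or both drop to~$1$. Second, property~(c) is not obtained by ``iterating~(b) twice'': from $\sigma_\nu=4$ part~(b) gives $2$, but from $2$ part~(a) only gives $0$ \emph{or} $1$. Property~(c) requires its own inductive case analysis (and, for P-nodes, again the swap trick). You are right that the two-child P-node is the longest case split, but the three-child case and~(c) are where your current plan would actually break.
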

\begin{proof}
	The proof is by induction on the depth of the subtree of $T_\rho$ rooted at $\nu$. In the base case $\nu$ is a Q$^*$-node and the three properties trivially hold for~$G_{\nu,\rho}$. In the inductive case, $\nu$ is either an S-node or a P-node. 
	
	

		\smallskip\noindent
	\textsf{-- $\nu$ is an S-node.} We inductively prove the three properties.
	
	\smallskip\noindent
	\textsf{Proof of Property (a).} If $\nu$ admits spirality $\sigma_\nu=2$, by Lemma~\ref{le:spirality-S-node}, $\nu$ has a child $\mu$ that admits spirality $\sigma_\mu>0$. If $\sigma_\mu=1$, $\mu$ also admits spirality -1, and $\nu$ admits spirality 0. If $\sigma_\mu=2$, by inductively using Property~(a), $\mu$ also admits 0 or 1, and so does $\nu$. If $\sigma_\mu>2$, by inductively using Property~(b), $\mu$ admits spirality $\sigma_\mu-2$, and $\nu$ admits spirality~0.     
	
	\smallskip\noindent
	\textsf{Proof of Property (b).} If $\nu$ admits spirality $\sigma_\nu >2$, by Lemma~\ref{le:spirality-S-node}, $\nu$ has child~$\mu$ that admits spirality $\sigma_\mu>2$. By inductively using Property~(b), $\mu$ admits spirality $\sigma_\mu-2$ and by Lemma~\ref{le:spirality-S-node} $\nu$ admits spirality $\sigma_\nu-2$. If $\nu$ has child $\mu$ such that $\mu$ admits spirality 1, then $\mu$ admits spirality -1, and $\nu$ admits spirality $\sigma_\nu - 2$. Else, $\nu$ has two children $\mu_1$ and $\mu_2$ such that $\mu_1$ and $\mu_2$ both admit spirality 2. By inductively using Property~(a), either one of them also admits spirality 0 or they both admit spirality 1. In any case, $\nu$ admits spirality $\sigma_\nu - 2$.
	
	\smallskip\noindent
	\textsf{Proof of Property (c).} If $\nu$ admits spirality $\sigma_\nu = 4$, by Lemma~\ref{le:spirality-S-node} one of the following cases applies: (i) $\nu$ has a child $\mu$ that admits spirality 4; if so, by inductively using Property~(c), $\nu$ admits spirality 0. (ii) $\nu$ has a child $\mu$ that admits spirality $\sigma_\mu > 4$; if so, by inductively applying Property~(b) twice, $\mu$ admits spirality $\sigma_\mu - 4$, and hence $\nu$ admits spirality 0. 
	(iii) $\nu$ has two children $\mu_1$ and $\mu_2$ such that each of them admits spirality either 1 or 3; observe that if $\mu_i$ $(i \in \{1,2\})$ admits spirality 1, it also admits spirality -1 and if $\mu_i$ admits spirality 3, it also admits spirality 1 by inductively using Property~(b); this implies that $\nu$ admits spirality $\sigma_\nu-4$. 
	
		\smallskip \noindent
	\textsf{-- $\nu$ is a P-node with three children.}
	Let $H_{\nu,\rho}$ be a rectilinear planar representation of $G_{\nu,\rho}$ with spirality $\sigma_\nu$.  Let $\mu_l$, $\mu_c$, and $\mu_r$ be the children of $\nu$ such that $G_{\mu_l,\rho}$, $G_{\mu_c,\rho}$, and  $G_{\mu_r,\rho}$ appear in this left-to-right order in $H_{\nu,\rho}$. By Lemma~\ref{le:spirality-P-node-3-children}, $\sigma_{\mu_l}=\sigma_\nu+2$, $\sigma_{\mu_c}=\sigma_\nu$, and $\sigma_{\mu_r}=\sigma_\nu-2$.
	
	\smallskip \noindent
	\textsf{Proof of Property (a).} If $\sigma_\nu=2$, we have $\sigma_{\mu_l}=4$, $\sigma_{\mu_c}=2$, and $\sigma_{\mu_r}=0$; see Fig.~\ref{fig:3children2-0-a}. By inductively using Property~(b), $\mu_l$ admits spirality~2. Also $\mu_c$ admits spirality~$-2$. Hence, exchanging $G_{\mu_c}$ with $G_{\mu_r}$ in the left-to-right order, by Lemma~\ref{le:spirality-P-node-3-children}, $\nu$ admits spirality 0; see~Fig.~\ref{fig:3children2-0-d}.
	
	\smallskip\noindent
	\textsf{Proof of Property (b).} If $\sigma_\nu>2$, we distinguish three cases:
	
	\noindent (i) $\sigma_\nu=3$, which implies $\sigma_{\mu_l}=5$, $\sigma_{\mu_c}=3$, and $\sigma_{\mu_r}=1$; see Fig.~\ref{fig:3children2-0-b}. By inductively using Property~(b), $\mu_l$ and $\mu_c$ admit spirality 3 and 1, respectively. Also, $\mu_r$ admits spirality -1. By Lemma~\ref{le:spirality-P-node-3-children}, $\nu$ admits spirality $\sigma_\nu-2=1$; see~Fig.~\ref{fig:3children2-0-e}.
	
	\noindent (ii) $\sigma_\nu=4$, which implies $\sigma_{\mu_l}=6$, $\sigma_{\mu_c}=4$, and $\sigma_{\mu_r}=2$see Fig.~\ref{fig:3children2-0-c}. By inductively using Property~(b), $\mu_l$ admits spirality 4; also, by inductively using Property~(c), $\mu_c$ admits spirality 0. Hence, exchanging $G_{\mu_c}$ with $G_{\mu_r}$ in the left-to-right order, by Lemma~\ref{le:spirality-P-node-3-children}, $\nu$ admits spirality $\sigma_\nu-2=2$; see~Fig.~\ref{fig:3children2-0-f}.
	
	\begin{figure}[t]
		\centering
		\subfigure[$\sigma_\nu=2$]{\includegraphics[width=0.26\columnwidth,page=1]{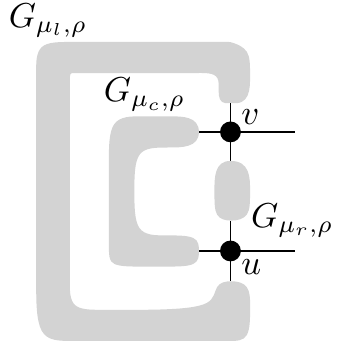}\label{fig:3children2-0-a}}
		\hfil
		\subfigure[$\sigma_\nu=3$]{\includegraphics[width=0.26\columnwidth,page=15]{nojumping3.pdf}\label{fig:3children2-0-b}}
		\hfil
		\subfigure[$\sigma_\nu=4$]{\includegraphics[width=0.26\columnwidth,page=14]{nojumping3.pdf}\label{fig:3children2-0-c}}
		\hfil
		\subfigure[$\sigma'_\nu=0$]{\includegraphics[width=0.26\columnwidth,page=2]{nojumping3.pdf}\label{fig:3children2-0-d}}
		\hfil
		\subfigure[$\sigma'_\nu=1$]{\includegraphics[width=0.26\columnwidth,page=16]{nojumping3.pdf}\label{fig:3children2-0-e}}
		\hfil
		\subfigure[$\sigma'_\nu=2$]{\includegraphics[width=0.26\columnwidth,page=17]{nojumping3.pdf}\label{fig:3children2-0-f}}
		\hfil
		\caption{Illustration of Lemma~\ref{le:intervalSupportFirst} for a P-node with three children.}
		\label{fig:3children2-0}
	\end{figure}
	
	\noindent (iii) $\sigma_\nu>4$, which implies $\sigma_{\mu_r}>2$. By inductively using Property~(b), $\mu_l$, $\mu_c$, $\mu_r$ admit spirality $\sigma_{\mu_l}-2$, $\sigma_{\mu_c}-2$, $\sigma_{\mu_r}-2$, and hence $\nu$ admits spirality~$\sigma_\nu-2$.
	
	\smallskip
	\noindent \textsf{Proof of Property (c).} If $\sigma_\nu = 4$ then
	$\sigma_{\mu_l}=6$, $\sigma_{\mu_c}=4$, $\sigma_{\mu_r}=2$.
	By inductively using Property~(b) twice, $\mu_l$ admits spirality 2. By inductively using  Property~(c), $\mu_c$ admits spirality 0. Since $\mu_r$ admits spirality -2, 
	$\nu$ admits spirality $\sigma_\nu-4=0$.

	\smallskip\noindent
	\textsf{-- $\nu$ is a P-node with two children}
	
	\smallskip\noindent
	Let $H_{\nu,\rho}$ be a rectilinear planar representation of $G_{\nu,\rho}$ with spirality $\sigma_\nu$. Let  $G_{\mu_l,\rho}$ and  $G_{\mu_r,\rho}$ be the left child and the right child of $G_{\nu,\rho}$ in $H_{\nu,\rho}$, respectively. By Lemma~\ref{le:spirality-P-node-2-children} we have $\sigma_\nu = \sigma_{\mu_l} - \alpha_{u}^l -  \alpha_{v}^l = \sigma_{\mu_r} +  \alpha_{u}^r +\alpha_{v}^r$. Lemma~\ref{le:spirality-P-node-2-children} implies $\sigma_{\mu_l}-\sigma_{\mu_r}\in [2,4]$. Without loss of generality, we assume that $\alpha_v^l\ge \alpha_u^l$.
	
	\smallskip\noindent
	\textsf{Proof of Property (a).} If $\sigma_\nu=2$, we distinguish three cases depending on the value of $\sigma_{\mu_l}-\sigma_{\mu_r}$. Suppose first that $\sigma_{\mu_l}-\sigma_{\mu_r}=2$. There are three subcases:
	\begin{itemize}
		\item[(i)] $\sigma_{\mu_l}=2$, $\sigma_{\mu_r}=0$, and $\alpha_u^l=\alpha_v^l=0$; see Fig.~\ref{fig:2children2-0-a}. For $\alpha_u^l=\alpha_v^l=1$ and $\alpha_u^r=\alpha_v^r=0$, by Lemma~\ref{le:spirality-P-node-2-children}, $G_{\nu,\rho}$ admits spirality $\sigma_\nu-2=0$; see Fig.~\ref{fig:2children2-0-b}.
		\item[(ii)] $\sigma_{\mu_l}=3$, $\sigma_{\mu_r}=1$, $\alpha_v^r=0$, and $\alpha_u^l=0$; see Fig.~\ref{fig:2children2-0-c}. By inductively using Property~(b), $G_{\mu_l,\rho}$ admits spirality 1. Also, $G_{\mu_r,\rho}$ admit spirality $\sigma_{\mu_r}=-1$. For $\alpha_u^l=\alpha_v^r=0$ (which implies $\alpha_u^r=\alpha_v^l=1$), by Lemma~\ref{le:spirality-P-node-2-children}, $G_{\nu,\rho}$ admits spirality 0; see Fig.~\ref{fig:2children2-0-d}.
		\item[(iii)] $\sigma_{\mu_l}=4$, $\sigma_{\mu_r}=2$, and $\alpha_u^r=\alpha_v^r=1$; see Fig.~\ref{fig:2children2-0-e}. By inductively using Property~(c), $G_{\mu_l,\rho}$ admits spirality 0. Hence, exchanging $G_{\mu_l,\rho}$ and $G_{\mu_r,\rho}$ in the left-to-right order, and for  
		$\alpha_u^r=\alpha_v^r=0$, by Lemma~\ref{le:spirality-P-node-2-children}, $G_{\nu,\rho}$ admits spirality 0; see Fig.~\ref{fig:2children2-0-f}.
	\end{itemize}
	
	\begin{figure}
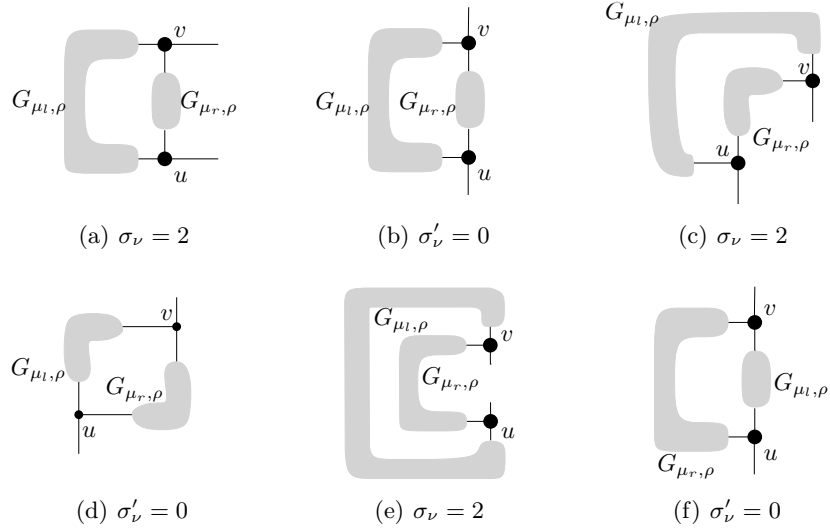

		\centering
		\subfigure[$\sigma_\nu=2$]{\includegraphics[width=0.27\columnwidth,page=3]{nojumping3.pdf}
			\label{fig:2children2-0-a}
		}
		\hfil
		\subfigure[$\sigma'_\nu=0$]{\includegraphics[width=0.27\columnwidth,page=4]{nojumping3.pdf} 
			\label{fig:2children2-0-b}
		}
		\hfil
		\subfigure[$\sigma_\nu=2$]{\includegraphics[width=0.27\columnwidth,page=5]{nojumping3.pdf}
			\label{fig:2children2-0-c}
		}
		\hfil
		\subfigure[$\sigma'_\nu=0$]{\includegraphics[width=0.27\columnwidth,page=6]{nojumping3.pdf}
			\label{fig:2children2-0-d}
		}
		\hfil
		\subfigure[$\sigma_\nu=2$]{\includegraphics[width=0.27\columnwidth,page=7]{nojumping3.pdf}
			\label{fig:2children2-0-e}
		}
		\hfil
		\subfigure[$\sigma'_\nu=0$]{\includegraphics[width=0.27\columnwidth,page=8]{nojumping3.pdf}
			\label{fig:2children2-0-f}
		}
		\hfil	
		\caption{Illustration for the proof Property~(a) of Lemma~\ref{le:intervalSupportFirst} for a P-component with two children for the case $\sigma_{\mu_l}-\sigma_{\mu_r}=2$.}
		\label{fig:2children2-0}
	\end{figure}
	\noindent
	
	Suppose now that $\sigma_{\mu_l}-\sigma_{\mu_r}=3$. In this case, for one of the two poles $\{u,v\}$ of $\nu$, say $v$, we have $\alpha_v^l=\alpha_v^r=1$. There are two subcases:
	\begin{itemize}
		\item[(iv)] $\sigma_{\mu_l}=3$ and $\sigma_{\mu_r}=0$; see Fig.~\ref{fig:2childrenDiffspir34-a}. In this case $\alpha_u^r=1$.
		For $\alpha_u^r=0$, by Lemma~\ref{le:spirality-P-node-2-children}, $G_{\nu,\rho}$ admits spirality 1; see Fig.~\ref{fig:2childrenDiffspir34-b}.
		\item[(v)] $\sigma_{\mu_l}=4$ and $\sigma_{\mu_r}=1$; see Fig.~\ref{fig:2childrenDiffspir34-c}. By inductively using Property~(b), $G_{\mu_l,\rho}$ admits spirality 2. Also, $G_{\mu_r,\rho}$ admits spirality -1. For $\alpha_u^r=0$, by Lemma~\ref{le:spirality-P-node-2-children}, $G_{\nu,\rho}$ admits spirality 0; see Fig.~\ref{fig:2childrenDiffspir34-d}.
	\end{itemize}
	\begin{figure}
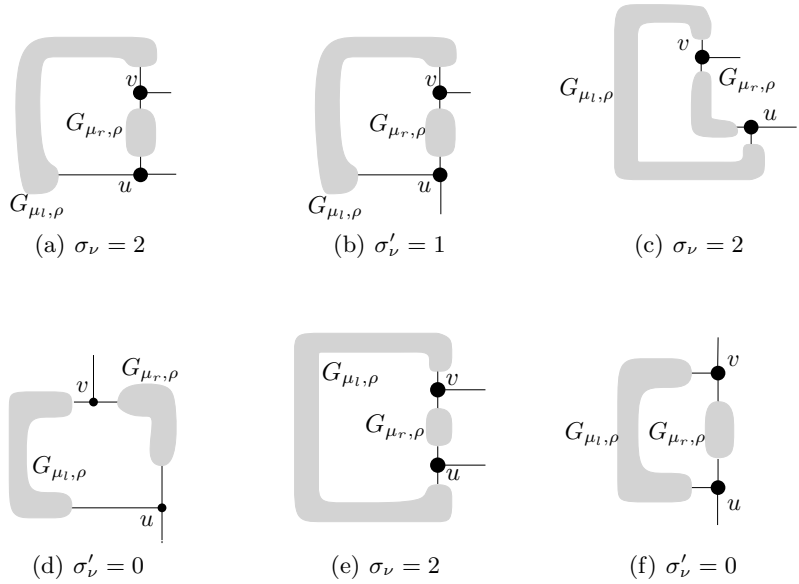

		\centering
		\subfigure[$\sigma_\nu=2$]{\includegraphics[width=0.27\columnwidth,page=9]{nojumping3}
			\label{fig:2childrenDiffspir34-a}
		}
		\hfil
		\subfigure[$\sigma'_\nu=1$]{\includegraphics[width=0.27\columnwidth,page=10]{nojumping3}
			\label{fig:2childrenDiffspir34-b}
		}
		\hfil
		\subfigure[$\sigma_\nu=2$]{\includegraphics[width=0.27\columnwidth,page=11]{nojumping3}
			\label{fig:2childrenDiffspir34-c}
		}
		\hfil
		\subfigure[$\sigma'_\nu=0$]{\includegraphics[width=0.27\columnwidth,page=12]{nojumping3}
			\label{fig:2childrenDiffspir34-d}
		}
		\hfil
		\subfigure[$\sigma_\nu=2$]{\includegraphics[width=0.27\columnwidth,page=13]{nojumping3}
			\label{fig:2childrenDiffspir34-e}
		}
		\hfil
		\subfigure[$\sigma'_\nu=0$]{\includegraphics[width=0.27\columnwidth,page=4]{nojumping3}
			\label{fig:2childrenDiffspir34-f}
		}
		\hfil
		\caption{Illustration for the proof Property~(a) of Lemma~\ref{le:intervalSupportFirst} for a P-component with two children for the cases $\sigma_{\mu_l}-\sigma_{\mu_r}=3$ and $\sigma_{\mu_l}-\sigma_{\mu_r}=4$.}
		\label{fig:2childrenDiffspir34}
	\end{figure}
	
	\noindent
	Finally, suppose $\sigma_{\mu_l}-\sigma_{\mu_r}=4$; see Fig.~\ref{fig:2childrenDiffspir34-e}. We have $\sigma_{\mu_l}=4$ and $\sigma_{\mu_r}=0$. 
	By inductively using Property~(b), $G_{\mu_l,\rho}$ admits spirality 2. For $\alpha_u^l=\alpha_v^l=0$, by Lemma~\ref{le:spirality-P-node-2-children}, $G_{\nu,\rho}$
	admits spirality 0; see Figure~\ref{fig:2childrenDiffspir34}(f).

	\smallskip\noindent
	\textsf{Proof of Property (b).}  If $\sigma_\nu>2$, we have three cases: $\sigma_\nu=3$; $\sigma_\nu=4$; $\sigma_\nu>4$.  
	
	\smallskip\noindent
	{-- $\sigma_\nu=3$}. As before, we perform a case analysis based on the value of $\sigma_{\mu_l}-\sigma_{\mu_r}$. Suppose first that $\sigma_{\mu_l}-\sigma_{\mu_r}=2$. There are three subcases:
	\begin{itemize}
		\item[(i)] If $\sigma_{\mu_l}=3$ and $\sigma_{\mu_r}=1$, we have $\alpha_u^r=\alpha_v^r=0$. For $\alpha_u^r=\alpha_v^r=1$ and  $\alpha_u^l=\alpha_v^l=0$, by Lemma~\ref{le:spirality-P-node-2-children}, $G_{\nu,\rho}$ admits spirality $\sigma_\nu - 2 = 1$.
		\item[(ii)] If $\sigma_{\mu_l}=4$ and $\sigma_{\mu_r}=2$, by inductively using Property~(c), $G_{\mu_l,\rho}$ admits spirality 0. Exchanging $G_{\mu_l,\rho}$ and $G_{\mu_r,\rho}$ in the left-to-right order and for $\alpha_u^l=\alpha_v^r=0$, by Lemma~\ref{le:spirality-P-node-2-children}, $G_{\nu,\rho}$ admits spirality $\sigma_\nu - 2 = 1$.  
		\item[(iii)] If $\sigma_{\mu_l}=5$ and $\sigma_{\mu_r}=3$, by inductively using Property~(b), $G_{\mu_l,\rho}$ and $G_{\mu_r,\rho}$ admit spiralities $\sigma_{\mu_l}-2$ and $\sigma_{\mu_r}-2$, respectively. Hence $G_{\nu,\rho}$ admits spirality $\sigma_\nu - 2 = 1$.
	\end{itemize}

	Suppose now that $\sigma_{\mu_l}-\sigma_{\mu_r}=3$. As in proof for Property~(a), assume, without loss of generality, that $\alpha_v^l=\alpha_v^r=1$. The following subcases hold:
	\begin{itemize}
		\item[(iv)] If $\sigma_{\mu_l}=4$ and $\sigma_{\mu_r}=1$, by inductively using Property~(b), $G_{\mu_l,\rho}$ admits spirality 2. Also, $G_{\mu_r}$ admits spirality -1. For $\alpha_u^l=0$, by Lemma~\ref{le:spirality-P-node-2-children}, $G_{\nu,\rho}$ admits spirality $\sigma_\nu- 2 = 1$.
		\item[(v)] If $\sigma_{\mu_l}=5$ and $\sigma_{\mu_r}=2$, by inductively using Property~(a), $G_{\mu_r}$ admits spirality either 1 or 0. Suppose first that $G_{\mu_r}$ admits spirality 1. By inductively using Property~(b), $G_{\mu_l,\rho}$ admits spirality 3. For $\alpha_v^r=\alpha_u^r=0$, by Lemma~\ref{le:spirality-P-node-2-children}, $G_{\nu,\rho}$ admits spirality $\sigma_\nu- 2 = 1$. Suppose now that $G_{\mu_r,\rho}$ admits spirality 0. As before, $G_{\mu_l,\rho}$ admits spirality 3. For $\alpha_u^r=0$, we have again that $G_{\nu,\rho}$ admits spirality $\sigma_\nu- 2 = 1$; see Fig.~\ref{fig:2childrenDiffspir34}(b). 
	\end{itemize}
	
	Suppose finally that $\sigma_{\mu_l}-\sigma_{\mu_r}=4$. We have $\sigma_{\mu_l}=5$ and $\sigma_{\mu_r}=1$. By inductively using Property~(b), $G_{\mu_l,\rho}$ admits spirality 3, and then for $\alpha_v^r=0$ and $\alpha_u^r=0$, we have that $G_{\nu,\rho}$ admits spirality $\sigma_\nu- 2 = 1$.
	
	\smallskip\noindent
	{-- $\sigma_\nu=4$}. Consider the subcase where $\sigma_{\mu_l}-\sigma_{\mu_r}=2$. 
	\begin{itemize}
		\item[(vi)] If $\sigma_{\mu_l}=4$ and $\sigma_{\mu_r}=2$, we have $\alpha_u^l=\alpha_v^l=0$. For $\alpha_u^l=\alpha_v^l=1$ and $\alpha_u^r=\alpha_v^r=0$, by  Lemma~\ref{le:spirality-P-node-2-children}, $G_{\nu,\rho}$ admits spirality $\sigma_\nu - 2 =2$.
		\item[(vii)] If $\sigma_{\mu_l}=5$ and $\sigma_{\mu_r}=3$ or $\sigma_{\mu_l}=6$ and $\sigma_{\mu_r}=4$, by inductively using Property~(b), $G_{\mu_l,\rho}$ and $G_{\mu_r,\rho}$ admit spiralities $\sigma_{\mu_l} - 2$ and $\sigma_{\mu_r} - 2$, respectively. Hence, $G_{\nu,\rho}$ admits spirality $\sigma_\nu - 2 = 2$.    
	\end{itemize}
	
	Suppose now that $\sigma_{\mu_l}-\sigma_{\mu_r}=3$. 
	\begin{itemize}
		\item[(viii)] If $\sigma_{\mu_l}=5$ and $\sigma_{\mu_r}=2$, by inductively using Property~(b), $G_{\mu_l}$ admits spirality 3. Also, by inductively using Property~(a), $G_{\mu_r}$ admits spirality either 0 or 1. In the first case, for $\alpha_u^l=0$, we have that $G_{\nu,\rho}$ admits spirality $\sigma_\nu - 2 = 2$; see Fig.~\ref{fig:2childrenDiffspir34-a} the property holds for $\alpha_u^r=0$; Fig.~\ref{fig:2childrenDiffspir34-c}.
		
		\item[(ix)] If $\sigma_{\mu_l}=6$ and $\sigma_{\mu_r}=3$, by inductively using Property~(b), $G_{\mu_l,\rho}$ and $G_{\mu_r,\rho}$ admit spiralities $\sigma_{\mu_l} - 2$ and $\sigma_{\mu_r} - 2$, respectively. Hence, $G_{\nu,\rho}$ admits spirality $\sigma_\nu - 2 = 2$.
	\end{itemize}
	
	Suppose finally that $\sigma_{\mu_l}-\sigma_{\mu_r}=4$; we have $\sigma_{\mu_l}=6$ and $\sigma_{\mu_r}=2$. By Property~(b), $G_{\mu_l}$ admits spirality 4, and by for $\alpha_u^l=\alpha_v^l=1$, $G_{\nu,\rho}$ admits spirality $\sigma_\nu - 2 = 2$; see Fig.~\ref{fig:2children2-0}(e).  
	
	\smallskip\noindent		
	{-- $\sigma_\nu>4$}. We always have $\sigma_{\mu_r}>2$ (and $\sigma_{\mu_l}>2$). 
	By inductively using Property~(b), $G_{\mu_l,\rho}$ and $G_{\mu_r,\rho}$ admit spiralities $\sigma_{\mu_l} - 2$ and $\sigma_{\mu_r} - 2$, respectively. Hence, $G_{\nu,\rho}$ admits spirality $\sigma_\nu - 2 = 2$.

	\smallskip\noindent
	\textsf{Proof of Property (c).} If $\sigma_\nu=4$, we still consider perform a case analysis based on the value of $\sigma_{\mu_l}-\sigma_{\mu_r}$. Suppose first that $\sigma_{\mu_l}-\sigma_{\mu_r}=2$. There are three subcases.
	\begin{itemize}
		\item[(i)] Suppose $\sigma_{\mu_l}=4$ and $\sigma_{\mu_r}=2$. By inductively using Property~(c), $G_{\mu_l}$ admits spirality 0. Exchanging $G_{\mu_l,\rho}$ and $G_{\mu_r,\rho}$, and for $\alpha_u^l=\alpha_v^l=0$, we have that $G_{\nu,\rho}$ admits spirality 0.
		
		\item[(ii)] Suppose $\sigma_{\mu_l}=5$ and $\sigma_{\mu_r}=3$. By inductively using 
		Property~(b) (applied twice), $G_{\mu_l,\rho}$ and $G_{\mu_r,\rho}$ admit spirality 1; hence,  $G_{\mu_r}$ also admits -1. For $\alpha_v^r=\alpha_u^l=0$, we have that $G_{\nu,\rho}$ admits spirality 0; see Fig.~\ref{fig:2children2-0-d}.
		
		\item[(iii)] Suppose $\sigma_{\mu_l}=6$ and $\sigma_{\mu_r}=4$. 
		By inductively using Property~(b) (applied twice), $G_{\mu_l,\rho}$ admits spirality 2, and hence, by inductively using Property~(c), it also admits spirality 0. For $\alpha_u^l=\alpha_v^l=0$, $G_{\nu,\rho}$ admits spirality 0; see Fig.~\ref{fig:2children2-0-b}.
	\end{itemize}
	
	Suppose now that $\sigma_{\mu_l}-\sigma_{\mu_r}=3$. We have the following subcases. 
	\begin{itemize}
		\item[(iv)] Suppose $\sigma_{\mu_l}=5$ and $\sigma_{\mu_r}=2$. By inductively using 
		Property~(b) (applied twice), $G_{\mu_l,\rho}$ admits spirality 1. Also, $G_{\mu_l,\rho}$ admits spirality -2. For $\alpha_v^l=0$, $G_{\nu,\rho}$ admits spirality 0.
		
		\item[(v)] Suppose $\sigma_{\mu_l}=6$ and $\sigma_{\mu_r}=3$. By inductively using 
		Property~(b) (applied twice), $G_{\mu_l,\rho}$ admits spirality 2 and $G_{\mu_r,\rho}$ admits spirality 1, and hence also spirality -1. For $\alpha_u^l=0$, we have that $G_{\nu,\rho}$ admits spirality 0.
	\end{itemize}
	
	Finally, suppose that $\sigma_{\mu_l}-\sigma_{\mu_r}=4$. We have $\sigma_{\mu_l}=6$ and $\sigma_{\mu_r}=4$. By inductively using Property~(b) (applied twice), $G_{\mu_l,\rho}$ admits spirality 2, and by inductively using Property~(c), $G_{\mu_r,\rho}$ admits spirality 0. For $\alpha_u^l=\alpha_v^l=0$, $G_{\nu,\rho}$ admits spirality 0; see Fig.~\ref{fig:2children2-0-b}.

\end{proof}


\noindent Lemma~\ref{le:intervalSupportFirst} immediately implies Corollary~\ref{co:intervalSupport}.

\begin{corollary}\label{co:intervalSupport}
	If $G_{\nu,\rho}$ admits spirality $\sigma_\nu > 2$, $G_{\nu,\rho}$ admits spirality for every value in $[1,\sigma_\nu]^2$, when $\sigma_\nu$ is odd, or for every value in $[0,\sigma_\nu]^2$, when $\sigma_\nu$ is even.
\end{corollary}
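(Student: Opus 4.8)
The plan is to obtain Corollary~\ref{co:intervalSupport} directly from Lemma~\ref{le:intervalSupportFirst}, reading part~(b) of that lemma as a ``descent-by-two'' rule and using part~(c) to reach the bottom of the relevant jump-2 interval. First I would record the consequence of iterating Lemma~\ref{le:intervalSupportFirst}(b): whenever $G_{\nu,\rho}$ admits a spirality $x>2$ it also admits $x-2$, so a trivial induction on $k$ shows that $G_{\nu,\rho}$ admits $\sigma_\nu,\sigma_\nu-2,\sigma_\nu-4,\dots$, the descent being legitimate at every step because the value being reduced stays at least $3$. This chain produces, in particular, the smallest value of the same parity as $\sigma_\nu$ that is still $>2$ (namely $3$ if $\sigma_\nu$ is odd and $4$ if $\sigma_\nu$ is even), and one final application of (b) to that value yields spirality $1$ in the odd case and spirality $2$ in the even case.

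Then I would split on the parity of $\sigma_\nu$. If $\sigma_\nu$ is odd, the chain above already certifies that $G_{\nu,\rho}$ admits every value of $\{1,3,\dots,\sigma_\nu\}=[1,\sigma_\nu]^2$, which is exactly the claim. If $\sigma_\nu$ is even, then $\sigma_\nu>2$ forces $\sigma_\nu\ge 4$, so the value $4$ occurs among the admitted spiralities produced above (it is either $\sigma_\nu$ itself or equals $\sigma_\nu-2k$ for $k=(\sigma_\nu-4)/2$); applying Lemma~\ref{le:intervalSupportFirst}(c) to it shows that $G_{\nu,\rho}$ admits spirality $0$. Combined with the even values $2,4,\dots,\sigma_\nu$ already obtained, this gives all of $\{0,2,\dots,\sigma_\nu\}=[0,\sigma_\nu]^2$, completing the proof.

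There is essentially no obstacle here: once Lemma~\ref{le:intervalSupportFirst} is available, the corollary is pure bookkeeping. The only points to be careful about are the parity casework, verifying that the value $4$ is genuinely reached in the even case (guaranteed by $\sigma_\nu$ even together with $\sigma_\nu>2$), and observing that part~(a) of the lemma is not needed for the corollary — part~(c) delivers spirality $0$ directly, and the only situation in which (a) would be the sole available tool is $\sigma_\nu=2$, which is excluded by the hypothesis $\sigma_\nu>2$.
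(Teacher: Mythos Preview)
Your proposal is correct and is exactly what the paper intends: it states that Lemma~\ref{le:intervalSupportFirst} ``immediately implies'' Corollary~\ref{co:intervalSupport} without spelling out the argument, and your iteration of part~(b) followed by an appeal to part~(c) in the even case is precisely the unpacking of that implication.
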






\begin{figure}[t]
	\centering
	\subfigure[]{\includegraphics[height=0.179\columnwidth,page=1]{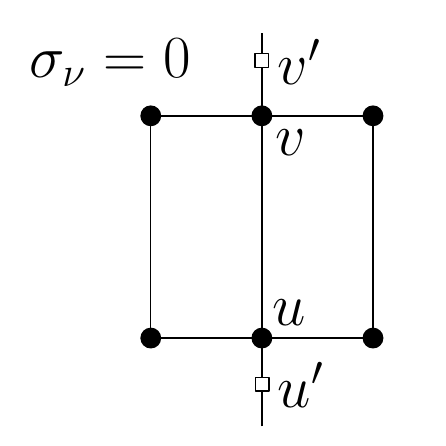}
	\label{fi:intervals-new-1}
	}
	\hfil
	\subfigure[]{\includegraphics[height=0.179\columnwidth,page=6]{intervals-new.pdf}
	\label{fi:intervals-new-6}
	}\\
	\hfil
	\subfigure[]{\includegraphics[height=0.239\columnwidth,page=2]{intervals-new.pdf}
	\label{fi:intervals-new-2}
	}
	\hfil
	\subfigure[]{\includegraphics[height=0.239\columnwidth,page=5]{intervals-new.pdf}
	\label{fi:intervals-new-5}
	}
	\hfil
	\subfigure[]{\includegraphics[height=0.239\columnwidth,page=3]{intervals-new.pdf}
	\label{fi:intervals-new-3}
	}
	\hfil
	\subfigure[]{\includegraphics[height=0.169\columnwidth,page=4]{intervals-new.pdf}
	\label{fi:intervals-new-4}
	}
	\caption{Examples of non-negative spirality sets for each of the six structures in Theorem~\ref{th:spirality-sets}: (a) $[0]$; (b) $[1]$; (c) $[1,2]^1$; (d) $[0,2]^1$; (e) $[1,3]^2$; (f) $[0,2]^2$.}
	\label{fi:intervals-new}
\end{figure}

The next lemma states an interesting property that is used to prove Lemma~\ref{le:intervalSupportThird}.

\begin{restatable}{lemma}{leIntervalsSupportSecond}\label{le:intervalSupportSecond}
	Let $\nu$ be a P-node with two children and suppose that $G_{\nu,\rho}$ admits spirality $\sigma_\nu \geq 0$. There exists a rectilinear planar representation of $G_{\nu,\rho}$ with spirality $\sigma_\nu$ such that the difference of spirality between the left child component and the right child component of $G_{\nu,\rho}$ is either 2 or 3.
\end{restatable}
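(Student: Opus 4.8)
The plan is to work directly from \cref{le:spirality-P-node-2-children}. Fix a rectilinear planar representation $H_{\nu,\rho}$ of $G_{\nu,\rho}$ realizing the given spirality $\sigma_\nu\ge 0$, let $\mu_l,\mu_r$ be its two children in this left-to-right order, and let $\sigma_{\mu_l},\sigma_{\mu_r}$ be their spiralities in $H_{\nu,\rho}$. \cref{le:spirality-P-node-2-children} supplies angles $\alpha_u^l,\alpha_u^r,\alpha_v^l,\alpha_v^r\in\{0,1\}$ with $\alpha_w^l+\alpha_w^r\in\{1,2\}$ for $w\in\{u,v\}$ and $\sigma_\nu=\sigma_{\mu_l}-\alpha_u^l-\alpha_v^l=\sigma_{\mu_r}+\alpha_u^r+\alpha_v^r$; subtracting the two expressions gives $\sigma_{\mu_l}-\sigma_{\mu_r}=(\alpha_u^l+\alpha_u^r)+(\alpha_v^l+\alpha_v^r)\in\{2,3,4\}$. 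If this gap is $2$ or $3$ we are already done, so the whole work is to handle the case in which every realization of $\sigma_\nu$ forces child-spirality gap $4$. In that case necessarily $\alpha_u^l=\alpha_u^r=\alpha_v^l=\alpha_v^r=1$, hence $\sigma_{\mu_l}=\sigma_\nu+2$ and $\sigma_{\mu_r}=\sigma_\nu-2$, and, since $H_{\nu,\rho}$ is a component of some rectilinear planar representation of $G$, both child components admit these two spiralities.

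Next I would \emph{re-balance} such a gap-$4$ representation using \cref{le:intervalSupportFirst}. Since $\sigma_{\mu_l}=\sigma_\nu+2\ge 2$, that lemma lets me lower the left child's spirality: if $\sigma_\nu\ge 1$ then $\sigma_{\mu_l}>2$, so by \cref{le:intervalSupportFirst}(b) the component $G_{\mu_l,\rho}$ also admits spirality $\sigma_{\mu_l}-2=\sigma_\nu$. Feeding into \cref{le:spirality-P-node-2-children} the values $\sigma'_{\mu_l}=\sigma_\nu$, $\sigma'_{\mu_r}=\sigma_\nu-2$ together with the angles $\alpha_u^l=\alpha_v^l=0$ and $\alpha_u^r=\alpha_v^r=1$ (a trivial check of conditions (ii)--(iii)) produces a new rectilinear planar representation of $G_{\nu,\rho}$ that still has spirality $\sigma_\nu$ but now has child-spirality gap exactly $2$.

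The remaining case $\sigma_\nu=0$ is what I expect to be the real obstacle, and it is precisely why the statement allows a gap of $3$ rather than only $2$: here $\sigma_{\mu_l}=2$, and \cref{le:intervalSupportFirst}(a) only guarantees that $G_{\mu_l,\rho}$ admits spirality $0$ \emph{or} spirality $1$, not necessarily $0$. If it admits $0$, I recombine exactly as above (values $0$ and $-2$, angles $\alpha_u^l=\alpha_v^l=0$, $\alpha_u^r=\alpha_v^r=1$) and get gap $2$. If it does not admit $0$, then it admits $1$, and I instead recombine $G_{\mu_l,\rho}$ at spirality $1$ with $G_{\mu_r,\rho}$ at spirality $-2$, choosing $\alpha_u^l=1$, $\alpha_v^l=0$, $\alpha_u^r=\alpha_v^r=1$; \cref{le:spirality-P-node-2-children} then certifies spirality $0$ for $G_{\nu,\rho}$ with child-spirality gap $1-(-2)=3$. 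A short verification that the chosen $\alpha$-values satisfy constraint (ii) of \cref{le:spirality-P-node-2-children} in each subcase completes the argument.
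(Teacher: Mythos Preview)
Your proposal is correct and follows essentially the same approach as the paper's own proof: start from an arbitrary representation, note that the gap lies in $\{2,3,4\}$, and in the gap-$4$ case use \cref{le:intervalSupportFirst}(b) (when $\sigma_\nu\ge 1$) or \cref{le:intervalSupportFirst}(a) (when $\sigma_\nu=0$) to lower $\sigma_{\mu_l}$ by two or one and then reassemble via \cref{le:spirality-P-node-2-children} with suitably adjusted $\alpha$-values. The only cosmetic differences are the order in which you treat the two subcases at $\sigma_\nu=0$ and your phrasing ``every realization forces gap $4$'' where you really just mean the chosen representation has gap $4$; neither affects the argument.
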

\begin{proof}
	Let $H_{\nu,\rho}$ be any rectilinear planar representation of $G_{\nu,\rho}$ with spirality~$\sigma_\nu$. Also, let $\sigma_{\mu_l}$ and $\sigma_{\mu_r}$ be the spiralities of the left child component $H_{\mu_l,\rho}$ and of the right child component $H_{\mu_r,\rho}$ of $H_{\nu,\rho}$, respectively. Let $G_{\mu_l,\rho}$ and $G_{\mu_r,\rho}$ be the underlying graphs of $H_{\mu_l,\rho}$ and $H_{\mu_r,\rho}$.    
	By Lemma~\ref{le:spirality-P-node-2-children}, we have $2 \leq \sigma_{\mu_l}-\sigma_{\mu_r} \leq 4$. We show that if $\sigma_{\mu_l}-\sigma_{\mu_r} = 4$, one can construct a representation $H'_{\nu,\rho}$ of $G_{\nu,\rho}$ with spirality $\sigma'_\nu=\sigma_\nu$ such $\sigma'_{\mu_l}-\sigma'_{\mu_r}\in [2,3]$. 
	Since $\sigma_{\mu_l}-\sigma_{\mu_r}=4$, we have 	${\alpha}_u^l={\alpha}_v^l={\alpha}_u^r={\alpha}_v^r=1$, where $u$ and $v$ are the poles of $\nu$. We distinguish between two cases:
	
	\smallskip\noindent{-- Case $\sigma_\nu=0$:} In this case, $\sigma_{\mu_l}=2$ and $\sigma_{\mu_r}=-2$. See Fig.~\ref{fi:spir-nodiff4-a}. By Property~(a) of Lemma~\ref{le:intervalSupportFirst}, both $G_{\mu_l,\rho}$ and $G_{\mu_r,\rho}$ admit spirality 0 or 1. Assume first that $G_{\mu_l,\rho}$ admits spirality~1. We can construct $H'_{\nu,\rho}$ by merging in parallel two representations $H'_{\mu_l,\rho}$ of $G_{\mu_l,\rho}$ and $H'_{\mu_r,\rho}$ of $G_{\mu_r,\rho}$ (in the same left-to-right order they have in $H_{\nu,\rho}$) in such a way that: $H'_{\mu_l,\rho}$ has spirality $\sigma'_{\mu_l}=1$, $\sigma'_{\mu_r}=\sigma_{\mu_r}=-2$, ${\alpha'}_u^l=0$, and ${\alpha'}_v^l={\alpha'}_u^r={\alpha'}_v^r=1$; see Figure~\ref{fi:spir-nodiff4-b}. Assume now that $G_{\mu_l,\rho}$ does not admit spirality~1 but admits spirality~0. We can construct $H'_{\nu,\rho}$ by merging in parallel two representations $H'_{\mu_l,\rho}$ of $G_{\mu_l,\rho}$ and $H'_{\mu_r,\rho}$ of $G_{\mu_r,\rho}$ (in the same left-to-right order they have in $H_{\nu,\rho}$) in such a way that: $H'_{\mu_l,\rho}$ has spirality $\sigma'_{\mu_l}=0$, $\sigma'_{\mu_r}=\sigma_{\mu_r}=-2$, ${\alpha'}_u^l={\alpha'}_v^l=0$, and  ${\alpha'}_u^r={\alpha'}_v^r=1$; see Fig.~\ref{fi:spir-nodiff4-c}. In both cases $H'_{\nu,\rho}$ has spirality $\sigma'_\nu = \sigma_\nu$ and $\sigma'_{\mu_l}-\sigma'_{\mu_r} \in [2,3]$.  
	
	\smallskip\noindent{-- Case $\sigma_\nu>0$:} In this case, $\sigma_{\mu_l} > 3$ (because $\sigma_\nu = \sigma_{\mu_l} - {\alpha}_u^l - {\alpha}_v^l$ by Lemma~\ref{le:spirality-P-node-2-children}, and ${\alpha}_u^l + {\alpha}_v^l = 2$ by hypothesis). See Fig.~\ref{fi:spir-nodiff4-d}, where $\sigma_\nu=2$. Hence, by Property~(b) of Lemma~\ref{le:intervalSupportFirst}, $G_{\mu_l,\rho}$ admits spirality $\sigma'_{\mu_l}=\sigma_{\mu_l}-2$. We can construct $H'_{\nu,\rho}$ by merging in parallel two representations $H'_{\mu_l,\rho}$ of $G_{\mu_l,\rho}$ and $H'_{\mu_r,\rho}$ of $G_{\mu_r,\rho}$ (in the same left-to-right order they have in $H_{\nu,\rho}$) in such a way that: $H'_{\mu_l,\rho}$ has spirality $\sigma'_{\mu_l}=\sigma_{\mu_l}-2$, $\sigma'_{\mu_r}=\sigma_{\mu_r}$, ${\alpha'}_u^l={\alpha'}_v^l=0$, and ${\alpha'}_u^r={\alpha'}_v^r=1$. This way, $H_{\nu,\rho}$ has spirality $\sigma'_\nu=\sigma_\nu$ and $\sigma'_{\mu_l}-\sigma'_{\mu_r} = 2$.  See Fig.~\ref{fi:spir-nodiff4-e}, where $\sigma_\nu=2$.
\end{proof}

\begin{figure}[tb]
	\centering
	\subfigure[]{\label{fi:spir-nodiff4-a}\includegraphics[width=0.3\columnwidth,page=1]{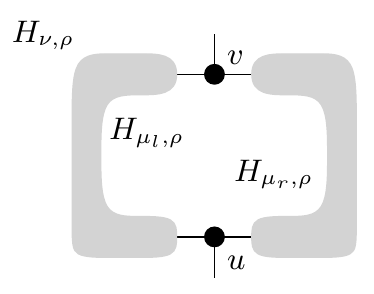}}
	\hfil
	\subfigure[]{\label{fi:spir-nodiff4-b}\includegraphics[width=0.3\columnwidth,page=2]{spir-nodiff4}}
	\hfil
	\subfigure[]{\label{fi:spir-nodiff4-c}\includegraphics[width=0.3\columnwidth,page=3]{spir-nodiff4}}
	\hfil
	\subfigure[]{\label{fi:spir-nodiff4-d}\includegraphics[width=0.3\columnwidth,page=4]{spir-nodiff4}}
	\hfil
	\subfigure[]{\label{fi:spir-nodiff4-e}\includegraphics[width=0.3\columnwidth,page=5]{spir-nodiff4}}
	\caption{Illustration for the proof of \cref{le:intervalSupportFirst}.}\label{fi:external-face}
\end{figure}

\begin{restatable}{lemma}{leIntervalsSupportThird}\label{le:intervalSupportThird}
	Let $\Sigma^+_{\nu,\rho}$ be a non-trivial interval with maximum value $M > 2$. If $\Sigma^+_{\nu,\rho}$ contains an integer with parity different from that of $M$, ~$\Sigma^+_{\nu,\rho}=[0,M]^1$.
\end{restatable}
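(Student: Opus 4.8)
The plan is to use Corollary~\ref{co:intervalSupport} and Lemma~\ref{le:intervalSupportFirst} to reduce the statement to a single small configuration, and then to rule that configuration out by a case analysis on the type of the node $\nu$ in $T_\rho$.

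First I would note that the hypotheses already fix the shape of $\Sigma^+_{\nu,\rho}$ up to its minimum value: a jump-$2$ interval consists of integers of a single parity, so a non-trivial interval that contains an integer of parity different from~$M$ must be a jump-$1$ interval, say $\Sigma^+_{\nu,\rho}=[m,M]^1=\{m,m+1,\dots,M\}$ with $0\le m<M$; in particular $M-1\in\Sigma^+_{\nu,\rho}$, and it remains to prove $m=0$. If $m=2$ then $2\in\Sigma^+_{\nu,\rho}$ and Lemma~\ref{le:intervalSupportFirst}(a) puts $0$ or $1$ into $\Sigma^+_{\nu,\rho}$, contradicting $m=2$; if $m\ge 3$, applying Lemma~\ref{le:intervalSupportFirst}(b) repeatedly starting from $M$ (and part~(a) if the value~$2$ is reached) again forces an element of $\{0,1\}$ into $\Sigma^+_{\nu,\rho}$, a contradiction. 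Hence $m\le 1$. Now apply Corollary~\ref{co:intervalSupport} to $\sigma_\nu=M>2$: $\Sigma^+_{\nu,\rho}$ contains every integer of the parity of~$M$ lying in $[0,M]$. Thus if $M$ is even then $0\in\Sigma^+_{\nu,\rho}$ and $m=0$; whereas if $M$ is odd and $M\ge 5$ then $M-1\in\Sigma^+_{\nu,\rho}$ is even and exceeds~$2$, so a second application of Corollary~\ref{co:intervalSupport} to $M-1$ again yields $0\in\Sigma^+_{\nu,\rho}$. The only case left is $M=3$ and $m=1$: one has to show that $\Sigma^+_{\nu,\rho}=\{1,2,3\}$ cannot occur, equivalently that a component of an \pisp that admits spiralities~$2$ and~$3$ also admits spirality~$0$.

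I would settle this residual claim by induction on the subtree of $T_\rho$ rooted at~$\nu$, distinguishing the type of $\nu$ as in the proof of Lemma~\ref{le:intervalSupportFirst}. If $\nu$ is a Q$^*$-node it is a chain, and a chain that admits spirality~$3$ consists of at least four edges, hence also admits spirality~$0$. If $\nu$ is an S-node, Lemma~\ref{le:spirality-S-node} identifies $\Sigma_{\nu,\rho}$ with the sumset of the (symmetric) spirality sets of its children; using the properties of child spirality sets provided by Lemma~\ref{le:intervalSupportFirst} and Corollary~\ref{co:intervalSupport} one checks that either some child has a spirality set containing $\{-1,0,1\}$, in which case the sumset contains~$0$; or some child has spirality set $\{-2,-1,1,2\}$, in which case either $\max\Sigma^+_{\nu,\rho}<3$ or again $0$ lies in the sumset; or every child spirality set consists of integers of a single parity, in which case so does $\Sigma_{\nu,\rho}$ and it cannot equal $\{1,2,3\}$. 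If $\nu$ is a P-node with three children, Lemma~\ref{le:spirality-P-node-3-children} forces a spirality-$3$ representation to have outer children of spiralities~$5$ and~$1$; Corollary~\ref{co:intervalSupport} constrains the structure of these children heavily, and a short case analysis then lets one re-assign the centre/left/right roles so as to obtain centre spirality~$0$ and outer spiralities~$\pm2$, i.e.\ $\sigma_\nu=0$. Finally, if $\nu$ is a P-node with two children, I would first use Lemma~\ref{le:intervalSupportSecond} to choose a spirality-$3$ representation in which the left--right spirality difference is $2$ or~$3$, and then, arguing as in the P-node/two-children part of Lemma~\ref{le:intervalSupportFirst} --- manipulating the outside-angle values $\alpha_w^l,\alpha_w^r$ and, when necessary, lowering the left child's spirality by~$2$ via Lemma~\ref{le:intervalSupportFirst}(b) --- build a representation with $\sigma_\nu=0$, contradicting $0\notin\Sigma^+_{\nu,\rho}$.

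The hard part will be the two P-node cases of this residual step. As in the long proof of Lemma~\ref{le:intervalSupportFirst}, each of them splits into many sub-sub-cases indexed by the child spiralities and the angle values, and in each one must verify simultaneously that the target child spiralities genuinely belong to the admissible sets of the corresponding child components --- this is where Corollary~\ref{co:intervalSupport}, Lemma~\ref{le:intervalSupportFirst} and Lemma~\ref{le:intervalSupportSecond} are used --- and that Lemma~\ref{le:spirality-P-node-2-children} (resp.\ Lemma~\ref{le:spirality-P-node-3-children}) can be met with compatible outside angles. Keeping this bookkeeping consistent, rather than any single new idea, is the crux of the argument.
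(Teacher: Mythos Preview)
Your opening step assumes something the lemma is meant to help establish. You read the hypothesis ``$\Sigma^+_{\nu,\rho}$ is a non-trivial interval'' as asserting that $\Sigma^+_{\nu,\rho}$ already has one of the three shapes $[M]$, $[m,M]^1$, $[m,M]^2$ defined before Theorem~\ref{th:spirality-sets}, and from this you immediately conclude $\Sigma^+_{\nu,\rho}=[m,M]^1$ and in particular $M-1\in\Sigma^+_{\nu,\rho}$. But Lemma~\ref{le:intervalSupportThird} is one of the ingredients in the proof of Theorem~\ref{th:spirality-sets}, which is precisely where one shows that $\Sigma^+_{\nu,\rho}$ must have one of those shapes. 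At the point where the lemma is invoked (the last paragraph of the proof of Theorem~\ref{th:spirality-sets}), all that is known is $M>2$ together with the existence of \emph{some} $\sigma\in\Sigma^+_{\nu,\rho}$ of parity opposite to $M$; nothing tells you that $\sigma=M-1$, nor that the set is an interval of any kind. Without $M-1\in\Sigma^+_{\nu,\rho}$ your reduction to the single ``residual'' configuration $M=3$ collapses: for instance, with $M=7$ and $\sigma=2$, Corollary~\ref{co:intervalSupport} and Lemma~\ref{le:intervalSupportFirst} only hand you $\{1,3,5,7\}$ together with one of $\{0,1,2\}$, and you have no way to reach $4$ or $6$.

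The paper avoids this circularity by running the induction on the depth of the subtree rooted at $\nu$ from the outset, for arbitrary $M>2$, and without ever assuming an interval shape for $\Sigma^+_{\nu,\rho}$. For an S-node it first proves (as a claim, using the inductive hypothesis on a \emph{child}) that some child $\mu$ is jump-$1$, and then uses that child to step down by one from every value in $[1,M]^2$. For a P-node with three children it shows all children are jump-$1$ and decreases each child's spirality by one; for a P-node with two children it uses Lemma~\ref{le:intervalSupportSecond} to get $M-1$ by adjusting one outside angle, and then Corollary~\ref{co:intervalSupport} applied to $M-1$ fills in the remaining parity. Your induction sketch for the $M=3$ case is close in spirit to this, but you need to run that argument for general $M$, not only after a reduction that already presupposes the conclusion of Theorem~\ref{th:spirality-sets}.
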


\begin{proof}
	Assume that $M$ is odd (if $M$ is even, the proof is similar). By hypothesis $M \geq 3$. 
	We prove that, if $\Sigma^+_\nu$ contains a value $\sigma_\nu$ whose parity is different from the one of $M$, then $\Sigma_{\nu,\rho}^+=[0,M]^1$. The proof is by induction on the depth of the subtree of $T_\rho$ rooted at $\nu$.
	If $\nu$ is a Q$^*$-node, then $\Sigma_{\nu,\rho}^+=[0,M]^1$ and the statement trivially holds. In the inductive case,
	$\nu$ is either an S-node or a P-node. By Corollary~\ref{co:intervalSupport}, $G_{\nu,\rho}$ admits spirality $\sigma'_\nu$ for every $\sigma'_\nu \in [1,M]^2$.
	Below, we analyze separately the case when $\nu$ is an S-node, a P-node with three children, or a P-node with two children.
	
	\smallskip \noindent
	\textsf{-- $\nu$ is an S-node.}    
	We prove that for any value $\sigma'_\nu \in [1,M]^2$, $G_{\nu,\rho}$ also admits spirality $\sigma'_\nu - 1$. This immediately implies that $\Sigma_{\nu,\rho}^+=[0,M]^1$. We first prove the following claim:
	\begin{claim}
		There exists a child $\mu$ of $\nu$ in $T_{\rho}$ that is jump-1.
	\end{claim}
	\begin{claimproof}
		Let $H_{\nu,\rho}$ be a representation of $G_{\nu,\rho}$ with spirality $\sigma_\nu$ and let $H_{\nu,\rho}'$ be a representation of $G_{\nu,\rho}$ with spirality $\sigma'_\nu = \sigma_\nu+1$. Note that $\sigma'_\nu \in [1,M]^2$, thus
		$H_{\nu,\rho}'$ exists. By Lemma~\ref{le:spirality-S-node}, 
		since the spiralities of $H_{\nu,\rho}$ and of $H'_{\nu,\rho}$ have different parities, $\nu$ must have a child $\mu$ such that $H_{\mu,\rho}$ has odd spirality in $H_{\nu,\rho}$ and even spirality in $H'_{\nu,\rho}$, or vice versa. 
		Let $M_\mu$ be the maximum spirality admitted by $\mu$. Since $\mu$ admits both an even and an odd value of spirality, we have: If $M_\mu=1$, $\mu$ admits~$0$ and $\Sigma_{\mu,\rho}^+=[0,1]^1$; if $M_\mu= 2$, by Property~(a) of Lemma~\ref{le:intervalSupportFirst} and since $\mu$ admits spirality $1$, either $\Sigma_{\mu,\rho}^+=[0,2]^1$ or $\Sigma_{\mu,\rho}^+=[1,2]^1$; if $M_\mu>2$, by inductive hypotesis $\Sigma_{\mu,\rho}^+= [0,M_\mu]^1$. Hence, $\mu$ is always jump-1. 
	\end{claimproof}
	
	Let $\mu$ be a child of $\nu$ having a jump-1 interval, which always exists by the previous claim. For any value $\sigma'_\nu \in [1,M]^2$, let $H'_{\nu,\rho}$ be a rectilinear representation of $G_{\nu,\rho}$ with spirality $\sigma'_\nu$. 
	Let $\sigma_\mu$ be the spirality of the restriction of~$H'_{\nu,\rho}$ to~$G_{\mu,\rho}$. 
	Suppose first that $\sigma_\mu>-M_\mu$. Since by inductive hypothesis $\mu$ admits spirality $\sigma_\mu-1$ then, by Lemma~\ref{le:spirality-S-node}, $\nu$ admits $\sigma'_\nu-1$.
	Suppose now that $\sigma_\mu=-M_\mu$. Since $\sigma'_\nu > 0$, by Lemma~\ref{le:spirality-S-node}, there exists a child $\phi \neq \mu$ of $\nu$ such that the restriction of $H'_{\nu,\rho}$ to $G_{\phi,\rho}$ has spirality $\sigma_{\phi}> 0$.
	Observe that $\phi$ also admits either spirality $\sigma_{\phi}-1$ or spirality $\sigma_{\phi}-2$. Indeed, if $\sigma_{\phi}> 2$, then $\phi$ admits spirality $\sigma_{\phi}-2$ by Property~(b) of Lemma~\ref{le:intervalSupportFirst}; if $\sigma_{\phi}=2$ it also admits spirality 0 or 1 by Property~(a) of Lemma~\ref{le:intervalSupportFirst}; if $\sigma_{\phi}=1$ then it also admits spirality -1.    		
	In the case that $\phi$ admits spirality $\sigma_{\phi}-1$, by Lemma~\ref{le:spirality-S-node}, $\nu$ admits spirality $\sigma'_\nu-1$. In the case that $\phi$ admits spirality $\sigma_{\phi}-2$, then $\mu$ admits spirality $\sigma_\mu+1$ (because $\mu$ is jump-1 and we are assuming  $\sigma_\mu=-M_\mu < M_\mu$), and hence $\nu$ admits spirality $\sigma_\phi-2+1=\sigma'_\nu-1$.
	
	\smallskip\noindent
	\textsf{-- $\nu$ is a P-node with three children.} 
	In this case every child $\mu$ of $\nu$ is jump-1.
	Indeed, since $\nu$ admits an even and an odd value of spirality, by Lemma~\ref{le:spirality-P-node-3-children}, the same holds for $\mu$.  
	As for the case of an S-node, if $M_\mu$ is the maximum value of spirality admitted by $\nu$, we have the following: If $M_\mu=1$, $\Sigma_\mu^+=[0,1]^1$; if $M_\mu= 2$, either $\Sigma_{\mu,\rho}^+=[0,2]^1$ or $\Sigma_{\mu,\rho}^+=[1,2]^1$; if $M_\mu>2$, by inductive hypotesis $\Sigma_{\mu,\rho}^+= [0,M_\mu]^1$. Hence, $\mu$ is jump-1.
	
	Assume first that $M > 3$. Let $H_{\nu,\rho}$ be a representation of $G_{\nu,\rho}$ with spirality~$M$. By Lemma~\ref{le:spirality-P-node-3-children}, every child $\mu$ of $\nu$, is such that the restriction of $H_{\nu,\rho}$ to $G_{\mu,\rho}$ has spirality $\sigma_{\mu} \geq 2$. Since $\mu$ is jump-1, then $\mu$ also admits spirality $\sigma_{\mu} - 1$. This implies that, $\nu$ admits a representation with spirality $M-1$. Since $M-1>2$, by Corollary~\ref{co:intervalSupport}, $\nu$ admits all values of spirality in the set $[0,M-1]^2$, and hence  
	$\Sigma_{\nu,\rho}^+=[0,M-1]^2 \cup [1,M]^2 = [0,M]^1$.
	
	Assume now that $M=3$. Let $H_{\nu,\rho}$ be a representation of $G_{\nu,\rho}$ with spirality~$M$. The restrictions of $H_{\nu,\rho}$ to the three child components $G_{\mu_l,\rho}$, $G_{\mu_c,\rho}$, and $G_{\mu_r,\rho}$ of $G_{\nu,\rho}$, have spiralities 5, 3, and 1, respectively. Since $\mu_l$ is jump-1, by the inductive hypothesis it admits spirality for all values in the set $[0,5]^1$. Similarly, since $\mu_c$ is jump-1, by the inductive hypothesis it admits spirality for all values in the set $[0,3]^1$.  
	Also, since $\mu_r$ is jump-1, it admits spirality 0 or 2. If $\mu_r$ admits spirality 0, then $\nu$ admits spirality $M-1=2$ for a representation in which $G_{\mu_l,\rho}$, $G_{\mu_c,\rho}$, and $G_{\mu_r,\rho}$ appear in this left-to-right order (and have spiralities 4, 2, and 0, respectively). If $\mu_r$ admits spirality 2 but not spirality 0, then $\nu$ admits spirality $M-1=2$ for a representation in which $G_{\mu_l,\rho}$, $G_{\mu_r,\rho}$, and $G_{\mu_c,\rho}$ appear in this order (and again have spiralities 4, 2, and 0, respectively). Hence, so far we have proved that $\nu$ admits spirality for all values in the set $[1,3]^1$. Finally, as showed in the proof of Property~(a) of Lemma~\ref{le:intervalSupportFirst} for a P-node with three children, the fact that $\nu$ admits spirality 2 implies that it also admits spirality 0 (see Figs.~\ref{fig:3children2-0-a} and~\ref{fig:3children2-0-d}).  
	
	\smallskip \noindent
	\textsf{-- $\nu$ is a P-node with two children.} Let $H_{\nu,\rho}$ be a rectilinear planar representation of $G_{\nu,\rho}$ with spirality $M$. Let $\sigma_{\mu_l}$ and $\sigma_{\mu_r}$ be the spiralities of the restrictions of $H_{\nu,\rho}$ to the left and right child components $G_{\mu_l,\rho}$ and $G_{\mu_r,\rho}$ of $G_{\nu,\rho}$, respectively. Also, let $\{u,v\}$ be the poles of $\nu$.
	By Lemma~\ref{le:intervalSupportSecond}, we can assume $\sigma_{\mu_l}-\sigma_{\mu_r}\in [2,3]$, which implies that there exists $w\in \{u,v\}$ such that $\alpha_w^l=0$, as $H_{\nu,\rho}$ has the maximum value of spirality admitted by $\nu$. By Lemma~\ref{le:spirality-P-node-2-children}, for $\alpha_w^l=1$ and $\alpha_w^r=0$ we can obtain a rectilinear planar representation of $G_{\nu,\rho}$ with spirality $M-1$.
	If $M>3$ then $M-1>2$ and, by Corollary~\ref{co:intervalSupport}, $\nu$ admits spirality for all values in the set $[0,M-1]^2$, and hence $\Sigma_{\mu,\rho}^+=[0,M-1]^2 \cup [1,M]^2 = [0,M]^1$. 
	If $M=3$, by Property~(a) of Lemma~\ref{le:intervalSupportFirst}, we have either $[0,3]^1\in \Sigma_{\mu,\rho}^+$ or $[1,3]^1\in \Sigma_{\mu,\rho}^+$. In the former case, $\Sigma_{\mu}^+=[0,M]^1$. In the latter case, using a case analysis similar to the proof of Property~(a) of Lemma~\ref{le:intervalSupportFirst} for the P-nodes with two children, it can be proved that 0 is also admitted by $\nu$, and again $\Sigma_{\mu}^+=[0,M]^1$.  
	%
	%
\end{proof}

\myparagraph{Proof of Theorem~\ref{th:spirality-sets}.}	
	Let $M$ be the maximum value in $\Sigma^+_{\nu,\rho}$. If $M=0$ then $\Sigma^+_{\nu,\rho} = [0]$. If $M=1$ then either $\Sigma^+_{\nu,\rho} = [1]$ or $\Sigma^+_{\nu,\rho} = [0,1]^1 = [0,M]^1$. Suppose $M=2$; by Property~(a) of Lemma~\ref{le:intervalSupportFirst}, $G_{\nu,\rho}$ admits spirality 0, or 1, or both, i.e., $\Sigma^+_{\nu,\rho} = [0,2]^2=[0,M]^2$, or $\Sigma^+_{\nu,\rho} = [1,2]^1$, or $\Sigma^+_{\nu,\rho} = [0,2]^1=[0,M]^1$. Finally, suppose that $M > 2$. If $G_{\nu,\rho}$ admits a value of spirality whose parity is different from $M$, by Lemma~\ref{le:intervalSupportThird} $\Sigma^+_{\nu,\rho}=[0,M]^1$; else, by Corollary~\ref{co:intervalSupport}, either $\Sigma^+_{\nu,\rho}=[1,M]^2$ (if $M$ is odd) or $\Sigma^+_{\nu,\rho}=[0,M]^2$ (if $M$ is even).

\smallskip\noindent Figure~\ref{fi:intervals-new} shows examples of the non-negative spirality sets in Theorem~\ref{th:spirality-sets}.


\section{Rectilinear Planarity Testing}\label{se:testing}
Let $G$ be a biconnected \pisp that is not a simple cycle,~$T$ be its SPQ$^*$-tree, and $\{\rho_1, \dots, \rho_h\}$ be the Q$^*$-nodes of~$T$. For each possible choice of the root $\rho \in \{\rho_1, \dots, \rho_h\}$, the algorithm visits $T_\rho$ bottom-up in post-order and computes, for each visited node $\nu$, the non-negative spirality set  $\Sigma^+_{\nu,\rho}$, based on the sets of the children of $\nu$. $\Sigma^+_{\nu,\rho}$ is representative of all ``shapes'' that $G_{\nu,\rho}$ can take in a rectilinear planar representation of $G$ with the reference chain on the external face. The key lemmas used to show that we can efficiently execute this procedure over all SPQ$^*$-tree $T_\rho$ of~$G$ are Lemmas~\ref{le:timeQ}, \ref{le:timeS}, \ref{le:timeP}, and \ref{le:timeRoot}.
%
%
From now on, we say that a node $\nu$ in $T_\rho$ is \emph{trivial}, or \emph{jump-1}, or \emph{jump-2}, if $\Sigma^+_{\nu,\rho}$ is a trivial interval, or a jump-1 interval, or a jump-2 interval, respectively.


\smallskip
\myparagraph{Q$^*$-nodes.} Each chain of length $\ell$ can turn at most $\ell-1$ times (one turn for each vertex). Therefore, for a Q$^*$-node $\nu$ of $T_\rho$, we have $\Sigma^+_{\nu,\rho} = [0,\ell-1]^1$, and the following lemma holds, assuming that each Q$^*$-node is equipped with the length of its corresponding chain when we compute the SPQ$^*$-tree $T$ of~$G$.

%

\begin{restatable}{lemma}{letimeQ}\label{le:timeQ}
	Let $G$ be an \pisp, $T_\rho$ be a rooted SPQ$^*$-tree of $G$, and $\nu$ be a Q$^*$-node of $T_{\rho}$. The set $\Sigma^+_{\nu,\rho}$ can be computed in $O(1)$~time.
\end{restatable}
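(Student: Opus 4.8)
The statement is the (easy) base case of the composition machinery, and the work is mostly organizational. The plan is: (1)~recall why $\Sigma^+_{\nu,\rho}=[0,\ell-1]^1$, where $\ell$ is the length of the chain represented by $\nu$; (2)~observe that this set is a constant-size object; (3)~observe that $\ell$ is available at $\nu$ after the linear-time construction of $T$, so reading off the description $[0,\ell-1]^1$ costs $O(1)$.

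For~(1): the pertinent graph $G_{\nu,\rho}$ is precisely the chain of $\nu$, an open path whose $\ell-1$ internal vertices all have degree~$2$ in $G$. Since $\indeg_\nu(w)=1$ for each pole $w$ of $\nu$, the alias vertices coincide with the poles, so the spine of any rectilinear representation $H_{\nu,\rho}$ is the chain itself; hence $\sigma(H_{\nu,\rho})$ is the signed count of the turns taken at the $\ell-1$ internal vertices, each contributing a value in $\{-1,0,+1\}$, which gives $\Sigma^+_{\nu,\rho}\subseteq[0,\ell-1]^1$. The reverse inclusion holds because $G$ is rectilinear planar: starting from any rectilinear planar representation $H$ of $G$ and any target $\sigma\in[0,\ell-1]$, one re-routes the chain inside the two faces it separates so that its internal vertices realize a prescribed sequence of $\ell-1$ turns summing to $\sigma$, adjusting the angles at the poles and the bends on the rest of $G$ so that all of Tamassia's angle conditions~\cite{DBLP:journals/siamcomp/Tamassia87} remain satisfied (see also~\cite{DBLP:journals/siamcomp/BattistaLV98}). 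Thus $\Sigma^+_{\nu,\rho}=[0,\ell-1]^1$, as already noted before the statement of the lemma.

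For~(2)--(3): the set $[0,\ell-1]^1$ is fully specified by its two endpoints $0$ and $\ell-1$, i.e.\ by the single integer $\ell$ (it degenerates to the trivial interval $[0]$ when $\ell=1$). Since the SPQ$^*$-tree $T$ of $G$ is built in linear time and, as assumed, each maximal degree-$2$ chain is traversed once during this construction and its length stored at the corresponding Q$^*$-node, the value $\ell$ is available at $\nu$; returning the description $[0,\ell-1]^1$ therefore takes $O(1)$ time. Moreover this description is independent of the choice of the root $\rho$, so it need not be recomputed when we change $T_\rho$. I do not foresee any genuine obstacle; the only point that deserves a careful sentence is the reverse inclusion in~(1), since changing the spirality of the chain shifts the angle contribution it makes to each of its two incident faces, and that change must be compensated consistently in the remainder of the representation.
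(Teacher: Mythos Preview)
Your proposal is correct and matches the paper's approach: the paper's entire argument is the sentence immediately preceding the lemma, namely that a chain of length $\ell$ has $\ell-1$ internal degree-$2$ vertices, each contributing a turn in $\{-1,0,+1\}$, so $\Sigma^+_{\nu,\rho}=[0,\ell-1]^1$, and since $\ell$ is stored at the Q$^*$-node during the construction of $T$, the set is returned in $O(1)$ time. Your extra care about the reverse inclusion is more than the paper provides; note that it is really the substitution principle of~\cite{DBLP:journals/siamcomp/BattistaLV98} (two representations of the same component with the same spirality are interchangeable) together with the bottom-up nature of the algorithm that justifies treating $\Sigma^+_{\nu,\rho}$ locally, rather than any explicit re-routing inside a global representation of~$G$.
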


\myparagraph{S-nodes.} To prove Lemma~\ref{le:timeS} we first state a property of S-nodes of \pisps.

\begin{restatable}{lemma}{leSIntervalSupport}\label{le:SIntervalSupport}
	Let $\nu$ be an S-node of $T_\rho$. Node $\nu$ is jump-1 if and only if at least one of its children is jump-1. Also, $\Sigma^+_{\nu,\rho}=[1,2]^1$ if and only if $\nu$ has exactly one child with non-negative rectilinear spirality set $[1,2]^1$ and all the other children with non-negative rectilinear spirality set $[0]$.
\end{restatable}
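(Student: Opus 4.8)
The plan is to prove both biconditionals using Lemma~\ref{le:spirality-S-node}, which tells us that the spiralities of $G_{\nu,\rho}$ are exactly the sumsets of spiralities of the children. I would work with the full (symmetric) spirality sets $\Sigma_{\mu_i,\rho}$ rather than only the non-negative parts, since Minkowski sums interact cleanly with symmetric sets; recall each $\Sigma_{\mu_i,\rho}$ is symmetric about $0$ and, by Theorem~\ref{th:spirality-sets} together with the symmetry observation from Section~\ref{sse:intervals}, is one of a short list of shapes: $\{0\}$, $[-1,1]^2=\{-1,1\}$, $[-2,2]^1$, $[-M,M]^1$, $[-M,M]^2$, or $[-M,-1]^2\cup[1,M]^2$ (the last being the symmetric closure of $[1,M]^2$). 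The key elementary fact I will isolate first is: the Minkowski sum of finitely many symmetric integer sets of these types is jump-1 if and only if at least one summand is jump-1; indeed, a jump-2 set $S$ (containing only even values, or only odd values) added to another jump-2 set stays jump-2 (evens + evens = evens, evens + odds = odds, odds + odds = evens), while adding even a single jump-1 summand — which contains two consecutive integers — immediately fills in both parities at every reachable level, making the sum jump-1; and a trivial summand $\{0\}$ or $\{M\}$ never changes the "jumpiness" since it is a translate. This gives the first sentence of the lemma directly, because $\Sigma^+_{\nu,\rho}$ is jump-1 iff $\Sigma_{\nu,\rho}$ is.

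For the second biconditional, the "if" direction is easy: if exactly one child $\mu$ has $\Sigma^+_{\mu,\rho}=[1,2]^1$ (so $\Sigma_{\mu,\rho}=\{-2,-1,1,2\}$) and all other children are $[0]$ (contributing nothing to the sum), then $\Sigma_{\nu,\rho}=\{-2,-1,1,2\}$, hence $\Sigma^+_{\nu,\rho}=[1,2]^1$. For the "only if" direction, suppose $\Sigma^+_{\nu,\rho}=[1,2]^1$, i.e. $\Sigma_{\nu,\rho}=\{-2,-1,1,2\}$. Note that $0\notin\Sigma_{\nu,\rho}$: by Lemma~\ref{le:spirality-S-node} this means no combination of child spiralities sums to $0$; in particular, not every child can admit spirality $0$ (otherwise the all-zeros choice gives $\sigma_\nu=0$). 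So there is a child with $0\notin\Sigma_{\mu,\rho}$, and by the list of shapes the only symmetric candidate missing $0$ is $\{-1,1\}$ or the symmetric closure of $[1,M]^2$ with $M$ odd — but any such set has two elements summing to $0$ (namely $a$ and $-a$), contradiction unless there is exactly one element of each sign. Combining: at least one child $\mu$ has $0\notin\Sigma_{\mu,\rho}$, and picking $\sigma_{\mu}\in\Sigma_{\mu,\rho}$ together with $0$ for the (necessarily existing, from the parity argument) zero-admitting children forces $\sigma_{\mu}\in\{-2,-1,1,2\}$; a short case check on which shapes have their entire nonzero part inside $[-2,2]$ leaves only $\{-1,1\}$ and $\{-2,-1,1,2\}$ for that child, and then a counting/parity argument on the remaining children pins down that $\mu$ must be $[1,2]^1$ and every other child must be $[0]$ — otherwise a nonzero contribution from a second child would either reach $0$ or push $|\sigma_\nu|$ above $2$.

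The step I expect to be the main obstacle is the careful case analysis in the "only if" direction ruling out that two or more children contribute nonzero spiralities. The subtlety is that one could imagine, e.g., two children each of shape $\{-1,1\}$, giving sum $\{-2,0,2\}$ — which contains $0$ and so is excluded — but I must systematically check all pairs of admissible nonzero shapes (including the larger jump-2 and jump-1 shapes, and mixtures) and show each either introduces $0$ into the sum or produces a value of absolute spirality at least $3$, both contradicting $\Sigma_{\nu,\rho}=\{-2,-1,1,2\}$; for the jump-1 shapes this is immediate from the first part of the lemma (it would force $\nu$ jump-1), and for the jump-2 shapes it reduces to checking that any two symmetric sets, each containing a value of absolute value $\ge 1$ and with at least one of absolute value $\ge 2$ or with two "slots," generate either a $0$ or a $\pm 3$. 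Once this bookkeeping is organized by the six shapes of Theorem~\ref{th:spirality-sets}, the argument is routine but must be exhaustive.
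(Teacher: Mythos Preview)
Your argument for the first biconditional is essentially the paper's: the parity argument (trivial and jump-2 children each have a single parity, so their sum does too) gives one direction, and for the other you observe that a jump-1 child lets you realize two consecutive spirality values for $\nu$, whence Theorem~\ref{th:spirality-sets} forces $\nu$ to be jump-1. One small slip: the symmetric closure of the trivial set $[1]$ is $\{-1,1\}$, not a singleton, so it is not ``a translate''; but since $\{-1,1\}$ is all-odd it is already covered by your parity argument, and nothing breaks.

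For the ``only if'' direction of the second biconditional your route diverges from the paper and becomes more laborious than necessary. The paper exploits a simple consequence of Lemma~\ref{le:spirality-S-node} that you do not use: since $M_\nu=\sum_i M_{\mu_i}$ and here $M_\nu=2$, the children's maxima sum to exactly $2$. This instantly reduces to two patterns---either two children have maximum $1$ and the rest are $[0]$, or one child has maximum $2$ and the rest are $[0]$---and each bad subcase is dispatched in one line (two max-$1$ children give $1+(-1)=0$; a child with $[0,2]^1$ or $[0,2]^2$ gives $0$ directly). Your plan instead pivots on $0\notin\Sigma_{\nu,\rho}$ and an exhaustive shape-by-shape Minkowski analysis; that can be made to work, but two steps in your sketch are not right as stated. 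First, the sentence ``any such set has two elements summing to $0$ \dots\ contradiction'' conflates elements within a single child's set with the Minkowski sum across children; having $a$ and $-a$ in one $\Sigma_{\mu,\rho}$ does not by itself put $0$ in $\Sigma_{\nu,\rho}$. Second, the claim that the remaining children are ``zero-admitting (from the parity argument)'' is not justified---e.g.\ two children both of shape $[1]$ neither admit $0$, yet the first part of the lemma says nothing about this. These gaps are recoverable once you fall back on the full case analysis you anticipate, but the cleaner fix is to adopt the paper's $\sum_i M_{\mu_i}=2$ observation, which collapses the case analysis to a few lines.
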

\begin{proof}
	We prove that $\nu$ is jump-1 if and only if at least one of its children is jump-1. Suppose first that $\nu$ is jump-1 and suppose by contradiction that all its children are trivial or jump-2. This implies that for each child $\mu$ of $\nu$, $\Sigma^+_{\mu,\rho}$ contains only even values or only odd values. Denote by $j$ the number of children of $\nu$ whose non-negative rectilinear spirality set contain only odd values. By Lemma~\ref{le:spirality-S-node}, the spirality of any rectilinear representation of $G_{\nu,\rho}$ is the sum of the spiralities of all child components. It follows that $G_{\nu,\rho}$ admits only even values of spiralities if $j$ is even and only odd values of spiralities if $j$ is odd, which contradicts the hypothesis that $\nu$ is jump-1. Suppose vice versa that $\nu$ has at least a child $\mu$ that is jump-1. Denote by $M$ the maximum value in $\Sigma^+_{\nu,\rho}$ and by $M_\mu$ the maximum value in $\Sigma^+_{\mu,\rho}$. Let $H_{\nu,\rho}$ be any rectilinear planar representation of $G_{\nu,\rho}$ having spirality $M$, and let $H_{\mu,\rho}$ be its restriction to $G_{\mu,\rho}$. By  Lemma~\ref{le:spirality-S-node}, $H_{\mu,\rho}$ has spirality $M_\mu$. Also, since $\mu$ is jump-1, by Lemma~\ref{le:spirality-S-node} we can obtain a rectilinear representation $H'_{\nu,\rho}$ of $G_{\nu,\rho}$ with spirality $M-1$ by simply replacing $H_{\mu,\rho}$ in $H_{\nu,\rho}$ with a rectilinear representation of $G_{\mu,\rho}$ having spirality $M_\mu - 1$. Therefore, by Theorem~\ref{th:spirality-sets}, $\nu$ is jump-1.
	
	We now show the second part of the lemma. Suppose first that $\nu$ has exactly one child $\mu$ with non-negative rectilinear spirality set $[1,2]^1$ and all the other children with non-negative rectilinear spirality set $[0]$. Clearly, by Lemma~\ref{le:spirality-S-node}, $\Sigma^+_{\nu,\rho}=\Sigma^+_{\mu,\rho}$, i.e., $\Sigma^+_{\nu,\rho}=[1,2]^1$. Suppose vice versa that $\Sigma^+_{\nu,\rho}=[1,2]^1$. By Lemma~\ref{le:spirality-S-node}, the sum of the spiralities admitted by the child components of $\nu$ cannot be larger than two. If exactly one child of $\nu$ has non-negative rectilinear spirality set $[1,2]^1$ and all the other children have non-negative rectilinear spirality set $[0]$, we are done. Otherwise, one of the following two cases must be considered: (i)~There are two children $\mu$ and $\mu'$ of $\nu$ such that the maximum value of spirality admitted by $G_{\mu,\rho}$ and $G_{\mu',\rho}$ is $1$ and any other child $\nu$ has non-negative rectilinear spirality set $[0]$; this case is ruled out by observing that $G_{\mu,\rho}$ (and $G_{\mu',\rho}$) would also admit spirality $-1$ and thus, by Lemma~\ref{le:spirality-S-node}, $G_{\nu,\rho}$ would also admit spirality~0. (ii)~$\nu$ has a child $\mu$ for which either $\Sigma_{\mu,\rho}^+=[0,2]^1$ or  $\Sigma_{\mu,\rho}^+=[0,2]^2$ and any other child of $\nu$ has non-negative rectilinear spirality set $[0]$; again, this case is ruled out because it would imply that also $G_{\nu,\rho}$ admits spirality~0. 	
\end{proof}
\begin{restatable}{lemma}{letimeS}\label{le:timeS}
	Let $G$ be an \pisp, $T$ be the SPQ$^*$-tree of~$G$, $\nu$ be an S-node of~$T$ with $n_\nu$ children, and $\rho_1, \rho_2, \dots, \rho_h$ be the Q$^*$-nodes of~$T$.
	Assume that, for each child $\mu$ of $\nu$ in $T_{\rho_i}$, $\Sigma^+_{\mu,\rho_i}$ is given. The set $\Sigma^+_{\nu,\rho_i}$ can be computed in $O(n_\nu)$ time for $i = 1$ and in $O(1)$ time for $2 \leq i \leq h$.
\end{restatable}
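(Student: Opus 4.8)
The plan is to reduce the computation of $\Sigma^+_{\nu,\rho}$ to inspecting a constant-size \emph{summary} of the multiset of spirality sets of the children of $\nu$, exploiting that, by \cref{le:spirality-S-node}, the spirality set of an S-component is the Minkowski sum of those of its children, and that, by \cref{th:spirality-sets}, both this set and the children's sets have one of six prescribed shapes. Concretely, for a root $\rho\in\{\rho_1,\dots,\rho_h\}$ let $\mu_1,\dots,\mu_{n_\nu}$ be the children of $\nu$ in $T_\rho$ and, for each $\mu_j$, let $M_{\mu_j}=\max\Sigma^+_{\mu_j,\rho}$. By \cref{le:spirality-S-node}, $\max\Sigma^+_{\nu,\rho}=M_\nu:=\sum_{j=1}^{n_\nu}M_{\mu_j}$. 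I would keep, together with $\nu$, the summary $\mathcal{S}_\nu=(M_\nu,\ c_1,\ c_{[1,2]^1},\ c_0)$, where $c_1$ is the number of children that are jump-1, $c_{[1,2]^1}$ the number of children whose set is $[1,2]^1$, and $c_0$ the number of children whose set is $[0]$ (hence $n_\nu-c_0$ children are non-$[0]$); each of these four quantities is additive over the children, so one child contributes to $\mathcal{S}_\nu$ in $O(1)$ time.

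The key point is that $\Sigma^+_{\nu,\rho}$ is determined by $\mathcal{S}_\nu$ through an $O(1)$ case distinction on $M_\nu$, using \cref{th:spirality-sets}, \cref{le:SIntervalSupport}, and \cref{co:intervalSupport}. If $M_\nu=0$, then all children have set $[0]$ and $\Sigma^+_{\nu,\rho}=[0]$. If $M_\nu=1$, then exactly one child is non-$[0]$, with maximum $1$, hence (by \cref{th:spirality-sets}) with set $[1]$ or $[0,1]^1$; thus $\Sigma^+_{\nu,\rho}=[0,1]^1$ if $c_1>0$ and $\Sigma^+_{\nu,\rho}=[1]$ otherwise. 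If $M_\nu=2$, then by \cref{le:SIntervalSupport} we have $\Sigma^+_{\nu,\rho}=[1,2]^1$ iff $c_{[1,2]^1}=1$ and $n_\nu-c_0=1$; otherwise \cref{le:SIntervalSupport} gives $\Sigma^+_{\nu,\rho}=[0,2]^1$ if $c_1>0$ and $\Sigma^+_{\nu,\rho}=[0,2]^2$ if $c_1=0$. Finally, if $M_\nu>2$, then by \cref{le:SIntervalSupport} $\nu$ is jump-1 iff $c_1>0$, so $\Sigma^+_{\nu,\rho}=[0,M_\nu]^1$ when $c_1>0$; when $c_1=0$, $\nu$ is jump-2 (it is not trivial since $M_\nu>1$), and by \cref{th:spirality-sets} and \cref{co:intervalSupport} it equals $[0,M_\nu]^2$ if $M_\nu$ is even and $[1,M_\nu]^2$ if $M_\nu$ is odd.

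For $i=1$ I would compute $\mathcal{S}_\nu$ for $T_{\rho_1}$ by a single scan of the $n_\nu$ children, each contributing in $O(1)$; this takes $O(n_\nu)$ time, after which the case distinction above returns $\Sigma^+_{\nu,\rho_1}$ in $O(1)$ time, so the total is $O(n_\nu)$. I would store $\mathcal{S}_\nu$ and the (constant-size) spirality sets of the children of $\nu$ in $T_{\rho_1}$. For $i\ge 2$ the claim rests on the observation that re-rooting from $\rho_1$ to $\rho_i$ changes the children of an S-node only in a strictly local way. If $\nu$ does not lie on the path of $T$ between $\rho_1$ and $\rho_i$, then $\nu$ keeps the same parent and the same children, and each of those children keeps the same pertinent graph and the same pair of poles, hence the same spirality set; therefore $\Sigma^+_{\nu,\rho_i}=\Sigma^+_{\nu,\rho_1}$ and nothing has to be recomputed. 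If $\nu$ lies on that path, the only change is that the neighbour of $\nu$ towards $\rho_i$ (a child in $T_{\rho_1}$) becomes the new parent, while the old parent (the neighbour towards $\rho_1$) becomes a new child; all the remaining children of $\nu$ keep their pertinent graphs, poles, and spirality sets. Hence the summary $\mathcal{S}_\nu$ for $T_{\rho_i}$ is obtained from $\mathcal{S}_\nu$ for $T_{\rho_1}$ in $O(1)$ time by subtracting the contribution of the former child (read from the stored value) and adding the contribution of the new child (whose set $\Sigma^+_{\cdot,\rho_i}$ is given), and a second $O(1)$ lookup returns $\Sigma^+_{\nu,\rho_i}$.

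The main obstacle I expect is the $i\ge 2$ bookkeeping: one must make precise that, upon re-rooting, an S-node on the re-rooting path undergoes exactly one child/parent swap while every other child is untouched, and that this is enough because the spirality set of a child is a function only of its pertinent graph and of its two poles, both of which are invariant under the re-rooting for the untouched children. Everything else is a finite case analysis already licensed by \cref{le:spirality-S-node}, \cref{le:SIntervalSupport}, \cref{co:intervalSupport}, and \cref{th:spirality-sets}.
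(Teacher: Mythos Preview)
Your proposal is correct and follows essentially the same approach as the paper: your summary $(M_\nu,c_1,c_{[1,2]^1},c_0)$ coincides with the paper's quadruple $(M_{\nu,\rho_i},z_{\nu,\rho_i},y_{\nu,\rho_i},x_{\nu,\rho_i})$, your $O(1)$ case distinction derived from \cref{th:spirality-sets} and \cref{le:SIntervalSupport} is the same one the paper carries out, and your $O(1)$ re-rooting update via a single child/parent swap is exactly the mechanism the paper uses. The only difference is presentational: you organize the case split by the value of $M_\nu$ first, whereas the paper branches first on whether $\nu$ is jump-1, but the resulting computations are identical.
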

\begin{proof}
	For any $i = 1, \dots, h$, let $x_{\nu,\rho_i}$ and $y_{\nu,\rho_i}$ be the number of children of~$\nu$ in $T_{\rho_i}$ with non-negative spirality set $[0]$ and $[1,2]^1$, respectively. Also, let $z_{\nu,\rho_i}$ be the number of children that are jump-1 (clearly, $z_{\nu,\rho_i} \geq y_{\nu,\rho_i}$). Let $M_{\nu,\rho_i}$ be the maximum value in $\Sigma^+_{\nu,\rho_i}$.
	First, we show how to compute $\Sigma^+_{\nu,\rho_i}$ in $O(1)$ time given $x_{\nu,\rho_i}$, $y_{\nu,\rho_i}$, $z_{\nu,\rho_i}$, $M_{\nu,\rho_i}$.
	By Lemma~\ref{le:SIntervalSupport}, $\Sigma_{\nu,\rho_i}^+$ is jump-1 if and only if $z_{\nu,\rho_i}>0$. Suppose that $\Sigma_{\nu,\rho_i}^+$ is jump-1. If $M_{\nu,\rho_i}\not =2$, by Theorem~\ref{th:spirality-sets},  $\Sigma_{\nu,\rho_i}^+=[0,M_{\nu,\rho_i}]^1$ . If $M_{\nu,\rho_i} =2$,  Lemma~\ref{le:SIntervalSupport} implies $\Sigma_{\nu,\rho_i}^+=[1,2]^1$ if $x_{\nu,\rho_i}+y_{\nu,\rho_i}=n_\nu$ and $y_{\nu,\rho_i}=1$; otherwise  $\Sigma_{\nu,\rho_i}^+=[0,2]^1$.
	Suppose now that  $\Sigma_{\nu,\rho_i}^+$ is not jump-1. By Theorem~\ref{th:spirality-sets} we have: $\Sigma_{\nu,\rho_i}^+=[0]$ if $M_{\nu,\rho_i}=0$ and $\Sigma_{\nu,\rho_i}^+=[1]$ if $M_{\nu,\rho_i}=1$; $\Sigma_{\nu,\rho_i}^+=[1,M_{\nu,\rho_i}]^2$ if $M_{\nu,\rho_i}>1$ and $M_{\nu,\rho_i}$ is odd; $\Sigma_{\nu,\rho_i}^+=[0,M_{\nu,\rho_i}]^2$ if $M_{\nu,\rho_i}>1$ and~$M_{\nu,\rho_i}$ is~even.
	
	We now show ho to compute $x_{\nu,\rho_i}$, $y_{\nu,\rho_i}$, and $z_{\nu,\rho_i}$ for $i=1, \dots, h$. If $i=1$, given $\Sigma^+_{\mu,\rho_1}$ for every child $\mu$ of $\nu$ in~$T_{\rho_1}$, then $x_{\nu,\rho_1}$, $y_{\nu,\rho_1}$, and $z_{\nu,\rho_1}$ are computed in $O(n_\nu)$ time by just visiting each child of $\nu$. Also, since by Lemma~\ref{le:spirality-S-node} the maximum spirality admitted by $G_{\nu,\rho}$ is the sum of the maximum spiralities admitted by the children of $\nu$ in $T_{\rho_1}$, we also compute $M_{\nu,\rho_1}$ and $\Sigma^+_{\nu,\rho_1}$ in $O(n_\nu)$ time. We store at $\nu$ the values $x_{\nu,\rho_1}$, $y_{\nu,\rho_1}$, $z_{\nu,\rho_1}$, and~$M_{\nu,\rho_1}$.
	
	Let $i \in \{2, \dots, h\}$. Let $\mu_1$ be the parent of $\nu$ in $T_{\rho_1}$ and let $\mu_i$ be the parent of $\nu$ in $T_{\rho_i}$. Note that, $\mu_1$ is a child of $\nu$ in~$T_{\rho_i}$ and $\mu_i$ is a child of $\nu$ in~$T_{\rho_1}$. Any other child of $\nu$ in $T_{\rho_1}$ is also a child of $\nu$ in $T_{\rho_i}$ and vice versa.
	To compute $\Sigma^+_{\nu,\rho_i}$ in $O(1)$ time, we compute~$x_{\nu,\rho_i}$, ~$y_{\nu,\rho_i}$, ~$z_{\nu,\rho_i}$,~$M_{\nu,\rho_i}$ as follows:
	%
	(i)~Let $g_{\mu_i}=1$ if $\Sigma^+_{\mu_i,\rho_1}=[0]$ and $g_{\mu_i}=0$ otherwise. Also, let $g_{\mu_1}=1$ if  $\Sigma_{\mu_1,\rho_i}=[0]$ and $g_{\mu_1}=0$ otherwise. We have $x_{\nu,\rho_i}=x_{\nu,\rho_1}-g_{\mu_i}+g_{\mu_1}$.
	%
	(ii)~Let $g'_{\mu_i}=1$ if $\Sigma^+_{\mu_i,\rho_1}=[1,2]^1$ and $g'_{\mu_i}=0$ otherwise. Also, let $g'_{\mu_1}=1$ if  $\Sigma_{\mu_1,\rho_i}=[1,2]^1$ and  $g_{\mu_1}=0$ otherwise. We have $y_{\nu,\rho_i}=y_{\nu,\rho_1}-g'_{\mu_i}+g'_{\mu_1}$.
	%
	(iii)~Let $g''_{\mu_i}=1$ if $\Sigma_{\mu_i,\rho_1}$ is jump-1 and $g''_{\mu_i}=0$ otherwise. Also, let $g''_{\mu_1}=1$ if  $\Sigma^+_{\mu_1,\rho_i}$ is jump-1 and  $g''_{\mu_1}=0$ otherwise. We have $z_{\nu,\rho_i}=z_{\nu,\rho_1}-g''_{\mu_i}+g''_{\mu_1}$.
	%
	(iv)~$M_{\nu,\rho_i}=M_{\nu,\rho_1}-M_{\mu_i,\rho_1}+M_{\mu_1,\rho_i}$.
\end{proof}




\myparagraph{P-nodes.} For a P-node $\nu$, $\Sigma^+_{\nu,\rho}$ can be computes in $O(1)$ time, independent of~$\rho$.

\begin{restatable}{lemma}{leTimePThree}\label{le:timeP}
	Let $G$ be an \pisp and let $T_\rho$ be a rooted SPQ$^*$-tree of $G$. Let $\nu$ be a P-node of~$T_{\rho}$ and assume that, for each child $\mu$ of~$\nu$ in~$T_{\rho}$, the set $\Sigma^+_{\mu,\rho}$ is given. The set $\Sigma^+_{\nu,\rho}$ can be computed in $O(1)$ time.
\end{restatable}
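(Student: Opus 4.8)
The plan is to combine the two structural lemmas on P-components (Lemmas~\ref{le:spirality-P-node-3-children} and~\ref{le:spirality-P-node-2-children}) with the fact that in an \pisp every pole $w$ of a P-node $\nu$ has $\outdeg_\nu(w)=1$. First I would bound the number of children of $\nu$: since $\skel(\nu)$ is a bundle of $k+1\ge 3$ parallel edges, one being the reference edge $e_\nu$ and the other $k$ corresponding to the children, pole $u$ has degree $k$ in $\skel(\nu)\setminus e_\nu$, and each child component (an S- or a Q$^*$-node, as two P-nodes are never adjacent) contributes at least one edge at $u$; hence $\indeg_\nu(u)\ge k$. From $\deg_G(u)=\indeg_\nu(u)+\outdeg_\nu(u)\le 4$ and $\outdeg_\nu(u)\ge 1$ we get $k\le 3$, so $\nu$ has exactly two or three children, and it suffices to handle these two cases.

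The second ingredient is the data representation. By Theorem~\ref{th:spirality-sets}, $\Sigma^+_{\mu,\rho}$ has one of six shapes for each child $\mu$, and by the symmetry noted before that theorem the full set $\Sigma_{\mu,\rho}=\Sigma^+_{\mu,\rho}\cup(-\Sigma^+_{\mu,\rho})$ is a symmetric set expressible as the union of at most two \emph{generalized intervals}, i.e.\ sets of the form $\{x\in\mathbb{Z}: \ell\le x\le u,\ x\equiv c \pmod{d}\}$ with $d\in\{1,2\}$. This family is closed --- up to a still-constant increase in the number of pieces --- under translation by a constant, pairwise intersection, and pairwise union, and each such operation on $O(1)$-size descriptions runs in $O(1)$ time; moreover a union of $O(1)$ generalized intervals can be put in canonical form in $O(1)$ time. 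I would build $\Sigma^+_{\nu,\rho}$ from a constant number of these elementary operations.

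For three children, let $A,B,C$ be a left-to-right arrangement of the child components around the poles; by Lemma~\ref{le:spirality-P-node-3-children}, $\nu$ admits $\sigma_\nu$ in this arrangement exactly when $\sigma_\nu\in(\Sigma_{A,\rho}-2)\cap\Sigma_{B,\rho}\cap(\Sigma_{C,\rho}+2)$, so $\Sigma_{\nu,\rho}$ is the union of these six sets (one per permutation) and $\Sigma^+_{\nu,\rho}=\Sigma_{\nu,\rho}\cap\{x\ge 0\}$. For two children, let $A$ be the left and $B$ the right child; the only slightly delicate step is to determine the pairs $(a,b)$, with $a=\alpha_u^l+\alpha_v^l$ and $b=\alpha_u^r+\alpha_v^r$, attainable by some $\alpha_u^l,\alpha_u^r,\alpha_v^l,\alpha_v^r\in\{0,1\}$ with $1\le\alpha_w^l+\alpha_w^r\le 2$ for $w\in\{u,v\}$; a short enumeration shows these are exactly the pairs with $a,b\in\{0,1,2\}$ and $a+b\ge 2$. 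Then Lemma~\ref{le:spirality-P-node-2-children} gives, for this arrangement, that $\nu$ admits $\sigma_\nu$ iff $\sigma_\nu\in\bigcup_{a+b\ge 2}(\Sigma_{A,\rho}-a)\cap(\Sigma_{B,\rho}+b)$; unioning this over the two possible arrangements and intersecting with the non-negatives yields $\Sigma^+_{\nu,\rho}$. In either case $\Sigma^+_{\nu,\rho}$ is produced by $O(1)$ elementary set operations, hence in $O(1)$ time; by Theorem~\ref{th:spirality-sets} the result collapses to one of the six canonical forms (or is empty), which the algorithm stores.

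The asymptotics are never in doubt once the number of children is bounded by three, so the real work --- and the only place I expect friction --- is the bookkeeping: verifying that the small finite sets (such as $\{-2,-1,1,2\}$ coming from a $[1,2]^1$ child, or $\{-1,1\}$ from a $[1]$ child) together with the parity-restricted intervals are all captured by one uniform $O(1)$ description that is genuinely closed under the shift/intersection/union operations, and correctly enumerating the feasible angle-sum pairs in the two-children case. I expect a careful but routine case analysis, along the lines of the one in the proof of Lemma~\ref{le:intervalSupportFirst}, to settle this.
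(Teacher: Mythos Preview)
Your proposal is correct and takes a genuinely different route from the paper's. After bounding the number of children by three (which the paper leaves implicit, simply splitting into Lemmas~\ref{le:timeP3} and~\ref{le:timeP2}), the paper proceeds operationally: it observes that admissibility of any single value $\sigma_\nu$ can be tested in $O(1)$ time via Lemma~\ref{le:spirality-P-node-3-children} or~\ref{le:spirality-P-node-2-children}, checks $\sigma_\nu\in\{0,\dots,4\}$ directly, and for $M>4$ proves a structural claim that the maximum spirality is always attained when the children are arranged left-to-right in non-increasing order of their own maxima, which then yields $M$ by a closed-form expression such as $\min\{M_{\mu_{\max}}-2,M_{\mu_{\md}},M_{\mu_{\min}}+2\}$ (possibly minus one). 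You instead compute $\Sigma_{\nu,\rho}$ set-algebraically, as a finite union over child permutations and angle-sum pairs of intersections of shifted child sets, exploiting that unions of $O(1)$ arithmetic progressions with step in $\{1,2\}$ are closed under shift and intersection with only a constant blow-up in the number of pieces; the enumeration of feasible $(a,b)$ pairs in the two-children case is correct. Your route is more uniform and entirely bypasses the ordering claims and the case analysis on $M\le 4$ versus $M>4$; the paper's route, on the other hand, is more informative about \emph{which} embedding realizes the extremal spirality, which is convenient when one later wants to construct a representation rather than merely decide existence.
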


To prove Lemma~\ref{le:timeP}, we prove Lemmas~\ref{le:timeP3} and~\ref{le:timeP2}, which treat separately the case of a P-node with three children and the case of a P-node with two children.



\begin{restatable}{lemma}{leTimePThree}\label{le:timeP3}
	Let $G$ be an \pisp and let $T_\rho$ be a rooted SPQ$^*$-tree of $G$. Let $\nu$ be a P-node of $T_{\rho}$ with three children and assume that, for each child $\mu$ of $\nu$ in $T_{\rho_i}$, the set $\Sigma^+_{\mu,\rho_i}$ is given.
	The set $\Sigma^+_{\nu,\rho}$ can be computed in $O(1)$ time.
\end{restatable}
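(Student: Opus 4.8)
The plan is to reduce the computation to a constant-size case analysis, using \cref{le:spirality-P-node-3-children} to relate $\Sigma^+_{\nu,\rho}$ to the spirality sets of the three children and \cref{th:spirality-sets} to bound how complicated all the sets involved can be. For a child $\mu$ of $\nu$, write $\Sigma_{\mu,\rho}=\Sigma^+_{\mu,\rho}\cup(-\Sigma^+_{\mu,\rho})$ for its full (signed) spirality set; since $\Sigma_{\nu,\rho}$ is symmetric as well, it suffices to compute $\Sigma_{\nu,\rho}$ and then take its non-negative part. Because $G$ is independent-parallel and the poles of a P-node lie on the outer face in every embedding we consider, the distinct planar embeddings of $G_{\nu,\rho}$ are exactly the $3!=6$ left-to-right orderings of the three child components around the poles (this is the embedding encoding recalled in \cref{se:preli}). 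For a fixed ordering $(\mu_l,\mu_c,\mu_r)$, \cref{le:spirality-P-node-3-children} says that $G_{\nu,\rho}$ admits spirality $\sigma_\nu$ precisely when $\sigma_\nu+2\in\Sigma_{\mu_l,\rho}$, $\sigma_\nu\in\Sigma_{\mu_c,\rho}$, and $\sigma_\nu-2\in\Sigma_{\mu_r,\rho}$; hence the set of spiralities realizable with this ordering is $(\Sigma_{\mu_l,\rho}-2)\cap\Sigma_{\mu_c,\rho}\cap(\Sigma_{\mu_r,\rho}+2)$, and $\Sigma_{\nu,\rho}$ is the union of these six sets.

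Next I would argue that each of these six sets, and their union, can be produced in $O(1)$ time. By \cref{th:spirality-sets} every $\Sigma^+_{\mu,\rho}$ is one of six structures, so each signed set $\Sigma_{\mu,\rho}$ has a description by $O(1)$ integers: all integers of a fixed parity class (even only, odd only, or both) inside a symmetric interval $[-M_\mu,M_\mu]$, the only exception being $\{-2,-1,1,2\}$ (the signed form of $[1,2]^1$), which in addition omits $0$. Translating such a set by $\pm2$ and intersecting two or three of the resulting sets stays within the slightly larger family of ``arithmetic-progression intervals'' (all integers of a fixed parity class in an interval $[a,b]$, possibly with one point removed), a family still described by $O(1)$ integers and closed under intersection; each such operation, together with the maximum element and any fixed membership query on the result, is computed in $O(1)$ time by a short finite case analysis (the boundary cases and the anomalous shape $\{-2,-1,1,2\}$ require care). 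Consequently the six sets $(\Sigma_{\mu_l,\rho}-2)\cap\Sigma_{\mu_c,\rho}\cap(\Sigma_{\mu_r,\rho}+2)$ are obtained in $O(1)$ time, and on their (symmetric) union $\Sigma_{\nu,\rho}$ both the maximum element $M_\nu$ and membership of any given value are decided in $O(1)$ time.

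Finally I would read off $\Sigma^+_{\nu,\rho}$ in one of the canonical forms of \cref{th:spirality-sets} from $M_\nu$ and a constant number of membership queries: if $M_\nu\le 2$, a short case distinction among $[0]$, $[1]$, $[0,1]^1$, $[1,2]^1$, $[0,2]^1$, $[0,2]^2$ (testing membership of $0$ and of $1$) settles it; if $M_\nu>2$, then $\Sigma^+_{\nu,\rho}=[0,M_\nu]^1$ when some value of parity opposite to $M_\nu$ (for instance $M_\nu-1$) lies in $\Sigma^+_{\nu,\rho}$, and otherwise $\Sigma^+_{\nu,\rho}=[1,M_\nu]^2$ or $[0,M_\nu]^2$ according to the parity of $M_\nu$, by \cref{co:intervalSupport}.

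I expect the main obstacle to be the bookkeeping in the middle step: verifying that translating and intersecting the children's signed spirality sets never leaves the constant-size family of sets and can be carried out in $O(1)$. The two delicate points are the anomalous shape $\{-2,-1,1,2\}$, and the fact that the two outer positions of a P-node with three children are \emph{not} interchangeable — the left child must supply spirality $\sigma_\nu+2$ and the right child $\sigma_\nu-2$ — which forces us to enumerate all six orderings rather than only the three choices of middle child. One should also confirm that these six orderings exhaust the relevant embeddings, which relies on the independent-parallel hypothesis (each pole of $\nu$ has a single edge leaving $G_{\nu,\rho}$) together with the embedding encoding recalled in \cref{se:preli}.
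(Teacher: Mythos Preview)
Your approach is correct and takes a genuinely different route from the paper's. You compute $\Sigma_{\nu,\rho}$ explicitly as the union, over the six orderings, of $(\Sigma_{\mu_l,\rho}-2)\cap\Sigma_{\mu_c,\rho}\cap(\Sigma_{\mu_r,\rho}+2)$, arguing that each signed child set and its translates live in a constant-size family of descriptors (an arithmetic progression in an interval, with at most a bounded number of holes), closed under intersection and supporting $O(1)$ max and membership queries. One small inaccuracy: the family with \emph{one} point removed is not literally closed under intersection---intersecting three sets each missing a point can leave up to three holes---but enlarging the family to ``interval, step $1$ or $2$, at most $O(1)$ excluded points'' repairs this at no cost. (As a side remark, because swapping left and right negates the spirality, three orderings---the choice of center child---together with symmetrisation already suffice.) The paper, by contrast, never manipulates set descriptors: it observes that admissibility of any fixed $\sigma_\nu$ is an $O(1)$ test via \cref{le:spirality-P-node-3-children}, brute-forces $M\le 4$, and for $M>4$ proves a structural claim that the ordering $(\mu_{\max},\mu_{\mathrm{mid}},\mu_{\min})$, with children sorted by their own maxima, already realises $M$; this pins $M$ to one of the two values $\overline M$ or $\overline M-1$, where $\overline M=\min\{M_{\mu_{\max}}-2,\,M_{\mu_{\mathrm{mid}}},\,M_{\mu_{\min}}+2\}$. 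Your argument is more uniform and mechanically verifiable; the paper trades the descriptor bookkeeping for a single combinatorial insight about the optimal ordering, which is more ad hoc but sidesteps the closure-under-intersection verification entirely.
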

\begin{proof}
	Observe that, by Lemma~\ref{le:spirality-P-node-3-children}, for any given integer value $\sigma_\nu \geq 0$, one can test in $O(1)$ time whether $G_{\nu,\rho}$ admits spirality~$\sigma_\nu$. It suffices to test if there exists a child of $\nu$ that admits spirality $\sigma_\nu$, another child that admits spirality $\sigma_\nu+2$, and the remaining child that admits spirality $\sigma_\nu-2$. Testing this condition requires a constant number of checks. 
	
	By Theorem~\ref{th:spirality-sets}, $G_{\nu,\rho}$ is rectilinear planar if and only if it admits spirality either $0$ or $1$. Based on the previous observation, we can check this property in $O(1)$ time; if it does not hold, then $\Sigma^+_{\nu,\rho}=\emptyset$. Otherwise, we determine the maximum value $M$ in $\Sigma^+_{\nu,\rho}$. By Theorem~\ref{th:spirality-sets}, it suffices to find a value $\sigma_\nu$ such that $\nu$ admits spirality $\sigma_\nu$ but it does not admit spirality $\sigma_\nu + 1$ and $\sigma_\nu + 2$; if we find such a value, then $M=\sigma_\nu$. Using this observation, we prove that one can find $M$ in $O(1)$ time.  
	
	For each $i=0,...,4$, we can first check in $O(1)$ time whether $M=i$. If this is not the case, then $M>4$. To find $M$ in this case, we first claim an interesting property.  
	Consider the maximum values in the non-negative rectilinear spirality sets of the children of $\nu$. Denote by $\mu_{\max}$ (resp. $\mu_{\min}$) any child of $\nu$ whose maximum value is not smaller than (resp. the larger than) any other maximum values. Also denote by $\mu_{\md}$ the remaining child. We prove the following claim.
	
	\begin{restatable}{claim}{clPThree}
		Let $M$ be the maximum value in $\Sigma^+_{\nu,\rho}$. If $M > 4$ then $G_{\nu,\rho}$ admits spirality $M$ for an embedding where $G_{\mu_{\max},\rho}$, $G_{\mu_{\md},\rho}$, and $G_{\mu_{\min},\rho}$ appear in this left-to-right order.
	\end{restatable}  
	\begin{claimproof}
		Let $H_{\nu,\rho}$ be a rectilinear planar representation of $G_{\nu,\rho}$ with spirality $M > 4$. If $G_{\mu_{\max},\rho}$, $G_{\mu_{\md},\rho}$, and $G_{\mu_{\min},\rho}$ appear in this order in $H_{\nu,\rho}$ we are done. Hence, suppose this is not the case; we prove that there exists another rectilinear planar representation $H'_{\nu,\rho}$ of $G_{\nu,\rho}$ with spirality $M$ and such that $G_{\mu_{\max},\rho}$, $G_{\mu_{\md},\rho}$, and $G_{\mu_{\min},\rho}$ appear in this left-to-right order in the planar embedding of $H'_{\nu,\rho}$.
		
		Let $\mu_l$, $\mu_c$, and $\mu_r$ be the children of $\nu$ that correspond to the left, the central, and the right component of $H_{\nu,\rho}$, respectively. Denote by $\sigma_{\mu_d}$ the spirality of the restriction of $H_{\nu,\rho}$ to $G_{\mu_d,\rho}$, with $d \in \{l,c,r\}$. By Lemma~\ref{le:spirality-P-node-3-children}, it suffices to show that $G_{\mu_{\max},\rho}$, $G_{\mu_{\md},\rho}$, and $G_{\mu_{\min},\rho}$ admit spiralities $\sigma_{\mu_l}$, $\sigma_{\mu_c}$, and $\sigma_{\mu_r}$, respectively.
		Observe that, since $M > 4$, by Lemma~\ref{le:spirality-P-node-3-children} we have $\sigma_{\mu_d} > 0$.
		
		Let $d,d' \in \{l,c,r\}$ with $d \neq d'$ and let $M_{\mu_d}$ be the maximum value of spirality in $\Sigma^+_{\mu_d,\rho}$. We claim that if $\sigma_{\mu_{d'}} \leq M_{\mu_d}$ then $G_{\mu_d,\rho}$ admits spirality $\sigma_{\mu_{d'}}$: Since  $M>4$, by Lemma~\ref{le:spirality-P-node-3-children} we have $M_{\mu_d} \ge 3$ and if $\mu_d$ is jump-1, the claim holds by Theorem~\ref{th:spirality-sets}; if $\mu_d$ is not jump-1, by Lemma~\ref{le:spirality-P-node-3-children} it follows that $M$, $M_{\mu_d}$, and  $\sigma_{\mu_{d'}}$ have the same parity and, by Theorem~\ref{th:spirality-sets} the claim holds.
		
		We now show separately that: \textsf{(a)}~$G_{\mu_{\max},\rho}$ admits spirality $\sigma_{\mu_l}$, \textsf{(b)}~$G_{\mu_{\md},\rho}$ admits spirality $\sigma_{\mu_c}$, and \textsf{(c)} $G_{\mu_{\min},\rho}$ admits spirality $\sigma_{\mu_r}$.
		
		\smallskip\noindent \textsf{Proof of~(a):} Since by definition $M_{\mu_{\max}}\ge M_{\mu_l}\ge \sigma_{\mu_l}$, by the claim above we have that $G_{\mu_{\max},\rho}$ admits spirality $\sigma_{\mu_l}$.
		
		\smallskip\noindent  \textsf{Proof of~(b):} If $\mu_\md=\mu_c$ we are done. Else, suppose that $\mu_\md=\mu_l$. Since $\sigma_{\mu_l}\ge \sigma_{\mu_c}$, we have $M_\md=M_l\ge \sigma_{\mu_l}> \sigma_{\mu_c}$ and, consequently, by the claim $G_{\mu_{\md},\rho}$ admits spirality $\sigma_{\mu_c}$. Finally, suppose that $\mu_\md = \mu_r$. If $\mu_{\min}=\mu_c$ then $M_{\mu_{\md}} \ge M_{\mu_{\min}} \ge \sigma_{\mu_c}$; if $\mu_{\min}=\mu_l$ then $M_{\mu_{\md}} \ge M_{\mu_{\min}} \ge \sigma_{\mu_l} > \sigma_{\mu_c}$. Hence, by the claim,  $G_{\mu_{\md},\rho}$ admits spirality $\sigma_{\mu_c}$.
		
		\smallskip\noindent  \textsf{Proof of~(c):} Since $\sigma_{\mu_r}<\sigma_{\mu_c}<\sigma_{\mu_l}$, for any $\mu\in \{\mu_{\min}, \mu_{\md}, \mu_{\max}\}$, $\sigma_{\mu_r}\le M_{\mu}$. Hence, by the claim,  $G_{\mu_{\min},\rho}$ admits spirality $\sigma_{\mu_r}$.
	\end{claimproof}
	
	
	By the claim, to compute $M$ when $M > 4$, we can restrict to consider only rectilinear planar representations of $G_{\nu,\rho}$ where $G_{\mu_{\max},\rho}$, $G_{\mu_{\md},\rho}$, and $G_{\mu_{\min},\rho}$ occur in this left-to-right order. Let $\overline{M}=\min\{M_{\mu_{\max}}-2,M_{\mu_{\md}}, M_{\mu_{\min}}+2\}$. By Lemma~\ref{le:spirality-P-node-3-children}, we have $M\le \overline{M}$. 
	We test in $O(1)$ time whether $\nu$ is jump-1; by Theorem~\ref{th:spirality-sets}, it is sufficient to check whether $G_{\nu,\rho}$ either admits both spiralities 0 and 1 or both spiralities 1 and 2. If $\nu$ is jump-1, by Lemma~\ref{le:spirality-P-node-3-children}, all the children of $\nu$ are jump-1. Hence, $\mu_{\max}$, $\mu_{\md}$, and $\mu_{\min}$ admit spiralities $\overline{M}-2$, $\overline{M}$, and $\overline{M}+2$, respectively, which implies that $M=\overline{M}$. Suppose vice versa that $\nu$ is not jump-1. In this case, we check in $O(1)$ if $G_{\nu,\rho}$ admits spirality $\overline{M}$. If so, $M=\overline{M}$. Otherwise, $M$ and $\overline{M}$ have opposite parity, which implies that $M$ and $\overline{M}-1$ have the same parity, and $M \le \overline{M}-1$. Since $\overline{M}-1=\min\{M_{\mu_{\max}}-2,M_{\mu_{\md}}, M_{\mu_{\min}}+2\}-1$, we have that $\mu_{\max}$,  $\mu_{\md}$, and $\mu_{\min}$ admit spiralities $(\overline{M}-2)-1$,   $\overline{M}-1$, and $(\overline{M}+2)-1$, respectively, i.e., $M=\overline{M}-1$.   
	
	\smallskip
	Based on $M$, we finally determine the structure of $\Sigma_{\nu,\rho}^+$ in $O(1)$ time. Namely, we check in $O(1)$ time if $\nu$ is jump-1; thanks to Theorem~\ref{th:spirality-sets} it suffices to check whether $G_{\nu,\rho}$ admits spiralities 0 and 1 or spiralities 1 and 2. Suppose that $\nu$ is jump-1; if it contains 0, then $\Sigma_{\nu,\rho}^+=[0,M]^1$; else $\Sigma_{\nu,\rho}^+=[1,2]^1$. Suppose vice versa that $\nu$ is not jump-1. If $M \leq 1$ then $\Sigma_{\nu,\rho}^+=[M]$. Otherwise, if $M$ is odd $\Sigma_{\nu,\rho}^+=[1,M]^2$ and if $M$ is even $\Sigma_{\nu,\rho}^+=[0,M]^2$.
\end{proof}

\begin{restatable}{lemma}{leTimePTwo}\label{le:timeP2}
	Let $G$ be an \pisp and let $T_\rho$ be a rooted SPQ$^*$-tree of $G$. Let $\nu$ be a P-node of $T_{\rho}$ with two children and assume that, for each child $\mu$ of $\nu$ in $T_{\rho_i}$, the set $\Sigma^+_{\mu,\rho_i}$ is given.
	The set $\Sigma^+_{\nu,\rho}$ can be computed in $O(1)$ time.
\end{restatable}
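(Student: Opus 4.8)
The plan is to adapt the strategy of the proof of \cref{le:timeP3} to two children, replacing the composition relation of \cref{le:spirality-P-node-3-children} by the one of \cref{le:spirality-P-node-2-children}. Write $\mu_l,\mu_r$ for the two children of $\nu$ and $u,v$ for its poles. The first step is to turn \cref{le:spirality-P-node-2-children} into an $O(1)$-time membership test. Introducing $p=\alpha_u^l+\alpha_v^l$ and $q=\alpha_u^r+\alpha_v^r$, a finite check over the eight Boolean assignments of $(\alpha_u^l,\alpha_v^l,\alpha_u^r,\alpha_v^r)$ shows that the side conditions of \cref{le:spirality-P-node-2-children} are satisfiable exactly when $p,q\in\{0,1,2\}$ and $p+q\ge 2$ (indeed $p=0$ forces $q=2$ and $q=0$ forces $p=2$, while all remaining pairs are realizable). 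Hence $G_{\nu,\rho}$ admits spirality $\sigma_\nu$ with $G_{\mu_l,\rho}$ left of $G_{\mu_r,\rho}$ iff there are $\sigma_{\mu_l}\in\Sigma_{\mu_l,\rho}$ and $\sigma_{\mu_r}\in\Sigma_{\mu_r,\rho}$ with $\sigma_{\mu_l}-\sigma_\nu\in\{0,1,2\}$ and $\sigma_\nu-\sigma_{\mu_r}\in\{0,1,2\}$ (the requirement $\sigma_{\mu_l}-\sigma_{\mu_r}\ge 2$ being automatic once the two differences sum to at least $2$). Since $\Sigma_{\mu,\rho}=\Sigma^+_{\mu,\rho}\cup(-\Sigma^+_{\mu,\rho})$ and, by \cref{th:spirality-sets}, each $\Sigma^+_{\mu,\rho}$ is one of six structures — hence described in $O(1)$ space by its maximum value plus a type tag — membership of a fixed integer in $\Sigma_{\mu_l,\rho}$ or $\Sigma_{\mu_r,\rho}$ is decided in $O(1)$; so, for any fixed $\sigma_\nu$, testing $\sigma_\nu\in\Sigma_{\nu,\rho}$ reduces to trying the at most six pairs $(p,q)$, in $O(1)$ time.

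Next I would decide whether $\Sigma_{\nu,\rho}=\emptyset$. By \cref{th:spirality-sets}, $G_{\nu,\rho}$ is rectilinear planar iff it admits spirality $0$ or $1$, and both are tested in $O(1)$ by the step above (this also covers the degenerate case in which a child set is empty); if neither is admitted, we return $\Sigma^+_{\nu,\rho}=\emptyset$. Otherwise I would compute $M:=M_{\nu,\rho}$. Put $\overline M=\min\{M_{\mu_l,\rho},\,M_{\mu_r,\rho}+2\}$. The inequalities $\sigma_\nu\le\sigma_{\mu_l}\le M_{\mu_l,\rho}$ and $\sigma_\nu\le\sigma_{\mu_r}+2\le M_{\mu_r,\rho}+2$, both immediate from \cref{le:spirality-P-node-2-children}, give $M\le\overline M$. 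For a matching lower bound I would use \cref{le:intervalSupportSecond}: a representation attaining $\sigma_\nu=M$ may be chosen so that the left-minus-right child spirality difference is $2$ or $3$, hence $M$ is witnessed by a pair with $\sigma_{\mu_l}-\sigma_\nu\le 2$ and $\sigma_\nu-\sigma_{\mu_r}\le 2$; combining this with the density of the child sets guaranteed by \cref{th:spirality-sets} (consecutive admitted values of a child differ by at most $2$, and every value of the dominant parity strictly inside $[-M_\mu,M_\mu]$ is admitted), one shows $M\ge\overline M-c$ for a small absolute constant $c$. Concretely, if $\overline M\le c+2$ we enumerate $\sigma_\nu\in\{0,\dots,c+2\}$, and otherwise we test $\sigma_\nu=\overline M,\overline M-1,\dots,\overline M-c$ with the membership test and take the largest feasible value — a constant number of $O(1)$ checks.

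Finally, once $M$ is known, reading off the structure of $\Sigma^+_{\nu,\rho}$ from \cref{th:spirality-sets} is immediate: it is $[0]$ if $M=0$; $[1]$ or $[0,1]^1$ if $M=1$ (according to whether $0$ is admitted); $[1,2]^1$, $[0,2]^1$, or $[0,2]^2$ if $M=2$ (according to whether $0$ and $1$ are admitted); and, if $M>2$, it is $[0,M]^1$ when $\nu$ is jump-1 (equivalently, admits two consecutive spirality values — check $\{0,1\}$ or $\{1,2\}$), $[0,M]^2$ when $\nu$ is not jump-1 and $M$ is even, and $[1,M]^2$ when $\nu$ is not jump-1 and $M$ is odd. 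Each distinction needs only a constant number of the $O(1)$ membership tests from the first step.

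The hard part is the lower bound on $M$ in the third step. Unlike the three-children case, where a jump-1 child forces density everywhere and $M\in\{\overline M,\overline M-1\}$ follows readily, here a parity mismatch between $M_{\mu_l,\rho}$ and $M_{\mu_r,\rho}$ with both children jump-2 can make $\overline M$ itself unattainable, so one must argue that a value only a fixed amount below $\overline M$ is reached by exploiting the extra angular freedom at the poles — namely the difference-$3$ compositions with $(p,q)\in\{(1,2),(2,1)\}$. Pinning down the exact constant $c$ (in all small instances $c=1$ suffices) and checking that no intermediate parity configuration forces a larger drop is the one place where the two-children analysis is genuinely more delicate than that of \cref{le:timeP3}.
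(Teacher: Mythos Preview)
Your outline tracks the paper's approach closely up to the point where you compute~$M$, but the crucial step---the lower bound $M\ge\overline M-c$---is left open. You concede this yourself in the final paragraph, and it is precisely what the paper works out explicitly. The paper first fixes the left/right ambiguity by proving a claim (analogous to the three-child claim) that when $M>4$ one may take $G_{\mu_{\max},\rho}$ on the left and $G_{\mu_{\min},\rho}$ on the right; your $\overline M=\min\{M_{\mu_l,\rho},M_{\mu_r,\rho}+2\}$ is defined with an arbitrary labeling, so the bound $M\le\overline M$ you derive is not even valid unless you first argue that the ``max on the left'' ordering attains~$M$. Having done that, the paper splits into two cases: if $M_{\mu_{\max}}\ge M_{\mu_{\min}}+2$ it shows $M=M_{\mu_{\min}}+2$ exactly, and if $M_{\mu_{\max}}<M_{\mu_{\min}}+2$ it shows $M\in\{M_{\mu_{\max}},M_{\mu_{\max}}-1\}$, deciding between the two by a single membership test. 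So $c=1$ suffices, but one needs the case analysis to see it; your appeal to ``density of the child sets'' and the ``difference-$3$ compositions'' is a plausible intuition but is not an argument.

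You also miss a structural fact that the paper exploits in the final step: a P-node with two children is \emph{always} jump-1 (for $M\ge1$). The paper observes that in a representation with spirality $M$ one has $\alpha_u^r=\alpha_v^r=1$ and, by \cref{le:intervalSupportSecond}, at least one of $\alpha_u^l,\alpha_v^l$ is~$0$; flipping that bit yields spirality $M-1$, so $\nu$ admits two consecutive values and \cref{th:spirality-sets} forces the jump-1 structure. This collapses your six-way classification at the end to just $[1,2]^1$ versus $[0,M]^1$. Finally, a small slip: your ``iff'' characterization for membership states only $p,q\in\{0,1,2\}$ and then remarks parenthetically that $\sigma_{\mu_l}-\sigma_{\mu_r}\ge2$ follows once $p+q\ge2$; but $p+q\ge2$ is not part of your stated condition, so as written the equivalence is false (e.g.\ $p=q=0$ passes your test but is infeasible).
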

\begin{proof}
	We follow the same proof strategy as for Lemma~\ref{le:timeP3}. By Lemma~\ref{le:spirality-P-node-2-children}, for any given integer $\sigma_\nu$, one can test in $O(1)$ time whether $G_{\nu,\rho}$ admits spirality $\sigma_\nu$. Indeed, it suffices to test whether there are four binary numbers $\alpha_u^l$, $\alpha_v^l$, $\alpha_u^r$, and $\alpha_v^r$ such that $1 \leq \alpha_u^l + \alpha_u^r \leq 2$, $1 \leq \alpha_v^l + \alpha_v^r \leq 2$, and for which one child of $\nu$ admits spirality $\sigma_\nu + \alpha_u^l + \alpha_v^l$ and the other child of $\nu$ admits spirality $\sigma_\nu - \alpha_u^r + \alpha_v^r$. Testing this condition requires a constant number of checks. 
	
	By Theorem~\ref{th:spirality-sets}, $G_{\nu,\rho}$ is rectilinear planar if and only if it admits spirality either $0$ or $1$. Based on the reasoning above, we can check this property in $O(1)$ time; if it does not hold, then $\Sigma^+_{\nu,\rho}=\emptyset$. Otherwise, we determine the maximum value $M$ in $\Sigma^+_{\nu,\rho}$. By Theorem~\ref{th:spirality-sets}, it suffices to find a value $\sigma_\nu$ such that $\nu$ admits spirality $\sigma_\nu$ but it does not admit spirality $\sigma_\nu + 1$ and $\sigma_\nu + 2$; if we find such a value, then $M=\sigma_\nu$. We prove how to find $M$ in $O(1)$ time.  
	
	For each $i=0,...,4$, we first check in $O(1)$ time whether $M=i$. If this is not the case, then $M>4$. To find $M$ in this case, we claim a property similar to the case of a P-node with three children. Denote by $\mu_{\max}$ a child of $\nu$ whose maximum value is not smaller than the other. Let $\mu_{\min}$ be the remaining child.
	
	\begin{restatable}{claim}{clPTwo}
		Let $M$ be the maximum value in $\Sigma^+_{\nu,\rho}$. If $M > 4$, there exists a rectilinear planar representation of $G_{\nu,\rho}$ with spirality $M$ where $G_{\mu_{\max},\rho}$ and $G_{\mu_{\min},\rho}$ appear in this left-to-right order.
	\end{restatable}
	\begin{claimproof}
		Let $H_{\nu,\rho}$ be a rectilinear planar representation of $G_{\nu,\rho}$ with spirality~$M$. If $G_{\mu_{\max},\rho}$ is the left child in $H_{\nu,\rho}$, we are done. Otherwise we show that there exists a rectilinear planar representation $H'_{\nu,\rho}$ of $G_{\nu,\rho}$ with spirality $M$ such that $G_{\mu_{max},\rho}$ is the left child. Since $H_{\nu,\rho}$ has the maximum possible value of spirality, we have $\alpha_u^r=\alpha_v^r=1$. Since $\mu_{\min}$ is the left child in $H_{\nu,\rho}$ and $M>4$, by Lemma~\ref{le:spirality-P-node-2-children}, $\sigma_{\mu_{\min}}>2$. 
		By Property~(b) of Lemma~\ref{le:intervalSupportFirst}, there exists a rectilinear planar representation $H_{\mu_{\min},\rho}'$ of $G_{\mu_{\min},\rho}$ with spirality $\sigma_{\mu_{\min},\rho}'=\sigma_{\mu_{\min}}-2$. Also, by  Lemma~\ref{le:intervalSupportSecond} we can assume that $\sigma_{\mu_{\min}}-\sigma_{\mu_{\max}}=g$, with $2\le g\le3$. We have $M_{\mu_{\max}}\ge M_{\mu_{\min}} \ge \sigma_{\mu_{\min}}= g+\sigma_{\mu_{\max}}$. Hence, by Theorem~\ref{th:spirality-sets}, if $\sigma_{\mu_{\max}}$ and $M_{\mu_{\max}}$ have different parities, then $\mu_{\max}$ is jump-1, otherwise it is jump-2. In both cases, $\mu_{\max}$ admits spirality $\sigma_{\mu_{\max}}'=\sigma_{\mu_{\max}}+2$.
		We have $\sigma_{\mu_{\max}}'- \sigma_{\mu_{\min}}'=\sigma_{\mu_{\max}}+2-\sigma_{\mu_{\max}}-2=\sigma_{\mu_{\max}}-\sigma_{\mu_{\max}}=g$. Hence, by Lemma~\ref{le:spirality-P-node-2-children}, there exists a rectilinear planar representation $H'_{\nu,\rho}$ that contains $H_{\mu_{\min},\rho}'$ and $H_{\mu_{\max}\rho}'$ in this left-to-right order and such. The spirality $\sigma_\nu'$ of $H'_{\nu,\rho}$ is  $\sigma_\nu'=\sigma_{\mu_{\min}}'+2=\sigma_{\mu_{\min}}\ge \sigma_{\mu_{\max}}+2=M$.
	\end{claimproof}

	%
	When $M>4$, by Lemma~\ref{le:spirality-P-node-2-children}, we have $M_{\mu_{\min}}>2$ and $M_{\mu_{\max}}>2$. By the claim above we can restrict to consider only rectilinear planar representations of $G_{\nu,\rho}$ where $G_{\mu_{\max},\rho}$ and $G_{\mu_{\min},\rho}$ are the left and right child, respectively.   
	Also, we can restrict to consider $\alpha_u^r=\alpha_v^r=1$.
	By Lemma~\ref{le:spirality-P-node-2-children}, $M\le \min\{M_{\mu_{\max}}, M_{\mu_{\min}}+2\}$. 
	
	Suppose first that $M_{\mu_{\max}}\ge M_{\mu_{\min}}+2$, which implies $M\le M_{\mu_{\min}}+2$. We show that in fact $M = M_{\mu_{\min}}+2$, i.e., $\nu$ admits spirality $M_{\mu_{\min}}+2$. Since $M_{\mu_{\max}}\ge M_{\mu_{\min}}+2$, we have that $\mu_{\max}$ admits spirality $M_{\mu_{\min}}+2$ or $M_{\mu_{\min}}+3$. Since $M_{\mu_{\max}} \ge M_{\mu_{\min}}+2$, we have that $\mu_{\max}$ admits spirality $M_{\mu_{\min}}+2$ or $M_{\mu_{\min}}+3$; this implies that we can realize a rectilinear planar representation of $G_{\nu,\rho}$ whose restrictions to $G_{\mu_{\min},\rho}$ and to $G_{\mu_{\max},\rho}$ have spiralities $\sigma_{\mu_{\min}} = M_{\mu_{\min}}$ and $\sigma_{\mu_{\max}}\in [M_{\mu_{\min}}+2,M_{\mu_{\min}}+3]$, respectively. By Lemma~\ref{le:spirality-P-node-2-children}, if $\sigma_{\mu_{\max}} = M_{\mu_{\min}}+2$ then $G_{\nu,\rho}$ admits spirality $M_{\mu_{\min}}+2$ for $\alpha_u^l=\alpha_v^l=0$. If $\sigma_{\mu_{\max}} = M_{\mu_{\min}}+3$ then $G_{\nu,\rho}$ admits spirality $M_{\mu_{\min}}+2$ for $\alpha_u^l=1$ and $\alpha_v^l=0$.        
	
	\smallskip
	Suppose vice versa that $M_{\mu_{\max}}< M_{\mu_{\min}}+2$, which implies  $M\le M_{\mu_{\max}}$. In this case we show that either $M=M_{\mu_{\max}}$ or $M=M_{\mu_{\max}}-1$. Since $M_{\mu_{\min}} > M_{\mu_{\max}}-2$, we have that $\mu_{\min}$ admits spirality $M_{\mu_{\max}}-2$ or $M_{\mu_{\max}}-3$. If $\mu_{\min}$ admits spirality $M_{\mu_{\max}}-2$, then we can realize a rectilinear planar representation of $G_{\nu,\rho}$ whose restrictions to 
	$G_{\mu_{\min},\rho}$ and to $G_{\mu_{\max},\rho}$ have spiralities $\sigma_{\mu_{\max}} = M_{\mu_{\max}}$ and  $\sigma_{\mu_{\min}} = M_{\mu_{\max}}-2$, respectively. By Lemma~\ref{le:spirality-P-node-2-children}, this representation has spirality $M_{\mu_{\max}}$, which implies $M = M_{\mu_{\max}}$. If $\mu_{\min}$ does not admit spirality $M_{\mu_{\max}}-2$, it admits spirality $M_{\mu_{\max}}-3$ and $M\le M_{\mu_{\max}}-1$. In this case we realize a rectilinear representation of $G_{\nu,\rho}$ whose restrictions to 
	$G_{\mu_{\min},\rho}$ and to $G_{\mu_{\max},\rho}$ have spiralities $\sigma_{\mu_{\max}} = M_{\mu_{\max}}$ and  $\sigma_{\mu_{\min}} = M_{\mu_{\max}}-3$. By Lemma~\ref{le:spirality-P-node-2-children}, this representation has spirality $M_{\mu_{\max}}-1$, which implies that $M=M_{\mu_{\max}}-1$.  		
	
	Based on $M$, we finally determine the structure of $\Sigma_{\mu,\rho}^+$ in $O(1)$ time. We have that $G_\nu$ admits a rectilinear planar representation with spirality $M$ and $M-1$. Indeed, if $\nu$ has spirality $M$, then $\alpha_u^r = \alpha_v^r = 1$ and, by Lemma~\ref{le:intervalSupportSecond}, at least one of $\alpha_u^l$ and $\alpha_v^l$ equals 0, say for example $\alpha_u^l=0$. For $\alpha_u^l=1$ we get spirality $M-1$. Hence, by Theorem~\ref{th:spirality-sets}, $\nu$ is jump-1:  If $M=2$ and $G_\nu$ does not admit a representation with spirality $0$, $\Sigma_{\mu,\rho}^+=[1,2]^1$. Otherwise, $\Sigma_{\mu,\rho}^+=[0,M]^1$.
\end{proof}

\smallskip
\myparagraph{Testing algorithm.} 
Let $G$ be a \pisp different from a simple cycle, $T_\rho$ be a rooted SPQ$^*$-tree of $G$, and $\nu$ be the child of $\rho$ in $T_\rho$.  
At the level of the root the test consists of verifying whether $\nu$ and $\rho$ admit spirality values $\sigma_\nu \in \Sigma^+_{\nu,\rho}$ and $\sigma_\rho \in \Sigma^+_{\rho,\rho}$, respectively, such that $\sigma_\nu + \sigma_\rho = 4$. 

\begin{restatable}{lemma}{leTimeRoot}\label{le:timeRoot}
	Let $G$ be an \pisp, $T_\rho$ be a rooted SPQ$^*$-tree of $G$, and $\nu$ be the child of $\rho$ in $T_\rho$. If $G_{\nu,\rho}$ is rectilinear planar, one can test whether $G$ is rectilinear planar in $O(1)$ time.
\end{restatable}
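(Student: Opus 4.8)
The plan is to show that, once $\Sigma^+_{\nu,\rho}$ is available (as it is after the bottom-up traversal), the root-level test recalled just above the statement boils down to a constant number of constant-time membership queries.

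First I would observe that the two ingredients of the root test have $O(1)$-size descriptions. The graph $G_{\rho,\rho}$ is the reference chain, of some length $\ell_\rho$, and as noted in the discussion of Q$^*$-nodes we have $\Sigma^+_{\rho,\rho}=[0,\ell_\rho-1]^1$; since each Q$^*$-node is equipped with the length of its chain when $T$ is built, $\Sigma^+_{\rho,\rho}$ is obtained in $O(1)$ time. Moreover, since $G_{\nu,\rho}$ is rectilinear planar, the bottom-up computation (Lemmas~\ref{le:timeQ},~\ref{le:timeS},~\ref{le:timeP}) has already produced $\Sigma^+_{\nu,\rho}$, which by Theorem~\ref{th:spirality-sets} has one of six structures, each specified in $O(1)$ space by its type (trivial, jump-1, or jump-2), its minimum value (always $0$ or $1$), and its maximum value $M_\nu$. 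In particular, for any integer $\sigma$ one can decide whether $\sigma\in\Sigma^+_{\nu,\rho}$ in $O(1)$ time, and likewise whether $\sigma\in\Sigma^+_{\rho,\rho}$ (the latter reducing to the inequality $0\le\sigma\le\ell_\rho-1$).

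Next I would invoke the root-level criterion: $G$ admits a rectilinear planar representation with the reference chain of $T_\rho$ on the external face if and only if there exist $\sigma_\nu\in\Sigma^+_{\nu,\rho}$ and $\sigma_\rho\in\Sigma^+_{\rho,\rho}$ with $\sigma_\nu+\sigma_\rho=4$. Since both values are non-negative and sum to $4$, each of them lies in $\{0,1,2,3,4\}$. Hence the test is: for each $\sigma_\nu\in\{0,1,2,3,4\}$, check whether $\sigma_\nu\in\Sigma^+_{\nu,\rho}$ and $4-\sigma_\nu\in\Sigma^+_{\rho,\rho}$, and answer ``yes'' iff some $\sigma_\nu$ passes both checks. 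This is a constant number of $O(1)$-time checks, hence $O(1)$ time overall. (Performing this test at the root for each of the $h$ choices of reference chain $\rho$ and taking the disjunction of the outcomes then decides whether $G$ is rectilinear planar.)

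I expect the only point that deserves care—rather than an actual obstacle—to be the justification that it genuinely suffices to probe the values $\sigma_\nu\in\{0,\dots,4\}$ and the non-negative sets. This is exactly what is packaged into the stated criterion, which in turn follows from flipping the whole representation to fix the sign of $\sigma_\nu+\sigma_\rho\in\{4,-4\}$ and from Corollary~\ref{co:intervalSupport} together with the symmetry $\sigma\in\Sigma_{\rho,\rho}\Leftrightarrow-\sigma\in\Sigma_{\rho,\rho}$, which allow one to decrease an oversized spirality $\sigma_\nu>4$ while correspondingly increasing $\sigma_\rho$ so that the sum stays equal to $4$. Everything else is a routine running-time accounting enabled by the succinct description of the spirality sets provided by Theorem~\ref{th:spirality-sets}.
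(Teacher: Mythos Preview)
Your proposal is correct and follows essentially the same approach as the paper: both use the criterion $\sigma_\nu+\sigma_\rho=4$ with non-negative spiralities, observe that this forces $\sigma_\nu,\sigma_\rho\in\{0,\dots,4\}$, and then perform a constant number of constant-time membership checks against $\Sigma^+_{\nu,\rho}$ and $\Sigma^+_{\rho,\rho}=[0,\ell-1]^1$. Your writeup is slightly more explicit than the paper's about why membership queries are $O(1)$ and why restricting to non-negative values is legitimate, but the argument is the same.
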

\begin{proof}
	The child $\nu$ of the root $\rho$ in $T_\rho$ is either a P-node or an S-node with poles $u$ and $v$ of indegree two. In both cases, the alias vertices associated with $u$ and $v$ are dummy vertices that subdivide the reference chain represented by $\rho$.
	Let $\ell$ be the length of the reference chain represented by $\rho$; we know that $\Sigma^+_{\rho,\rho}=[0,\ell-1]^1$. In every cycle $C$ of a rectilinear representation of $G$, the difference between the number of right turns and the number left turns, while walking along the boundary of $C$ clockwise, is equal to four. Hence, $G$ is rectilinear planar if and only if we can find a value $\sigma_\nu \in \Sigma^+_{\nu,\rho}$ and a value $0 \leq k \leq \ell-1$ such that $\sigma_\nu + k = 4$. Hence, for any pair $\sigma_\nu \in \{0,1,2,3,4\}$ and $k \in \{0,1,2,3,4\}$ such that $\sigma_\nu + k = 4$ we just test whether  $\sigma_\nu \in \Sigma^+_{\nu,\rho}$ and $k \in \{0, \dots, \ell-1\}$. This requires a constant number of checks.
\end{proof}






\begin{restatable}{theorem}{thTimeTest}\label{th:timeTest}
	Let $G$ be an \pisp with $n$ vertices. There exists an $O(n)$-time algorithm that tests whether $G$ is rectilinear planar.
\end{restatable}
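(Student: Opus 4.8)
The plan is to combine the four composition lemmas, \cref{le:timeQ,le:timeS,le:timeP,le:timeRoot}, into a bottom-up evaluation of the SPQ$^*$-tree of $G$, and then to amortize this evaluation over all choices of the reference chain by an incremental re-rooting scheme.

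First I would compute the SPQ$^*$-tree $T$ of $G$ in $O(n)$ time, storing with each Q$^*$-node the length of its chain. Testing all Q$^*$-node roots is complete because $G$ is rectilinear planar if and only if, for \emph{some} Q$^*$-node $\rho$, $G$ has a rectilinear planar representation whose external face contains the reference chain of $\rho$: in every planar embedding all edges of a fixed chain bound the same two faces, so if one edge of a chain lies on the external face then the whole chain does. Since $T_\rho$ encodes exactly the embeddings of $G$ with the reference chain of $\rho$ on the external face, it suffices to inspect each $T_\rho$, and by \cref{le:timeRoot} each such inspection takes $O(1)$ time once $\Sigma^+_{\nu,\rho}$ is known for the root child $\nu$ (recall $\Sigma^+_{\rho,\rho}=[0,\ell-1]^1$, with $\ell$ the chain length of $\rho$).

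Next I would fix an arbitrary Q$^*$-node $\rho_1$ and compute $\Sigma^+_{\nu,\rho_1}$ for every node $\nu$ by visiting $T_{\rho_1}$ bottom-up in post-order: $O(1)$ time per Q$^*$-node by \cref{le:timeQ}, $O(1)$ time per P-node by \cref{le:timeP}, and $O(n_\nu)$ time per S-node by \cref{le:timeS}, also storing at each S-node the counters $x,y,z,M$ used in that lemma. As $\sum_\nu n_\nu = O(n)$, this first pass runs in $O(n)$ time, after which \cref{le:timeRoot} tests $\rho_1$ in $O(1)$. To handle the other roots, I would then move the root along the edges of $T$ in the order in which the Q$^*$-nodes appear along an Euler tour of $T$. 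Crossing a single tree edge $(\mu,\nu)$ flips only the parent--child relation between $\mu$ and $\nu$, so only $\Sigma^+_\mu$ and $\Sigma^+_\nu$ change, and each of $\mu,\nu$ loses or gains exactly one child; recomputing both sets costs $O(1)$ --- by \cref{le:timeP} for a P-node and by \cref{le:timeS} for an S-node, after updating its counters by the contribution of the single affected child (a Q$^*$-node, being a leaf, never needs updating). Hence re-rooting from one Q$^*$-node $\rho_i$ to the next $\rho_{i+1}$ in the tour costs $O(\mathrm{dist}_T(\rho_i,\rho_{i+1}))$, and since the Euler tour traverses each edge of $T$ a constant number of times, $\sum_i\mathrm{dist}_T(\rho_i,\rho_{i+1})=O(n)$. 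Each time the root sits at a Q$^*$-node $\rho$ we run the $O(1)$ test of \cref{le:timeRoot} (failing for $\rho$ if $\Sigma^+_{\nu,\rho}=\emptyset$), and we output ``yes'' iff some root passes.

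I expect the main obstacle to be this amortized re-rooting analysis rather than the bottom-up evaluation, which is routine given the lemmas. The delicate point is that moving the root into the pertinent graph of an S-node changes that node's poles, hence its reference edge, pertinent graph and spirality set, all at once; the resolution is that, processed one tree edge at a time, every such change is a single-child insertion or removal that \cref{le:timeS} (via the stored counters) absorbs in $O(1)$, and the Euler-tour order keeps the number of such elementary updates to $O(n)$ overall. Adding the $O(n)$ cost of constructing $T$ and the first pass, the $O(n)$ cost of all re-rootings, and the $O(h)=O(n)$ cost of all root tests then yields the claimed $O(n)$ time bound.
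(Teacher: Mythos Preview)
Your proposal is correct and matches the paper's high-level strategy---build the SPQ$^*$-tree, evaluate $\Sigma^+_{\nu,\rho}$ bottom-up for one root via \cref{le:timeQ,le:timeS,le:timeP}, then cover all remaining Q$^*$-roots in $O(n)$ total and apply \cref{le:timeRoot} at each---but the amortization over roots is organised differently. The paper re-runs a post-order visit for \emph{every} $\rho_i$ and simply reuses the stored value at any node whose parent is unchanged from some earlier $\rho_j$; the bound then follows because each of the $O(n)$ directed tree edges (node,\,parent) is realised at most once, and \cref{le:timeS} is invoked exactly as stated, relating the S-node counters for $T_{\rho_i}$ directly back to those stored for $T_{\rho_1}$. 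You instead walk the root along an Euler tour of $T$, touching only the two endpoints of each crossed edge; this yields a cleaner $O(1)$-per-step argument and sidesteps any discussion of truncated traversals, at the price of temporarily rooting at internal (non-Q$^*$) nodes and of using the counter-update idea behind \cref{le:timeS} in a slightly more general form than the lemma literally states (a single child added \emph{or} removed rather than one child swapped for another). Both mechanisms are standard and either suffices. One small omission: you should dispose of the simple-cycle case up front, as the paper does, since the SPQ$^*$-tree machinery presupposes that $G$ is not a cycle.
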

\begin{proof}
	If $G$ is a simple cycle, the test is trivial, as~$G$ is rectilinear planar if and only if it contains at least four vertices. Assume that~$G$ is not a simple cycle. Let~$T$ be the SPQ$^*$-tree of~$G$, and let
	$\rho_1, \dots, \rho_h$ be the Q$^*$-nodes of~$T$. For each $i=1, \dots, h$, the testing algorithm performs a post-order visit of~$T_{\rho_i}$. During this visit, for every non-root node $\nu$ of $T_{\rho_i}$ the algorithm computes~$\Sigma^+_{\nu,\rho_i}$ by using Lemmas~\ref{le:timeQ},~\ref{le:timeS}, and~\ref{le:timeP}. If $\Sigma^+_{\nu,\rho_i}=\emptyset$, the algorithm stops the visit, discards~$T_{\rho_i}$, and starts visiting~$T_{\rho_{i+1}}$ (if $i < h$). If the algorithm achieves the root child $\nu$ and if $\Sigma^+_{\nu,\rho_i} \neq \emptyset$, it checks whether $G$ is rectilinear planar by using Lemma~\ref{le:timeRoot}: if so, the test is positive and the algorithm does not visit the remaining trees; otherwise it discards~$T_{\rho_i}$ and starts visiting~$T_{\rho_{i+1}}$~(if~$i < h$).
	
	We now analyze the time complexity of the testing algorithm. When the algorithm visits $T_{\rho_1}$, for any $\nu$ with $n_\nu$ children it computes $\Sigma^+_{\nu,\rho_1}$ in $O(n_\nu)$ time if $\nu$ is an S-node (Lemma~\ref{le:timeS}) and in $O(1)$ time otherwise (Lemmas~\ref{le:timeQ} and~\ref{le:timeP}). Hence, since $T$ has $O(n)$ nodes and the sum of the degree of its nodes is $O(n)$, $T_{\rho_1}$ is visited in $O(n)$ time. For every other  visit of a tree $T_{\rho_i}$ $(i=2, \dots, h)$, the algorithm spends $O(1)$ to compute the non-negative spirality set of each node~$\nu$, even when~$\nu$ is an S-node (Lemma~\ref{le:timeS}). Also, the algorithm does not need to recompute $\Sigma^+_{\nu,\rho_i}$ if the parent of $\nu$ in $T_{\rho_i}$ coincides with the parent of $T_{\rho_j}$ for some $j \in \{1, \dots, i-1\}$; in fact, in this case, $\Sigma^+_{\nu,\rho_i}$ coincides with $\Sigma^+_{\nu,\rho_j}$, which was previously computed and can be simply reused. Therefore, since every node $\nu$ with $n_\nu$ children changes its parent node $n_\nu$ times over all visits of $T_{\rho_i}$ ($i=2, \dots, h$), and since the sum of the degree of the nodes of $T$ is $O(n)$, the algorithm needs to compute $O(n)$ non-negative spirality sets in total, each requiring $O(1)$ time. Thus, the testing algorithm takes $O(n)$ time.
\end{proof}

\begin{figure}[t]
	\centering
	\includegraphics[width=0.8\columnwidth,page=3]{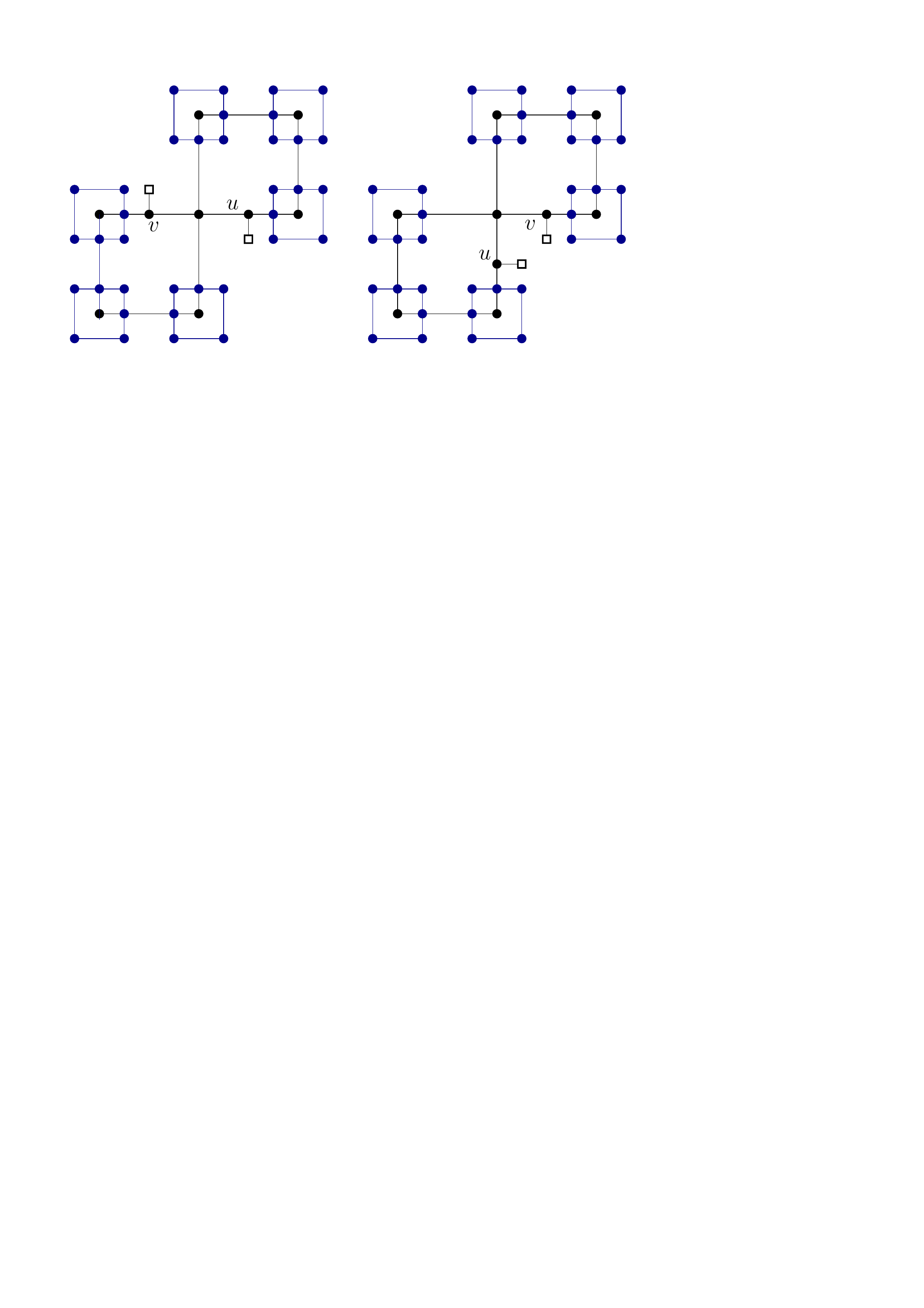}
	\caption{Component that admits spiralities 0,1,3,4,5. Spirality 2 needs a bend~($\times$).}
	\label{fi:irregular}
\end{figure}
\section{Final Remarks}\label{se:final}
In this paper we showed that rectilinear planarity for \pisps can be tested in linear time in the variable embedding setting. It is not difficult to see that if the testing is positive, a rectilinear planar representation of~$G$ can also be constructed in linear time by a variant of the technique described in~\cite{DBLP:conf/gd/Didimo0LO20}: Visit $T_\rho$ top-down and for each node $\nu$ compute a target value of spirality in $\Sigma^+_{\nu,\rho}$, based on whether $\nu$ is an S-node, a P-node, or a Q$^*$-node.

The problem about whether the result of Theorem~\ref{th:timeTest} can be extended to every SP-graph remains open. We just observe here that the spirality set of a component of an SP-graph that is not independent-parallel may not exhibit a behavior like the one described in Theorem~\ref{th:spirality-sets}. For example, the component shown in Fig.~\ref{fi:irregular} is rectilinear planar for all spirality values from 0 to 5 except 2.


\bibliography{bibliography}
\bibliographystyle{abbrvurl}

\clearpage







\end{document}